\documentclass[11pt]{article}

\usepackage{subcaption}

\usepackage[normalem]{ulem}

\usepackage{nicematrix}
\providecommand{\U}[1]{\protect\rule{.1in}{.1in}}
\setlength{\textheight}{240mm}
\setlength{\textwidth}{170mm}
\addtolength{\topmargin}{-2.5cm}
\addtolength{\oddsidemargin}{-2.4cm}

\usepackage{amsmath,amssymb,amsthm}
\usepackage{amsfonts}

\renewcommand\and{\end{tabular}\kern-\tabcolsep\ and\ \kern-\tabcolsep\begin{tabular}[t]{c}}
\let\origthanks\thanks
\renewcommand\thanks[1]{\begingroup\let\rlap\relax\origthanks{#1}\endgroup}

\usepackage{autonum}

\newcommand{\quotes}[1]{``#1''}

\newtheorem{theorem}{Theorem}
\newtheorem{corollary}{Corollary}
\newtheorem{definition}{Definition}
\newtheorem{example}{Example}
\newtheorem{lemma}{Lemma}
\newtheorem{assumption}{Assumption}
\newtheorem{problem}{Problem}

\makeatletter \def\epstopdf@sys@cmd{epstopdf} \makeatother

\begin{document}

\title{Clustering for epidemics on networks: a geometric approach}
\author{Bastian Prasse\thanks{Faculty of Electrical Engineering, Mathematics and
Computer Science, P.O Box 5031, 2600 GA Delft, The Netherlands; \emph{email}:
b.prasse@tudelft.nl, p.f.a.vanmieghem@tudelft.nl}, Karel Devriendt\thanks{Mathematical Institute, University of Oxford, Oxford UK; 
\emph{email}: devriendt@maths.ox.ac.uk; Also at Alan Turing Institute, London, UK} and Piet Van Mieghem\footnotemark[1]}
\maketitle

\begin{abstract}
Infectious diseases typically spread over a contact network with millions of individuals, whose sheer size is a tremendous challenge to analysing and controlling an epidemic outbreak. For some contact networks, it is possible to group individuals into clusters. A high-level description of the epidemic between a few clusters is considerably simpler than on an individual level. However, to cluster individuals, most studies rely on equitable partitions, a rather restrictive structural property of the contact network. In this work, we focus on Susceptible-Infected-Susceptible (SIS) epidemics, and our contribution is threefold. First, we propose a geometric approach to specify \emph{all} networks for which an epidemic outbreak simplifies to the interaction of only a few clusters. Second, for the complete graph and \textit{any} initial viral state vectors, we derive the closed-form solution of the nonlinear differential equations of the $N$-Intertwined Mean-Field Approximation (NIMFA) of the SIS process. Third, by relaxing the notion of equitable partitions, we derive low-complexity approximations and bounds for epidemics on \emph{arbitrary} contact networks. Our results are an important step towards understanding and controlling epidemics on large networks.
\end{abstract}

\section{Introduction}

Modern epidemiology encompasses a broad range of spreading phenomena \cite{pastor2015epidemic,nowzari2016analysis,kiss2017mathematics}. The majority of viruses spread through a population of tremendous size, which renders individual-based modelling impractical. However, most applications do not require to model an epidemic on individual level. Instead, a mesoscale description of the epidemic often is sufficient. For instance, suppose the outbreak of a virus is modelled on the level of neighbourhoods. Then, sophisticated lockdown measures can be deployed which constrain neighbourhoods differently, depending on the prevalence of the virus in the respective neighbourhood. The natural way to obtain a mesoscale description of the epidemic is \textit{clustering} (or grouping) of individuals, for instance, by assigning individuals with similar age or location to the same cluster. Thus, all individuals in one cluster are considered indistinguishable and exchangeable. Additionally to the complexity reduction, clustering for epidemics on networks has the advantage that, on a mesoscale description, temporal fluctuations of the individual-based contact network may average out.

We consider a contact network with $N$ nodes. Every node $i=1, ..., N$ corresponds to an individual or a group of individuals. We focus on the Susceptible-Infected-Susceptible (SIS) epidemic process in an individual-based mean-field approximation, where every node $i$ has a viral state $v_i(t)\in [0,1]$ at every time $t$. The evolution of the viral state $v_i(t)$ is governed by a set of $N$ nonlinear differential equations:

\begin{definition}[NIMFA \cite{lajmanovich1976deterministic, van2009virus, mieghem2014homogeneous}]
 For every node $i$, the viral state $v_i(t)$ evolves in continuous time $t\ge 0$ as 
\begin{align}
	\frac{d v_i (t)}{d t } & = - \delta_i v_i(t) + \left( 1 - v_i(t)\right) \sum^N_{j = 1} \beta_{i j}  v_j(t), \label{NIMFA_continuous}
\end{align}
where $\delta_i >0$ is the \emph{curing rate} of node $i$, and $\beta_{i j} > 0$ is the \emph{infection rate} from node $j$ to $i$.  
\end{definition}

If the nodes correspond to individuals, then the differential equations (\ref{NIMFA_continuous}) follow from a mean-field approximation of the stochastic SIS process \cite{van2009virus, van2011n}, and the viral state $v_i(t)$ approximates the expected value $\operatorname{E}[ X_i(t)]$ of the zero-one state $X_i(t)$ of the stochastic SIS process. For a zero-one, or Bernoulli, random variable the expectation $\operatorname{E}[ X_i(t)]$ is equal to the probability $\operatorname{Pr}[ X_i(t)=1]$ that node $i$ is infected at time $t$. In the remainder of this work, we refer to (\ref{NIMFA_continuous}) as NIMFA, which stands for \quotes{$N$-Intertwined Mean-Field Approximation} \cite{van2009virus, van2011n}. The advantage of NIMFA is that the SIS Markov chain with $2^N$ states is approximated by $N$ nonlinear differential equations. NIMFA follows from the SIS process by the approximation $\operatorname{E}[ X_i(t)X_j(t)]\approx \operatorname{E}[ X_i(t)]\operatorname{E}[ X_j(t)]$. Around the epidemic threshold, the approximation of the stochastic SIS process by NIMFA might be inaccurate \cite{van2009virus}. Furthermore, we stress that NIMFA (\ref{NIMFA_continuous}) assumes that the viral dynamics are Markovian and that the infection rates $\beta_{ij}$ do not depend on time $t$. Markovian and non-Markovian viral dynamics can be substantially different \cite{van2013non}. 

The contact network, assumed to be fixed and time-invariant, corresponds to the $N\times N$ infection rate matrix $B$, which is composed of the elements $\beta_{ij}$. We denote by $\operatorname{diag}(x)$ the $N\times N$ diagonal matrix with the vector components of $x\in \mathbb{R}^N$ on its diagonal. We denote the $N\times N$ curing rate matrix $S = \textrm{diag}(\delta_1, ..., \delta_N)$. Then, the matrix representation of NIMFA (\ref{NIMFA_continuous}) is
\begin{align}\label{NIMFA_stacked}
	\frac{d v (t)}{d t } & = - S  v(t) + \textup{\textrm{diag}}\left( u - v(t)\right) B  v(t),
\end{align}
where $v(t) = (v_1(t), ..., v_N(t))^T$ is the viral state vector at time $t$, and $u$ is the $N\times 1$ all-one vector. \textit{Homogeneous} NIMFA \cite{van2009virus} assumes the same infection rate $\beta$ and curing rate $\delta$ for all nodes,
\begin{align}
	\frac{d v (t)}{d t } & = - \delta  v(t) + \beta \textup{\textrm{diag}}\left( u - v(t)\right) A  v(t),\label{NIMFA_continuous_homogeneous}
\end{align}
where $A$ is an $N\times N$ zero-one adjacency matrix.

For NIMFA (\ref{NIMFA_continuous}), the basic reproduction number $R_0$ follows \cite{van2002reproduction} as
\begin{align}\label{R_0_NIMFA_def}
R_0 = \rho(S^{-1}B), 
\end{align}
 where $\rho(M)$ denotes the spectral radius of a square matrix $M$. Around the \textit{epidemic threshold} $R_0$, there is a bifurcation \cite{lajmanovich1976deterministic}. If $R_0\le 1$, then the all-healthy state, $v_i(t)=0$ for all nodes $i$, is the only equilibrium of NIMFA (\ref{NIMFA_stacked}), and it holds that $v(t)\rightarrow 0$ as $t\rightarrow \infty$. If $R_0>1$, then there is a second equilibrium, the \textit{steady-state} vector $v_\infty$, with positive components, and it holds that $v(t)\rightarrow v_\infty$ as $t\rightarrow \infty$, if $v(0)\neq 0$. 
 
Many papers deal with clustering of individuals into \textit{communities} \cite{clauset2008hierarchical,abbe2017community,peixoto2014hierarchical}, where individuals within the same community are densely connected, and there are only few links between individuals of different communities. Hence, communities are defined by structural properties of the contact graph. Most results are of the type: if the network has a certain mesoscale structure, then also the dynamics have some structure \cite{arenas2006synchronization,o2013observability, bonaccorsi2015epidemic}. In this work, we approach clustering from the other direction: we presume structure \textit{in the dynamics} and aim to find all contact networks that are compatible with the structured dynamics. 

The central analysis tool in our analysis is the \textit{proper orthogonal decomposition} (POD) \cite{brunton2019data} of the $N\times 1$ viral state vector $v(t)$, which is given by 
\begin{align}\label{POD_exact}
v(t) = \sum^m_{l=1} c_l(t) y_l
\end{align}
for some $m\le N$. Here, the $N\times 1$ \textit{agitation mode} vectors~$y_1, ..., y_m$ are orthonormal\footnote{A set of vectors $y_1, ..., y_m$ is orthonormal if $y^T_l y_k=0$ for $l\neq k$ and  $y^T_l y_k=1$ for $l= k$.}, and the scalar functions $c_l(t)=y^T_lv(t)$ are obtained by projecting the viral state~$v(t)$ onto the vector $y_l$. Since any $N\times 1$ vector $v(t)$ can be written as the linear combination of $N$ orthonormal vectors, the POD (\ref{POD_exact}) is exact for \textit{any} network if $m=N$. However, we are particularly interested in networks, for which the number of agitation modes $m$ is (much) smaller than the number of nodes $N$. If (\ref{POD_exact}) holds true, then the viral state vector $v(t)$ is element of the $m$ dimensional subspace
\begin{align}\label{V_invariant_span}
\mathcal{V} =\operatorname{span}\{y_1, ..., y_m\}
\end{align}
at any time $t$, where the span (the set of all linear combinations) of the vectors $y_1, ..., y_m$ is denoted by
\begin{align}
\operatorname{span}\left\{y_1, ..., y_m\right\} = \left\{ 
\sum^m_{l=1} c_l y_l \Big| c_l \in \mathbb{R}\right\}.
\end{align}
With the POD (\ref{POD_exact}), the viral state $v(t)$ can be described with less than $N$ differential equations: denote the right side of the NIMFA (\ref{NIMFA_stacked}) by $f_\text{NIMFA}\left( v(t) \right)\in\mathbb{R}^N$. Then, NIMFA (\ref{NIMFA_stacked}) reads more compactly
\begin{align}\label{oijlikjnljaergrdegaerg}
\frac{d v(t)}{dt} =  f_\text{NIMFA}\left( v(t) \right).
\end{align}
With the POD (\ref{POD_exact}), we obtain that
\begin{align} \label{jknaergdrgergregerager}
\sum^m_{l=1} \frac{d c_l(t)}{dt}  y_l = f_\text{NIMFA}\left( \sum^m_{l=1} c_l(t)  y_l \right).
\end{align}
Since the vectors $y_1, ..., y_m$ are orthonormal, we can project (\ref{jknaergdrgergregerager}) onto the agitation modes $y_l$ to obtain the differential equations
\begin{align} \label{aergtgrtedgtasgtsrht}
\frac{d c_l(t)}{dt} = y^T_l f_\text{NIMFA}\left( \sum^m_{l=1} c_l(t) y_l \right), \quad l=1, ..., m.
\end{align}
Hence, the POD (\ref{POD_exact}) reduces the number of differential equations from the number of nodes $N$ to the number of agitation modes $m$. We emphasise that the POD (\ref{POD_exact}) is a hybrid of linear and nonlinear analysis: The viral state $v(t)$ equals a linear combination of the agitation modes $y_l$, which are weighted by possibly nonlinear functions $c_l(t)$. In \cite{prasse2020predicting}, we have shown that the POD (\ref{POD_exact}) is an accurate approximation for a diverse class of dynamics on networks. In this work, we study under which conditions the POD (\ref{POD_exact}) is \textit{exact} for the NIMFA epidemic model (\ref{NIMFA_stacked}). 
 
\begin{example}\label{example:path_graph}
\begin{figure}[!ht]
    \centering
    \begin{minipage}[b]{.4\textwidth}
\centering
\subfloat[Path graph.\label{fig:chapter2_fig_1a}]{\includegraphics[width=0.15\textwidth]{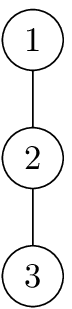} \quad \quad \quad \quad}
\end{minipage}
    \centering
    \begin{minipage}[b]{.4\textwidth}
\centering
\subfloat[Viral state space.\label{fig:chapter2_fig_1b}]{\includegraphics[width=\textwidth]{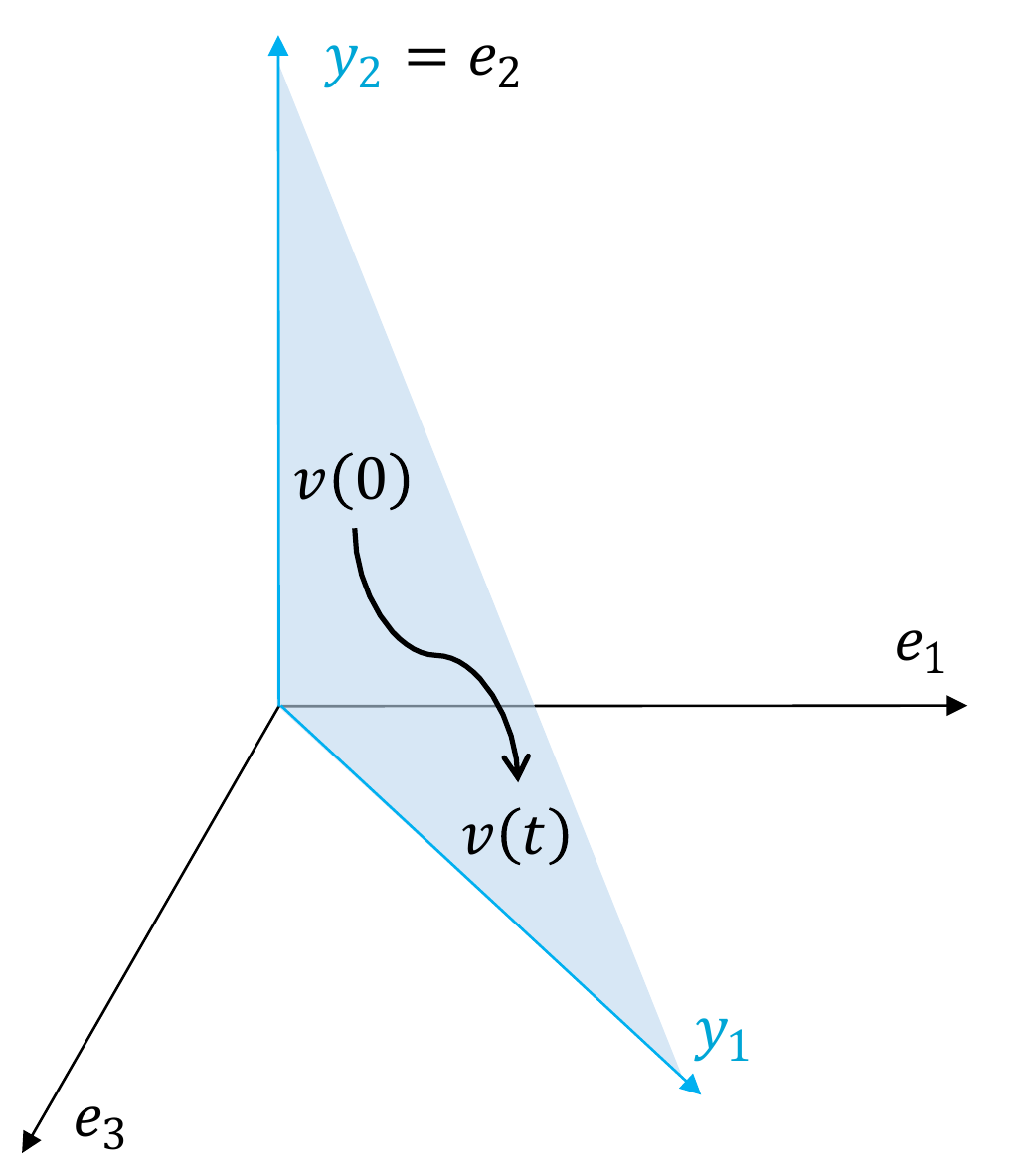}}
\end{minipage}        
    \caption{\textbf{Proper orthogonal decomposition for a path graph.} \textbf{(a)}: A path graph with $N=3$ nodes. The top, middle and bottom nodes are labelled by 1, 2 and 3, respectively. \textbf{(b)}: The black curve depicts the trajectory of the viral state $v(t)$ in the Euclidean space $\mathbb{R}^3$. The shaded area illustrates the viral state set $\mathcal{V}$, which equals the span of the vectors $y_1, y_2$, given by (\ref{yaetvfdvdf}). Provided that $v(0)\in\mathcal{V}$, the viral state $v(t)$ remains in the subspace $\mathcal{V}$ at every time $t$.}\label{fig:chapter2_fig_1}
\end{figure}
Consider homogeneous NIMFA (\ref{NIMFA_continuous_homogeneous}) on the path graph in Figure~\ref{fig:chapter2_fig_1a}, for which the viral state vector $v(t)$ evolves as
\begin{align}\label{ESFSERFascsdcv}
\frac{d v_1 (t)}{d t}  &= - \delta  v_1(t) + \beta \left( 1- v_1(t)\right) v_2(t), \\
\frac{d v_2 (t)}{d t}  &= - \delta  v_2(t) + \beta \left( 1- v_2(t)\right) \left( v_1(t) + v_3(t)\right), \\
\frac{d v_3 (t)}{d t}  &= - \delta  v_3(t) + \beta \left( 1- v_3(t)\right) v_2(t).
\end{align} 
 Suppose that the initial viral states of node 1 and 3 are equal, $v_1(0)=v_3(0)$. Then, it holds that $v_1(t)=v_3(t)$ at all times~$t$ due to the symmetry of the path graph. Hence, the viral state vector $v(t)=(v_1(t), v_2(t), v_3(t))^T$ satisfies
\begin{align} \label{eargawrfawefwef}
v(t) = c_1(t) y_1 + c_2(t) y_2,
\end{align}
where the orthonormal vectors $y_1, y_2$ are given by
\begin{align}\label{yaetvfdvdf}
y_1 = \frac{1}{\sqrt{2}}\begin{pmatrix}
1 \\
0 \\
1
\end{pmatrix}, \quad 
y_2 = \begin{pmatrix}
0 \\
1 \\
0
\end{pmatrix}.
\end{align}
As illustrated by Figure~\ref{fig:chapter2_fig_1b}, the viral state $v(t)$ remains in the $m=2$ dimensional subspace $\mathcal{V} =\operatorname{span}\{y_1, y_2\}$ at all times $t$, provided that $v(0)\in\mathcal{V}$. On the subspace $\mathcal{V}$, (\ref{aergtgrtedgtasgtsrht}) yields that the $N=3$ differential equations (\ref{ESFSERFascsdcv}) reduce to $m=2$ equations
\begin{align}
\frac{d c_1 (t)}{d t} &= -\delta c_1(t) + \sqrt{2}\beta \left( 1 - \frac{1}{\sqrt{2}}c_1(t)\right) c_2(t),\\
\frac{d c_2 (t)}{d t} &= -\delta c_2(t) + 2 \sqrt{2} \beta \left( 1 - c_2(t)\right) c_1(t),
\end{align}
from which the viral state $v(t)$ is obtained with (\ref{eargawrfawefwef}).
\end{example}

Two conditions must hold for the set~$\mathcal{V}$ to reduce NIMFA to $m$ differential equations. First, the set~$\mathcal{V}$ must be an $m$ dimensional subspace, spanned by the basis vectors $y_1, ..., y_m$. Second, if the initial viral state $v(0)$ is element of the set $\mathcal{V}$, then the viral state $v(t)$ must remain in the set $\mathcal{V}$ at every time $t>0$. Hence, the set $\mathcal{V}$ must be an \textit{invariant set} of NIMFA. Thus, we consider the geometric problem:
\begin{problem}[Clustering in NIMFA]\label{problem:clustering}
For a given number of nodes $N$ and a given number $m\le N$ of agitation modes, find \emph{all} $N\times N$ infection rate matrices~$B$ and the corresponding $N\times 1$ agitation modes $y_1, ..., y_m$, such that $\mathcal{V}=\operatorname{span}\{y_1, ..., y_m\}$ is an invariant set of NIMFA (\ref{NIMFA_stacked}).
\end{problem}

In contrast to Example~\ref{example:path_graph}, for which the agitation modes $y_1, y_2$ follow rather straightforwardly, Problem~\ref{problem:clustering} considers the interdependency of arbitrary graphs and invariant sets $\mathcal{V}$ in full generality. 

If $m<<N$, then we expect that the invariant set $\mathcal{V}$, and its basis vectors $y_l$, reflect a macroscopic structure, or a clustering, of the contact graph. For instance, the agitation mode $y_1$ in Example~\ref{example:path_graph} indicates that the viral states $v_1(t)$ and $v_3(t)$ evolve equally and nodes 1 and 3 can be assigned to the same cluster. 

Furthermore, the invariant set $\mathcal{V}$ allows for sophisticated, low-complexity control methods for the viral state $v(t)$, see \cite{nowzari2016analysis} for a survey of control methods. More specifically, consider that an affine control method is applied to NIMFA (\ref{oijlikjnljaergrdegaerg}),
\begin{align}\label{kjnbaerfgergfreg}
\frac{d v(t)}{dt} =  f_\text{NIMFA}\left( v(t) \right) + \sum^m_{l=1} g_l(t) y_l.
\end{align}
Here, the scalar function $g_l(t)$ is the control of the $l$-th agitation mode $y_l$. If the subspace $\mathcal{V}=\operatorname{span}\{y_1, ..., y_m\}$ is an invariant set of NIMFA (\ref{NIMFA_stacked}), then $\mathcal{V}$ is also an invariant set of (\ref{kjnbaerfgergfreg}). Hence, on the subspace $\mathcal{V}$, the viral state $v(t)$ can be controlled with only $m$ distinct control inputs $g_1(t), ..., g_m(t)$. If the agitation mode $y_l$ corresponds to a group of nodes, such as in Example~\ref{example:path_graph}, then the control $g_l(t)$ is applied to all nodes of that group. For instance, $g_l(t)$ could be the viral state control of individuals of a certain age group and location. 

\section{Related work}
Clustering in NIMFA is closely related to \textit{equitable partitions} \cite{schwenk1974computing, van2010graph, schaub2020hierarchical}. We denote a general partition of the node set $\mathcal{N}=\{1, ..., N\}$ by\footnote{Slightly deviating from common notation, we also refer to $\pi$ as an (equitable) partition \textit{of the infection rate matrix $B$}.} $\pi = \{\mathcal{N}_1, ..., \mathcal{N}_r\}$. Here, the \textit{cells} $\mathcal{N}_1, ..., \mathcal{N}_r$ are disjoint subsets of the node set $\mathcal{N}$, such that $\mathcal{N}=\mathcal{N}_1 \cup...\cup \mathcal{N}_r$. We adapt the definition of \textit{equitable} partitions in \cite{mugnolo2014semigroup, ottaviano2018optimal} as:
\begin{definition}[Equitable partition] \label{def:equitable_partition}
Consider a symmetric $N\times N$ infection rate matrix~$B$ and a partition $\pi = \{\mathcal{N}_1, ..., \mathcal{N}_r\}$ of the node set $\mathcal{N}=\{1, ..., N\}$. The partition $\pi$ is \emph{equitable} if, for all cells $l,p=1, ..., r$, the infection rates $\beta_{ik}$ satisfy
\begin{align}
\sum_{k \in \mathcal{N}_l} \beta_{ik}= \sum_{k \in \mathcal{N}_l} \beta_{jk} \quad \forall i,j\in\mathcal{N}_p.
\end{align}
\end{definition}
  
For an equitable partition $\pi$, we define the degree from cell $\mathcal{N}_l$ to cell $\mathcal{N}_p$ as
\begin{align}\label{khbrffdg}
d_{pl}=\sum_{k \in \mathcal{N}_l} \beta_{ik}
\end{align}
for some node $i\in\mathcal{N}_p$. Definition~\ref{def:equitable_partition} states that, for an equitable partition $\pi$, the sum of the infection rates (\ref{khbrffdg}) is the same for all nodes $i\in \mathcal{N}_p$. We denote the $r\times r$ \textit{quotient matrix} by $B^{\pi}$, whose elements are defined as $\left(B^{\pi}\right)_{pl}= d_{pl}$. Furthermore, we define the $r\times 1$ all-one vector $u_r =(1, ..., 1)^T$. 

As shown by Bonaccorsi \textit{et al.} \cite{bonaccorsi2015epidemic} and Ottaviano \textit{et al.} \cite{ottaviano2018optimal}, NIMFA (\ref{NIMFA_stacked}) can be reduced to $r$ differential equations, provided that the infection rate matrix~$B$ has an equitable partition $\pi$ with $r$ cells. For our work, we summarise the results in \cite{bonaccorsi2015epidemic, ottaviano2018optimal} as:

\begin{theorem}[\cite{bonaccorsi2015epidemic, ottaviano2018optimal}] \label{theorem:equitableOriginal}
Consider NIMFA (\ref{NIMFA_stacked}) on an $N\times N$ infection rate matrix $B$ with an equitable partition $\pi= \{\mathcal{N}_1, ..., \mathcal{N}_r\}$. Assume that $\delta_i=\delta_j$ and $v_i(0)=v_j(0)$ for all nodes $i,j$ in the same cell $\mathcal{N}_l$. Then, it holds that $v_i(t)=v_j(t)$ at every time $t>0$ for all nodes $i,j\in\mathcal{N}_l$ and all $l=1, ...,r$. Furthermore, define the $r\times 1$ \emph{reduced-size} viral state vector $v^{\pi}(t) = \left( v_{i_1}(t), ..., v_{i_r}(t) \right)^T$ and the $r\times r$ \emph{reduced-size} curing rate matrix
\begin{align}\label{definition_S_pi}
S^{\pi} = \operatorname{diag}\left( \delta_{i_1}, ..., \delta_{i_r} \right),
\end{align}
where $i_l$ denotes an arbitrary node in the cell $\mathcal{N}_l$. Then, the reduced-size viral state vector $v^{\pi}(t)$ evolves as
\begin{align}\label{kjnkjnertgverg}
	\frac{d v^{\pi}(t)}{d t } & = - S^{\pi}  v^{\pi}(t) + \textup{\textrm{diag}}\left( u_r - v^{\pi}(t)\right) B^{\pi}  v^{\pi}(t).
\end{align}
\end{theorem}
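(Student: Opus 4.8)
The plan is to exhibit the subspace of cell-constant vectors as an invariant set of NIMFA and then to identify the induced dynamics with the reduced system (\ref{kjnkjnertgverg}) by a uniqueness argument. First I would encode the partition $\pi$ through the $N\times r$ cell-indicator matrix $H$, defined by $H_{il}=1$ if node $i\in\mathcal{N}_l$ and $H_{il}=0$ otherwise. The column space of $H$ is exactly the set of vectors that are constant on every cell, and the hypotheses on $v(0)$ and on the diagonal of $S$ say precisely that both respect this cell-constant structure. The two algebraic identities I would establish at the outset are the intertwining relations $BH=HB^{\pi}$ and $SH=HS^{\pi}$. The first is a direct transcription of Definition~\ref{def:equitable_partition} together with the definition (\ref{khbrffdg}) of the quotient matrix: for $i\in\mathcal{N}_p$ one has $(BH)_{il}=\sum_{k\in\mathcal{N}_l}\beta_{ik}=d_{pl}=(HB^{\pi})_{il}$. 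The second follows at once from $\delta_i=\delta_j$ within cells.

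Next I would verify that the subspace $\mathcal{V}=\operatorname{col}(H)$ is closed under the NIMFA vector field $f_\text{NIMFA}$. The linear term $-Sv$ maps $\mathcal{V}$ into itself by the intertwining relation for $S$, and $Bv$ maps $\mathcal{V}$ into itself by the relation for $B$. The only delicate point is the nonlinear term $\operatorname{diag}(u-v)Bv$: since $u$ is cell-constant, $v\in\mathcal{V}$ is cell-constant, and $Bv\in\mathcal{V}$ is cell-constant as well, their componentwise product is again cell-constant, because each node belongs to exactly one cell and the product's value on a cell is determined by the common cell values of the factors. This is the heart of the argument and the step I expect to carry the real content: the quadratic interaction in NIMFA respects the partition precisely because the indicator matrix $H$ has a single unit entry in each row, so Hadamard products of vectors in $\mathcal{V}$ stay in $\mathcal{V}$.

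With invariance in hand, I would construct the candidate trajectory $\tilde v(t)=Hw(t)$, where $w(t)$ solves the reduced system (\ref{kjnkjnertgverg}) with $w(0)=v^{\pi}(0)$. A short computation using $BH=HB^{\pi}$, $SH=HS^{\pi}$, and the componentwise identity for the diagonal term shows that $H\tfrac{dw}{dt}=f_\text{NIMFA}(Hw)$, so $\tilde v(t)$ solves the full system (\ref{NIMFA_stacked}); moreover $\tilde v(0)=Hv^{\pi}(0)=v(0)$, since nodes in the same cell share their initial state. Because the right-hand side of (\ref{NIMFA_stacked}) is a quadratic polynomial in $v$, hence locally Lipschitz, the Picard--Lindel\"of theorem guarantees uniqueness of solutions, so $v(t)=\tilde v(t)=Hw(t)$ for all $t\ge 0$. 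Reading off components gives $v_i(t)=w_l(t)$ for every $i\in\mathcal{N}_l$, which is the claimed equality $v_i(t)=v_j(t)$ within each cell, and restricting to one representative per cell identifies $v^{\pi}(t)$ with $w(t)$, so $v^{\pi}(t)$ obeys the reduced dynamics (\ref{kjnkjnertgverg}). The main obstacle is purely the treatment of the nonlinear term; everything else is linear bookkeeping through $H$.
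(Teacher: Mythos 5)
Your argument is correct, but there is nothing in the paper to compare it against: Theorem~\ref{theorem:equitableOriginal} is imported from \cite{bonaccorsi2015epidemic, ottaviano2018optimal} and stated without proof, so your proposal supplies a proof the paper deliberately omits. The route you take is the standard one and it is sound. The intertwining relations $BH=HB^{\pi}$ and $SH=HS^{\pi}$ are exactly Definition~\ref{def:equitable_partition} plus the within-cell curing-rate hypothesis; the observation that the Hadamard product of cell-constant vectors is cell-constant (because each row of $H$ has a single unit entry) correctly handles the quadratic term; and the identification $v(t)=Hw(t)$ by Picard--Lindel\"of uniqueness closes the argument. The only point you gloss over is global (rather than merely local) existence, which is needed to assert $v(t)=Hw(t)$ for all $t\ge 0$; this is standard for NIMFA since $[0,1]^N$ and $[0,1]^r$ are positively invariant for the full and reduced systems respectively, so neither solution can blow up, but it deserves a sentence. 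It is worth noting that your key step is precisely the mechanism the paper does use elsewhere: in the proof of Theorem~\ref{theorem:decompose_dynamics} (Appendix~\ref{appendix:decompose_dynamics}), the identity (\ref{sdfgdrgegteg}), namely $\operatorname{diag}(y_l)\,y_p=0$ for $l\neq p$ and $\operatorname{diag}(y_l)\,y_l=y_l/\sqrt{|\mathcal{N}_l|}$, is the orthonormal-basis restatement of your claim that Hadamard products of vectors in $\operatorname{col}(H)$ remain in $\operatorname{col}(H)$. The difference is purely one of normalisation: you work with the unnormalised indicator matrix $H$ and obtain the quotient system (\ref{kjnkjnertgverg}) directly, whereas the paper's later arguments use the orthonormal agitation modes (\ref{y_l_definition}) and carry factors of $\sqrt{|\mathcal{N}_l|}$; your choice is the cleaner one for deriving the reduced dynamics, while the paper's is better suited to the projection arguments it needs for Theorems~\ref{theorem:invariant_sets_are_equitable_partitions} and \ref{theorem:decompose_dynamics}.
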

Remarkably, on both microscopic (\ref{NIMFA_stacked}) and macroscopic (\ref{kjnkjnertgverg}) resolutions, the viral dynamics follow the same class of governing equation. For the Markovian Susceptible-Infectious-Susceptible (SIS) process, Simon et \textit{al.} \cite{simon2011exact} proposed a lumping approach to reduce the complexity, which is an approximation and merges states of the SIS Markov chain, also see the work of Ward \textit{et al.} \cite{ward2019exact}. In \cite{devriendt2017unified}, a generalised mean-field framework for Markovian SIS epidemics has been proposed, which includes NIMFA as a special case. Beyond epidemics, analogous results to Theorem~\ref{theorem:equitableOriginal} have been proved for a diverse set of dynamics\footnote{Specifically, we believe that Theorem~\ref{theorem:equitableOriginal} can be generalised to the dynamics $\frac{d v_i(t)}{dt} = -\delta_i v_i(t) +\sum^N_{j=1} \beta_{ij} g(v_i(t), v_j(t))$, where the arbitrary function $g(v_i(t), v_j(t))$ describes the \quotes{coupling} \cite{timme2007revealing, barzel2013universality, laurence2019spectral, prasse2020predicting} between node $i$ and $j$.} on networks with equitable partitions \cite{egerstedt2012interacting, o2013observability, pecora2014cluster, schaub2016graph, devriendt2020bifurcation}. As a direct consequence of Theorem~\ref{theorem:equitableOriginal}, equitable partitions are related to the proper orthogonal decomposition (\ref{POD_exact}):
\begin{corollary}\label{corollary:equitable_are_invariant}
Consider NIMFA (\ref{NIMFA_stacked}) on an $N\times N$ infection rate matrix $B$ with an equitable partition $\pi= \{\mathcal{N}_1, ..., \mathcal{N}_r\}$. Assume that $\delta_i=\delta_j$ and $v_i(0)=v_j(0)$ for all nodes $i,j$ in the same cell $\mathcal{N}_l$. Then, the subspace $\mathcal{V}=\operatorname{span}\{y_1, ..., y_m\}$ with $m=r$ is an invariant set, where the $N\times 1$ agitation modes $y_l$ are given by
\begin{align} \label{khjbsenrggaserdtgttr}
\left( y_l\right)_i=\begin{cases}
\frac{1}{\sqrt{\left| \mathcal{N}_l \right|}} \quad &\text{if} \quad i\in\mathcal{N}_l,\\
0 &\text{if} \quad i\not\in\mathcal{N}_l,
\end{cases}
\end{align}
and the scalar functions equal $c_l(t)= \sqrt{\left| \mathcal{N}_l \right|} v^{\pi}_l(t)$.
\end{corollary}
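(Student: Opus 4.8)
The plan is to reduce the claim to Theorem~\ref{theorem:equitableOriginal} by translating the geometric statement that ``$\mathcal{V}$ is an invariant set'' into the coordinate statement that ``the viral states are equal within each cell,'' which is precisely the conclusion of that theorem. As a first, purely algebraic step I would verify that the vectors $y_1, \ldots, y_r$ defined in (\ref{khjbsenrggaserdtgttr}) are orthonormal. Since the cells $\mathcal{N}_1, \ldots, \mathcal{N}_r$ are disjoint, the supports of $y_l$ and $y_k$ do not overlap for $l \neq k$, so $y_l^T y_k = 0$; and $y_l^T y_l = \sum_{i \in \mathcal{N}_l} 1/|\mathcal{N}_l| = 1$. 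Hence $\{y_1, \ldots, y_r\}$ is an orthonormal set and $\mathcal{V} = \operatorname{span}\{y_1, \ldots, y_r\}$ is a genuine $r$-dimensional subspace, as required by the POD (\ref{POD_exact}).

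The key observation is a characterization of the subspace $\mathcal{V}$: a vector $v \in \mathbb{R}^N$ lies in $\mathcal{V}$ if and only if $v_i = v_j$ for all nodes $i, j$ belonging to the same cell. Indeed, writing $v = \sum_{l=1}^r c_l y_l$, the construction (\ref{khjbsenrggaserdtgttr}) gives $v_i = c_l / \sqrt{|\mathcal{N}_l|}$ for every $i \in \mathcal{N}_l$, so $v$ is constant on each cell; conversely, any vector that is constant on each cell is recovered by choosing $c_l = \sqrt{|\mathcal{N}_l|}\, v_i$ for $i \in \mathcal{N}_l$. Thus $\mathcal{V}$ is exactly the set of viral state vectors that are piecewise constant with respect to the partition $\pi$.

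Invariance then follows immediately. Suppose $v(0) \in \mathcal{V}$. By the characterization above, this means $v_i(0) = v_j(0)$ for all $i, j$ in the same cell, which, together with the standing assumption $\delta_i = \delta_j$ within cells, is exactly the hypothesis of Theorem~\ref{theorem:equitableOriginal}. That theorem yields $v_i(t) = v_j(t)$ for all $t > 0$ and all $i, j \in \mathcal{N}_l$, i.e. $v(t)$ remains piecewise constant and hence $v(t) \in \mathcal{V}$ at every time $t$. Therefore $\mathcal{V}$ is an invariant set of NIMFA (\ref{NIMFA_stacked}).

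Finally, to identify the scalar functions I would project the viral state onto each agitation mode, $c_l(t) = y_l^T v(t) = \frac{1}{\sqrt{|\mathcal{N}_l|}} \sum_{i \in \mathcal{N}_l} v_i(t)$. Since $v_i(t) = v^{\pi}_l(t)$ for every $i \in \mathcal{N}_l$ by Theorem~\ref{theorem:equitableOriginal}, the sum equals $|\mathcal{N}_l|\, v^{\pi}_l(t)$, giving $c_l(t) = \sqrt{|\mathcal{N}_l|}\, v^{\pi}_l(t)$, as claimed. I do not anticipate a genuine obstacle: the entire dynamical content is carried by Theorem~\ref{theorem:equitableOriginal}, and the only point requiring care is the elementary equivalence between membership in the span $\mathcal{V}$ and constancy of the viral state on the cells of $\pi$.
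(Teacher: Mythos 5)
Your proposal is correct and follows exactly the route the paper intends: the paper presents this corollary as a direct consequence of Theorem~\ref{theorem:equitableOriginal}, and your argument---identifying $\mathcal{V}$ with the vectors that are constant on each cell, invoking the theorem to preserve that constancy in time, and reading off $c_l(t)=y_l^Tv(t)$---is precisely that deduction made explicit. No gaps.
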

In other words, Corollary~\ref{corollary:equitable_are_invariant} states that every equitable partition $\pi$ yields an invariant set $\mathcal{V}$, whose dimension equals the number of cells $r$ in the partition $\pi$. Example~\ref{example:equitable_partition} illustrates Theorem~\ref{theorem:equitableOriginal} and Corollary~\ref{corollary:equitable_are_invariant}:
\begin{example}\label{example:equitable_partition}
\begin{figure}[!ht]
    \centering
    \includegraphics[width=0.7\textwidth]{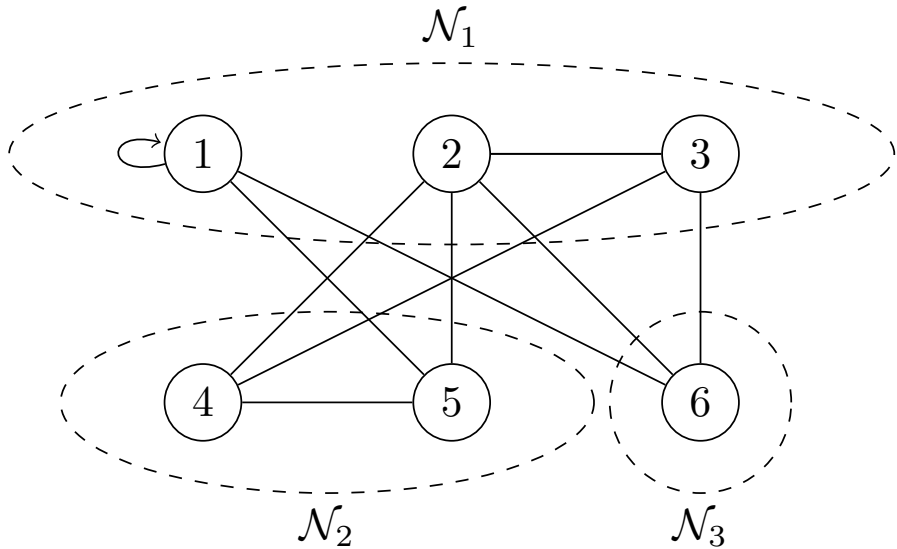}
    \caption{\textbf{Graph with a partition of the node set.} A graph with $N=6$ nodes and the partition $\pi=\{\mathcal{N}_1,\mathcal{N}_2,\mathcal{N}_3 \}$, whose cells are given by $\mathcal{N}_1 = \{1, 2, 3\}$, $\mathcal{N}_2=\{4, 5\}$ and $\mathcal{N}_3=\{6\}$. For unit link weights, i.e., $\beta_{ij}=1$ for all nodes $i,j$, the partition $\pi$ is \textit{not} equitable. If the link weights $\beta_{ij}$ satisfy (\ref{ljnagergergaaaa}), as in Example~\ref{example:equitable_partition}, then the partition $\pi$ is equitable. \label{fig:equitable_partition}}
\end{figure}    
Consider NIMFA on a graph with $N=6$ nodes, whose curing rate matrix equals $S = \operatorname{diag}\left( \tilde{\delta}_1, \tilde{\delta}_1,\tilde{\delta}_1,\tilde{\delta}_2,\tilde{\delta}_2, \tilde{\delta}_3\right)$ for some curing rates $\tilde{\delta}_1,\tilde{\delta}_2, \tilde{\delta}_3$. Furthermore, suppose that the infection rate matrix~$B$ is symmetric and given by the graph in Figure~\ref{fig:equitable_partition} as
\begin{equation}
B = \left(\begin{NiceArray}{ccc:cc:c}
\beta_{11} & 0 & 0 & 0 & \beta_{15} & \beta_{16} \\
0 & 0 & \beta_{23} & \beta_{24} & \beta_{25}& \beta_{26} \\
0 & \beta_{23} & 0 & \beta_{34}  & 0 &	\beta_{36} \\
\hdottedline
0 & \beta_{24} & \beta_{43} & 0 & \beta_{45} &0\\
\beta_{15} & \beta_{25} & 0 & \beta_{45} & 0 &	0  \\
\hdottedline
\beta_{16} & \beta_{26} & \beta_{36} & 0 & 0 & 0
\end{NiceArray}\right).
\end{equation}
Suppose that, for some degrees $d_{pl}>0$, the infection rates $\beta_{ij}$ satisfy: $\beta_{11}=\beta_{23}=d_{11}$; $\beta_{15}=\beta_{34}=d_{12}$ and $\beta_{24}=\beta_{25}=d_{12}/2$; $\beta_{16}=\beta_{26}=\beta_{36}=d_{13}$; and $\beta_{45}=d_{22}$. Then, the infection rate matrix $B$ becomes
\begin{equation}\label{ljnagergergaaaa}
B = \left(\begin{NiceArray}{ccc:cc:c}
d_{11} & 0 & 0 & 0 & d_{12} & d_{13} \\
0 & 0 & d_{11} & d_{12}/2 & d_{12}/2 & d_{13} \\
0 & d_{11} & 0 & d_{12}  & 0 &	d_{13}  \\
\hdottedline
0 & d_{12}/2 & d_{12} & 0 & d_{22} &0\\
d_{12} & d_{12}/2 & 0 & d_{22} & 0 &	0  \\
\hdottedline
d_{13} & d_{13} & d_{13} & 0 & 0 & 0
\end{NiceArray}\right).
\end{equation}
Thus, the matrix $B$ has the equitable partition $\pi=\{\mathcal{N}_1,\mathcal{N}_2,\mathcal{N}_3 \}$ with the cells $\mathcal{N}_1 = \{1, 2, 3\}$, $\mathcal{N}_2=\{4, 5\}$ and $\mathcal{N}_3=\{6\}$. The quotient matrix equals 
\begin{align}
B^{\pi} = \begin{pmatrix}
d_{11} & d_{12}  & d_{13} \\
d_{12} & d_{22}  & 0 \\
d_{13} & 0  & 0 
\end{pmatrix}.
\end{align}
 For the partition $\pi$, the reduced-size viral state can be chosen\footnote{But, for instance, $v^{\pi}(t) = \left(v_{2}(t), v_{5}(t), v_{6}(t)\right)^T$ is possible as well.} as $v^{\pi}(t) = \left(v_{1}(t), v_{4}(t), v_{6}(t)\right)^T$. Theorem~\ref{theorem:equitableOriginal} states that the vector $v^{\pi}(t) = \left(v_{1}(t), v_{4}(t), v_{6}(t)\right)^T$ evolves as
\begin{align}
	\frac{d v^{\pi}(t)}{d t } & = - S^{\pi}  v^{\pi}(t) + \textup{\textrm{diag}}\left( u_3 - v^{\pi}(t)\right)B^{\pi}  v^{\pi}(t),
\end{align}
with the $3\times 3$ reduced-size curing rate matrix $S^{\pi} = \operatorname{diag}\left( \tilde{\delta}_1,\tilde{\delta}_2, \tilde{\delta}_3 \right)$. Furthermore, Corollary~\ref{corollary:equitable_are_invariant} states that the viral state $v(t)$ has the proper orthogonal decomposition
\begin{align}
v(t) = \sqrt{3} v^{\pi}_1(t) y_1 + \sqrt{2} v^{\pi}_2(t)y_2 + v^{\pi}_3(t)y_3
\end{align}
with the agitation modes
\begin{align}
y_1 &= \frac{1}{\sqrt{3}} \begin{pmatrix}
1 & 1& 1& 0 & 0 & 0
\end{pmatrix}^T,\\
y_2 &= \frac{1}{\sqrt{2}} \begin{pmatrix}
0 & 0& 0& 1 & 1 & 0
\end{pmatrix}^T,\\
y_3 &= \begin{pmatrix}
0 & 0& 0& 0 & 0 & 1
\end{pmatrix}^T.
\end{align}
\end{example}

\section{Exact clustering}

Theorem~\ref{theorem:equitableOriginal} and Corollary~\ref{corollary:equitable_are_invariant} only give an incomplete answer to Problem~\ref{problem:clustering}: if the infection rate matrix $B$ has an equitable partition $\pi$, then there exists an invariant set $\mathcal{V}$. But are there invariant sets $\mathcal{V}$, even if the matrix $B$ does not have an equitable partition $\pi$?

We denote the \textit{orthogonal complement} of the viral state set $\mathcal{V}$ by
\begin{align}\label{def:orthogonal_viral_space_V}
\mathcal{V}^\bot = \left\{ w \in \mathbb{R}^N | w^T v = 0, \quad \forall v \in \mathcal{V}\right\}.
\end{align}
The dimension of the set $\mathcal{V}$ equals $m$. Thus, the dimension of the orthogonal complement $\mathcal{V}^\bot$ equals $N-m$. Since the orthogonal complement $\mathcal{V}^\bot$ is a subspace, there is a set of $N-m$ orthonormal basis vectors $y_{m+1}, ..., y_N$ such that
\begin{align} \label{V_bot_def}
\mathcal{V}^\bot=\operatorname{span}\{y_{m+1}, ..., y_N\}.
\end{align}
The \textit{direct sum} of two subspaces $\mathcal{S}_1, \mathcal{S}_2\subseteq \mathbb{R}^N$ is defined as the subspace 
\begin{align}\label{def:direct_sum}
\mathcal{S}_1\oplus \mathcal{S}_2 = \left\{s_1 + s_2 | s_1 \in \mathcal{S}_1, s_2 \in \mathcal{S}_2 \right\}.
\end{align}
Thus, the Euclidean space is the direct sum $\mathbb{R}^N = \textrm{span}\{\mathcal{V}\} \oplus \mathcal{V}^\bot$ of the two subspaces~$\mathcal{V}, \mathcal{V}^\bot$.

We rely on four assumptions to solve Problem~\ref{problem:clustering}. 
\begin{assumption}\label{assumption:delta_v_in_V}
For every viral state $v\in \mathcal{V}$, \textit{we require} that $\mathrm{diag}\left( \delta_1, ..., \delta_N \right) v \in \mathcal{V}$.
\end{assumption}
Suppose that the curing rates are homogeneous, i.e., $\delta_i=\delta$ for all nodes $i$. Then, Assumption~\ref{assumption:delta_v_in_V} is satisfied, since $\mathrm{diag}\left( \delta_1, ..., \delta_N \right) v = \delta v \in \mathcal{V}$ for every viral state $v\in\mathcal{V}$. More generally, Assumption~\ref{assumption:delta_v_in_V} states that the viral state set $\mathcal{V}$ is an invariant subspace of the curing rate matrix $\mathrm{diag}\left( \delta_1, ..., \delta_N \right)$. Intuitively speaking, the curing rates $\delta_1, ..., \delta_N$ are \quotes{set in accordance to} the clustering given by the viral state set $\mathcal{V}$, such as in Example~\ref{example:equitable_partition}. 

\begin{assumption}\label{assumption:v_positive}
There is a viral state $v\in\mathcal{V}$ whose entries satisfy $v_i>0$ for every node $i=1, ..., N$.
\end{assumption}
If $R_0>1$ and the matrix $B$ is irreducible, then \cite{lajmanovich1976deterministic} there is a unique steady-state $v_\infty$ with positive components $v_{\infty,i}>0$. Since every viral state $v$ converges to the steady state $v_\infty$, the steady state $v_\infty$ is element of the invariant set $\mathcal{V}$. Hence, Assumption~\ref{assumption:v_positive} is always satisfied if $R_0>1$, provided the matrix $B$ is irreducible.

\begin{assumption}\label{assumption:spreading_rates}
The curing rates are positive and the infection rates are non-negative, i.e., $\delta_i>0$ and $\beta_{ij}\ge 0$ for all nodes $i,j$.
\end{assumption}

Assumption~\ref{assumption:spreading_rates} is rather technical, since only non-negative curing rates and infection rates have a physical meaning.

\begin{assumption}\label{assumption:matrix_B_symmetric}
The infection rate matrix~$B$ is symmetric and irreducible.
\end{assumption}
Assumption~\ref{assumption:matrix_B_symmetric} holds if and only if the infection rate matrix $B$ corresponds to a connected undirected graph \cite{van2014performance}. Under Assumption~\ref{assumption:matrix_B_symmetric}, the matrix~$B$ is diagonalisable \cite{van2010graph} as
\begin{align} \label{diag_B}
B = X \Lambda X^T.
\end{align}
Here, we denote the $N\times N$ diagonal matrix $\Lambda=\operatorname{diag}(\lambda_1, ..., \lambda_N)$ whose diagonal entries are given by the real eigenvalues $\lambda_1\ge \lambda_2 \ge ...\ge \lambda_N$, and the columns of the $N\times N$ matrix $X=(x_1, ..., x_N)$ are given by the corresponding eigenvectors $x_i$. 

Lemma~\ref{lemma:invariance} states that the invariant set $\mathcal{V}$ and the orthogonal complement $\mathcal{V}^\bot$ are spanned by eigenvectors of the infection rate matrix $B$:

\begin{lemma} \label{lemma:invariance}
Suppose that Assumptions~\ref{assumption:delta_v_in_V} and \ref{assumption:matrix_B_symmetric} hold, and consider an invariant set $\mathcal{V}=\operatorname{span}\{y_1, ..., y_m\}$ of NIMFA (\ref{NIMFA_stacked}) and the orthogonal complement~$\mathcal{V}^\bot=\operatorname{span}\{y_{m+1}, ..., y_N\}$. Then, there is some permutation~$\phi:\{1, ..., N\} \rightarrow \{1, ..., N\}$, such that $\mathcal{V}=\operatorname{span}\{x_{\phi(1)}, ..., x_{\phi(m)}\}$ and $\mathcal{V}^\bot=\operatorname{span}\{x_{\phi(m+1)}, ..., x_{\phi(N)}\}$, where $x_{\phi(1)}, ..., x_{\phi(N)}$ denotes an orthonormal set of eigenvectors of the infection rate matrix~$B$ to the eigenvalues~$\lambda_{\phi(1)}, ..., \lambda_{\phi(N)}$.
\end{lemma}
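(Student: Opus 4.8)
The plan is to show that the dynamical invariance of $\mathcal{V}$ forces $\mathcal{V}$ to be an \emph{invariant subspace of the matrix} $B$, i.e. $B\mathcal{V}\subseteq\mathcal{V}$; once this is established, the conclusion follows from the spectral theorem for the symmetric matrix $B$. First I would translate the invariance of the subspace $\mathcal{V}$ into an infinitesimal (tangency) condition. Since $\mathcal{V}$ is a linear subspace and the right-hand side $f_\text{NIMFA}$ is a polynomial, hence smooth, vector field, invariance of $\mathcal{V}$ under the flow is equivalent to $f_\text{NIMFA}(v)\in\mathcal{V}$ for every $v\in\mathcal{V}$: taking the trajectory $v(t)$ with $v(0)=v\in\mathcal{V}$, we have $v(t)\in\mathcal{V}$ for all $t\ge 0$, and because $\mathcal{V}$ is closed the derivative $f_\text{NIMFA}(v)=\dot v(0)$ lies in $\mathcal{V}$ as well.

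Next I would isolate the linear part of $f_\text{NIMFA}$. Writing
\begin{align}
f_\text{NIMFA}(v) = -Sv + Bv - \operatorname{diag}(v)\,Bv
\end{align}
and using Assumption~\ref{assumption:delta_v_in_V}, which gives $Sv\in\mathcal{V}$, the tangency condition yields $h(v):=Bv-\operatorname{diag}(v)Bv\in\mathcal{V}$ for all $v\in\mathcal{V}$. Since $\mathcal{V}$ is a subspace, $\varepsilon v\in\mathcal{V}$ for every $\varepsilon\in\mathbb{R}$, so $h(\varepsilon v)=\varepsilon\,Bv-\varepsilon^2\operatorname{diag}(v)Bv\in\mathcal{V}$; dividing by $\varepsilon\neq 0$ and letting $\varepsilon\to 0$, the closedness of $\mathcal{V}$ forces $Bv\in\mathcal{V}$. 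Hence $B\mathcal{V}\subseteq\mathcal{V}$.

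Then comes symmetry. By Assumption~\ref{assumption:matrix_B_symmetric}, $B=B^T$, so $\mathcal{V}^\bot$ is $B$-invariant too: for $w\in\mathcal{V}^\bot$ and $v\in\mathcal{V}$ we have $(Bw)^Tv=w^T(Bv)=0$ because $Bv\in\mathcal{V}$, hence $Bw\in\mathcal{V}^\bot$. The restrictions of $B$ to $\mathcal{V}$ and to $\mathcal{V}^\bot$ are symmetric operators on these subspaces, so by the spectral theorem each admits an orthonormal basis of eigenvectors of $B$; concatenating these bases produces an orthonormal eigenbasis $x_1,\dots,x_N$ of $B$ whose first $m$ vectors span $\mathcal{V}$ and whose remaining $N-m$ vectors span $\mathcal{V}^\bot$. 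Matching this eigenbasis against the globally sorted eigenvalues $\lambda_1\ge\cdots\ge\lambda_N$ of \eqref{diag_B} defines the required permutation $\phi$. If an eigenvalue has multiplicity, its eigenspace $E_\lambda$ splits as $(E_\lambda\cap\mathcal{V})\oplus(E_\lambda\cap\mathcal{V}^\bot)$, so the basis can always be chosen to respect the split.

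I expect the main obstacle to be the second step, namely cleanly separating the quadratic coupling term $\operatorname{diag}(v)Bv$ from the linear term $Bv$ to conclude $B$-invariance. The homogeneity and scaling trick handles this, but one must be careful that the tangency condition is an algebraic identity on \emph{all} of $\mathcal{V}$ and not merely on the physical region $[0,1]^N$; this is legitimate because $f_\text{NIMFA}$ is polynomial, so the identity stating that the orthogonal projection of $f_\text{NIMFA}(v)$ onto $\mathcal{V}^\bot$ vanishes, being valid on a relatively open subset of $\mathcal{V}$, extends to all of $\mathcal{V}$. The remaining spectral-theorem argument is routine.
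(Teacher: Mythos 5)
Your proposal is correct and follows essentially the same route as the paper: reduce invariance to the tangency condition, use Assumption~\ref{assumption:delta_v_in_V} to discard $Sv$, exploit the homogeneity of the linear versus quadratic terms under scaling $v\mapsto\gamma v$ to conclude $Bv\in\mathcal{V}$, and then invoke symmetry of $B$ and the spectral theorem on the restrictions to $\mathcal{V}$ and $\mathcal{V}^\bot$. The only cosmetic difference is that you separate the two homogeneous components by letting $\varepsilon\to 0$, whereas the paper observes that $w^T\operatorname{diag}(v)Bv=\gamma\, w^T\operatorname{diag}(v)Bv$ for all $\gamma>0$ forces both terms to vanish; these are interchangeable, and your remark about extending the polynomial identity from a relatively open subset of $\mathcal{V}$ to all of $\mathcal{V}$ is a sound way to justify the scaling step.
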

\begin{proof}
Appendix \ref{appendix:lemma_invariance}
\end{proof}
 We denote the span of the vectors~$x_{\phi(l)}$ of the subspace~$\mathcal{V}$ which correspond to a \textit{non-zero} eigenvalue~$\lambda_{\phi(l)}\neq 0$ as $\mathcal{V}_{\neq 0} = \operatorname{span}\left\{ x_{\phi(l)} \big| l=1, ..., m, \lambda_{\phi(l)}\neq 0 \right\}$. Let the number of non-zero eigenvalues be denoted by $m_1$. Without loss of generality, we assume that, after the permutation $\phi$, the \textit{first} $m_1$ eigenvalues~$\lambda_{\phi(1)}, ..., \lambda_{\phi(m_1)}$ are non-zero. Hence, the subspace $\mathcal{V}_{\neq 0}$ equals 
\begin{align}\label{V_0_def}
\mathcal{V}_{\neq 0} = \operatorname{span}\left\{ x_{\phi(l)} \big| l=1, ..., m_1 \right\}.
\end{align}
Analogously to (\ref{V_0_def}), we define the span of the vectors~$x_{\phi(l)}$ of the subspace~$\mathcal{V}$ which correspond to a \textit{zero} eigenvalue~$\lambda_{\phi(l)}= 0$ as
\begin{align}
\mathcal{V}_0 &= \operatorname{span}\left\{ x_{\phi(l)} \big| l=1, ..., m, \lambda_{\phi(l)}= 0 \right\} \\
&=\operatorname{span}\left\{ x_{\phi(l)} \big| l=m_1+1, ..., m \right\}.
\end{align}
Thus, the subspace $\mathcal{V}$ is equal to the direct sum
\begin{align}\label{sdfasgrfasfasdfasdf}
\mathcal{V} = \mathcal{V}_{\neq 0} \oplus \mathcal{V}_0. 
\end{align}
We emphasise that $\operatorname{span}\left\{ y_1, ..., y_{m} \right\}=\operatorname{span}\left\{ x_{\phi(1)}, ..., x_{\phi(m)} \right\}$ does not imply that $y_l=x_{\phi(k)}$ for some $k,l$. An immediate consequence of Lemma~\ref{lemma:invariance} is that the infection rate matrix $B$ can be decomposed as:
\begin{lemma}\label{lemma:B_decomposition}
Suppose that Assumptions~\ref{assumption:delta_v_in_V} and \ref{assumption:matrix_B_symmetric} hold, and consider an invariant set $\mathcal{V}=\operatorname{span}\{y_1, ..., y_m\}$ of NIMFA (\ref{NIMFA_stacked}) and the orthogonal complement~$\mathcal{V}^\bot=\operatorname{span}\{y_{m+1}, ..., y_N\}$. Then, the infection rate matrix $B$ is decomposable as $B=B_{\mathcal{V}}+B_{\mathcal{V}^\bot}$, where 
\begin{align}
B_{\mathcal{V}} = \begin{pmatrix}
y_1 & ... & y_{m}
\end{pmatrix}
\tilde{B}_{\mathcal{V}}
 \begin{pmatrix}
y^T_1 \\
\vdots \\
 y^T_{m}
\end{pmatrix} \quad \text{and} \quad B_{\mathcal{V}^\bot} = \begin{pmatrix}
y_{m+1} & ... & y_{N}
\end{pmatrix}
\tilde{B}_{\mathcal{V}^\bot}
 \begin{pmatrix}
y^T_{m+1} \\
\vdots \\
 y^T_{N}
\end{pmatrix}
\end{align}
for some $m\times m$ matrix $\tilde{B}_{\mathcal{V}}$ and $(N-m)\times (N-m)$ matrix $\tilde{B}_{\mathcal{V}^\bot}$.
\end{lemma}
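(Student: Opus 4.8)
The plan is to build $B_{\mathcal{V}}$ and $B_{\mathcal{V}^\bot}$ directly from the spectral decomposition of $B$ and then re-express each piece in the bases $\{y_1,\dots,y_m\}$ and $\{y_{m+1},\dots,y_N\}$. First I would invoke Lemma~\ref{lemma:invariance}: under Assumptions~\ref{assumption:delta_v_in_V} and \ref{assumption:matrix_B_symmetric}, there is a permutation $\phi$ and an orthonormal set of eigenvectors $x_{\phi(1)},\dots,x_{\phi(N)}$ of $B$ with $\mathcal{V}=\operatorname{span}\{x_{\phi(1)},\dots,x_{\phi(m)}\}$ and $\mathcal{V}^\bot=\operatorname{span}\{x_{\phi(m+1)},\dots,x_{\phi(N)}\}$. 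Because $B$ is symmetric, its spectral decomposition $B=\sum_{l=1}^{N}\lambda_{\phi(l)}\,x_{\phi(l)}x_{\phi(l)}^{T}$ splits cleanly along these two groups, so I would set
\[
B_{\mathcal{V}}=\sum_{l=1}^{m}\lambda_{\phi(l)}\,x_{\phi(l)}x_{\phi(l)}^{T},\qquad B_{\mathcal{V}^\bot}=\sum_{l=m+1}^{N}\lambda_{\phi(l)}\,x_{\phi(l)}x_{\phi(l)}^{T},
\]
which immediately yields $B=B_{\mathcal{V}}+B_{\mathcal{V}^\bot}$.

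It then remains only to rewrite each summand in the stated factored form. I would exploit that $\{y_1,\dots,y_m\}$ and $\{x_{\phi(1)},\dots,x_{\phi(m)}\}$ are two orthonormal bases of the \emph{same} subspace $\mathcal{V}$; hence, writing $Y_1=(y_1,\dots,y_m)$, the matrix $Q=Y_1^{T}(x_{\phi(1)},\dots,x_{\phi(m)})$ is $m\times m$ orthogonal and satisfies $(x_{\phi(1)},\dots,x_{\phi(m)})=Y_1 Q$. With $\Lambda_1=\operatorname{diag}(\lambda_{\phi(1)},\dots,\lambda_{\phi(m)})$ this gives
\[
B_{\mathcal{V}}=(Y_1Q)\Lambda_1 (Y_1Q)^{T}=Y_1\,(Q\Lambda_1 Q^{T})\,Y_1^{T},
\]
so the claim holds with $\tilde{B}_{\mathcal{V}}=Q\Lambda_1 Q^{T}$. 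Repeating the same argument on $\mathcal{V}^\bot$ with $Y_2=(y_{m+1},\dots,y_N)$ produces $B_{\mathcal{V}^\bot}=Y_2\tilde{B}_{\mathcal{V}^\bot}Y_2^{T}$. As an equivalent route that avoids the change-of-basis matrix, one can note that every column of $B_{\mathcal{V}}$ lies in $\mathcal{V}$ and, by symmetry, so does every row; since $Y_1Y_1^{T}$ is the orthogonal projector onto $\mathcal{V}$, this gives $B_{\mathcal{V}}=Y_1Y_1^{T}B_{\mathcal{V}}Y_1Y_1^{T}=Y_1(Y_1^{T}B_{\mathcal{V}}Y_1)Y_1^{T}$, i.e.\ $\tilde{B}_{\mathcal{V}}=Y_1^{T}B_{\mathcal{V}}Y_1$.

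This argument is essentially a bookkeeping consequence of Lemma~\ref{lemma:invariance}, so I do not anticipate a genuine obstacle. The one point that needs care is the caveat stressed in the text: $\operatorname{span}\{y_1,\dots,y_m\}=\operatorname{span}\{x_{\phi(1)},\dots,x_{\phi(m)}\}$ does \emph{not} force $y_l=x_{\phi(l)}$. Consequently $\tilde{B}_{\mathcal{V}}=Q\Lambda_1 Q^{T}$ is in general a full, non-diagonal $m\times m$ matrix, which is precisely why the statement permits an arbitrary $\tilde{B}_{\mathcal{V}}$ rather than a diagonal one; the mismatch between the two orthonormal bases is exactly the off-diagonal freedom absorbed into the orthogonal conjugation by $Q$.
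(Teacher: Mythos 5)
Your proposal is correct and follows essentially the same route as the paper: both rest on Lemma~\ref{lemma:invariance}, split the spectral decomposition of $B$ along the eigenvectors spanning $\mathcal{V}$ and $\mathcal{V}^\bot$, and absorb the change of orthonormal basis into $\tilde{B}_{\mathcal{V}}=Q\Lambda_1Q^T$ (your $Q$ is the paper's $C_1$). The projector identity $\tilde{B}_{\mathcal{V}}=Y_1^TB_{\mathcal{V}}Y_1$ you add is a valid equivalent formulation.
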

\begin{proof}
Appendix~\ref{appendix:B_decomposition}.
\end{proof}
Lemma \ref{lemma:B_decomposition} shows that the sets $\mathcal{V}$ and $\mathcal{V}^\bot$ are invariant subspaces of the matrix $B$. In particular, the viral state dynamics on the invariant set $\mathcal{V}$ are the same for all infection rate matrices $B^{(1)},B^{(2)}$ with the same submatrix $B^{(1)}_{\mathcal{V}}=B^{(2)}_{\mathcal{V}}$ but different submatrices $B^{(1)}_{\mathcal{V}^\bot}\neq B^{(2)}_{\mathcal{V}^\bot}$.

\begin{example}
Suppose that Assumptions~\ref{assumption:delta_v_in_V} and \ref{assumption:matrix_B_symmetric} hold. For some degrees $d_{11}, d_{12}, d_{22}$ and some scalar $\xi$, consider the infection rate matrix
\begin{equation}\label{ljnagergerg}
B = \left(\begin{NiceArray}{cc:c}
d_{11} +\xi & d_{11}-\xi & d_{12}  \\
d_{11} -\xi & d_{11}+\xi & d_{12}  \\
\hdottedline
d_{12} & d_{12} & d_{22}  
\end{NiceArray}\right)
\end{equation}
with the equitable partition $\pi=\left\{\mathcal{N}_1, \mathcal{N}_2\right\}$, where $\mathcal{N}_1=\{1, 2\}$ and $\mathcal{N}_2=\{3\}$, and the quotient matrix
\begin{align}
B^{\pi}= \begin{pmatrix}
d_{11} & d_{12} \\
d_{12} & d_{22}
\end{pmatrix}.
\end{align}
Corollary~\ref{corollary:equitable_are_invariant} states that the subspace $\mathcal{V}=\operatorname{span}\{y_1, y_2\}$ is an invariant set of NIMFA (\ref{NIMFA_stacked}), where the agitation modes are equal to $y_1=\frac{1}{\sqrt{2}}(1, 1, 0)^T$ and $y_2 = (0, 0,1)^T$. The orthogonal complement follows as $\mathcal{V}^\bot=\operatorname{span}\{y_3\}$, where $y_3=\frac{1}{\sqrt{2}}(1, -1, 0)^T$. Furthermore, Lemma~\ref{lemma:B_decomposition} states that the infection rate matrix can be decomposed as $B=B_{\mathcal{V}}+B_{\mathcal{V}^\bot}$, where
\begin{align}
B_{\mathcal{V}} &= 
\begin{pmatrix}
y_1 & y_2
\end{pmatrix}
\begin{pmatrix}
2 d_{11} & \sqrt{2}d_{12} \\
\sqrt{2}d_{12} & d_{22} \\
\end{pmatrix}
\begin{pmatrix}
y^T_1 \\
y^T_2
\end{pmatrix}= \begin{pmatrix}
d_{11} & d_{11} & d_{12}  \\
d_{11} & d_{11}& d_{12}  \\
d_{12} & d_{12} & d_{22}  
\end{pmatrix}
\end{align}
and
\begin{align}
B_{\mathcal{V}^\bot} &= 2\xi y_3  y^T_3 = \begin{pmatrix}
\xi & -\xi & 0  \\
-\xi & \xi & 0  \\
0 & 0 & 0
\end{pmatrix}.
\end{align}
The eigenvectors $x_{\phi(1)}$, $x_{\phi(2)}$ are equal to a linear combination of the agitation modes $y_1$, $y_2$, and the third eigenvector equals $x_{\phi(3)}=y_3$.
\end{example}
Theorem~\ref{theorem:invariant_sets_are_equitable_partitions} states our main result:  
\begin{theorem}\label{theorem:invariant_sets_are_equitable_partitions}
Suppose that Assumptions~\ref{assumption:delta_v_in_V} to \ref{assumption:matrix_B_symmetric} hold. Then, any invariant set $\mathcal{V}=\operatorname{span}\left\{ y_1, ..., y_m\right\}$ of NIMFA (\ref{NIMFA_stacked}) is equal to the direct sum $\mathcal{V}=\mathcal{V}_{\neq 0} \oplus \mathcal{V}_0$ of two subspaces $\mathcal{V}_{\neq 0}$,$\mathcal{V}_0$. Here, the orthonormal basis vectors $y_1, ..., y_{m_1}$, where $m_1\le m$, of the subspace $\mathcal{V}_{\neq 0} =\operatorname{span}\left\{y_1, ..., y_{m_1}\right\}$ are given by
\begin{align}\label{y_l_definition}
(y_l)_i = \begin{cases}
\frac{1}{\sqrt{|\mathcal{N}_l|}}\quad &\text{if}~i\in\mathcal{N}_l, \\
0 &\text{if}~i\not\in\mathcal{N}_l,
\end{cases}
\end{align}  
for some \emph{equitable} partition $\pi=\left\{\mathcal{N}_1, ..., \mathcal{N}_{m_1}\right\}$ of the infection rate matrix $B$. If $m_1=m$, then the subspace $\mathcal{V}_0$ is empty. Otherwise, if $m_1<m$, then $\mathcal{V}_0=\operatorname{span}\left\{ x_{\phi(l)} \big| l=m_1+1, ..., m \right\}$ for some eigenvectors $x_{\phi(l)}$ of the infection rate matrix $B$ belonging to the eigenvalue 0.
\end{theorem}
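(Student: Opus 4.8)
The plan is to first reduce the invariance of $\mathcal V$ to a condition on the \emph{nonlinear} term of NIMFA. Since $\mathcal V$ is a linear (hence closed) subspace, forward invariance of $\mathcal V$ under the flow of $\frac{dv}{dt}=f_\text{NIMFA}(v)$ is equivalent to the tangency requirement $f_\text{NIMFA}(v)\in\mathcal V$ for every $v\in\mathcal V$. Writing $f_\text{NIMFA}(v)=-Sv+Bv-\operatorname{diag}(v)Bv$, Assumption~\ref{assumption:delta_v_in_V} gives $Sv\in\mathcal V$ and Lemma~\ref{lemma:invariance} (equivalently Lemma~\ref{lemma:B_decomposition}) gives $Bv\in\mathcal V$, so the entire linear part already stays in $\mathcal V$. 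Hence invariance is equivalent to the single quadratic condition that the entrywise product $\operatorname{diag}(v)Bv$ lies in $\mathcal V$ for all $v\in\mathcal V$, which I abbreviate as~$(\star)$.

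Next I would translate $(\star)$ into the eigenbasis supplied by Lemma~\ref{lemma:invariance}. Expanding $v=\sum_{l=1}^m c_l x_{\phi(l)}$ and testing $(\star)$ against an arbitrary basis vector $x_{\phi(k)}$ of $\mathcal V^\bot$ with $k>m$, the requirement $x_{\phi(k)}^T\operatorname{diag}(v)Bv=0$ becomes a quadratic form in $c$ whose coefficients are the triple overlaps $T_{kll'}:=\sum_i (x_{\phi(k)})_i(x_{\phi(l)})_i(x_{\phi(l')})_i$ weighted by the eigenvalues $\lambda_{\phi(l')}$. Forcing this quadratic form to vanish for all $c$ and symmetrising in $l,l'$ yields the key relation $(\lambda_{\phi(l)}+\lambda_{\phi(l')})\,T_{kll'}=0$ for all $l,l'\le m$ and $k>m$. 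Equivalently, whenever $\lambda_{\phi(l)}+\lambda_{\phi(l')}\neq 0$, the entrywise product $\operatorname{diag}(x_{\phi(l)})x_{\phi(l')}$ is orthogonal to $\mathcal V^\bot$ and therefore lies in $\mathcal V$. This relation is, in fact, a complete restatement of $(\star)$, and it is the technical heart of the argument.

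With this relation in hand I would argue that $\mathcal V_{\neq 0}$ is closed under the entrywise product and is thus a unital subalgebra of $\mathbb R^N$ equipped with entrywise multiplication. By Assumptions~\ref{assumption:spreading_rates}--\ref{assumption:matrix_B_symmetric} and Perron--Frobenius, $B$ has a simple positive eigenvalue with a strictly positive eigenvector $x_1$; Assumption~\ref{assumption:v_positive} forces $x_1\in\mathcal V$, since otherwise $x_1\in\mathcal V^\bot$ would contradict $x_1^Tv>0$ for the positive $v\in\mathcal V$, and hence $x_1\in\mathcal V_{\neq 0}$. A subspace closed under entrywise products that contains a strictly positive vector must contain the all-one vector $u$ and coincide with the span of the indicator vectors of a partition; consequently $\mathcal V_{\neq 0}$ equals the span of the normalised cell indicators~(\ref{y_l_definition}) of a partition $\pi=\{\mathcal N_1,\dots,\mathcal N_{m_1}\}$ covering all $N$ nodes. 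Because $\mathcal V_{\neq 0}$ is $B$-invariant by Lemma~\ref{lemma:invariance}, $B$ maps each indicator back into this span, which is exactly the statement that $\pi$ is equitable in the sense of Definition~\ref{def:equitable_partition}. Finally, the complementary summand $\mathcal V_0$ is by construction spanned by eigenvectors of $B$ belonging to the eigenvalue $0$, and $\mathcal V=\mathcal V_{\neq 0}\oplus\mathcal V_0$ by~(\ref{sdfasgrfasfasdfasdf}).

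The main obstacle I anticipate is precisely the algebra step. The key relation only delivers $\operatorname{diag}(x_{\phi(l)})x_{\phi(l')}\in\mathcal V$ for eigenvector pairs with $\lambda_{\phi(l)}+\lambda_{\phi(l')}\neq 0$; it gives no information on pairs of \emph{opposite} eigenvalues $\lambda_{\phi(l')}=-\lambda_{\phi(l)}$, which genuinely arise when the quotient graph is bipartite. Closing this gap is where Assumptions~\ref{assumption:v_positive}--\ref{assumption:matrix_B_symmetric} must do real work: I expect to exploit the Perron--Frobenius sign structure, namely that for an irreducible nonnegative $B$ a $(-\lambda_1)$-eigenvector has the form $\operatorname{diag}(\epsilon)x_1$ for some $\pm 1$ vector $\epsilon$, together with the strict positivity of $x_1$, to show that these opposite-eigenvalue products nonetheless lie in $\mathcal V_{\neq 0}$, so that entrywise closure holds on all of $\mathcal V_{\neq 0}$. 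A secondary technical point is to verify that the partition produced by the algebra has exactly $m_1$ cells, matching the dimension of $\mathcal V_{\neq 0}$, and that no part of the entrywise products leaks into the zero-eigenvalue summand $\mathcal V_0$.
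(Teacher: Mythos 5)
Your reduction of invariance to the single quadratic condition $\operatorname{diag}(v)Bv\in\mathcal{V}$ and your symmetrised relation $(\lambda_{\phi(l)}+\lambda_{\phi(l')})\,x_{\phi(k)}^T\operatorname{diag}(x_{\phi(l)})x_{\phi(l')}=0$ are correct, and the Hadamard-subalgebra route is a genuinely different strategy from the paper's. However, the proposal is incomplete at exactly the point you flag, and the repair you sketch does not close it. Perron--Frobenius constrains only the eigenvectors of the extremal eigenvalues $\pm\lambda_1$; for an interior opposite pair $\lambda_{\phi(l)}=-\lambda_{\phi(l')}$ with $|\lambda_{\phi(l)}|<\lambda_1$ there is no sign structure to exploit, so closure of $\mathcal{V}_{\neq 0}$ under entrywise products is not established, and without it the entire subalgebra-implies-partition argument collapses. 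A second, non-secondary problem: even for non-opposite pairs your relation only shows that $\operatorname{diag}(x_{\phi(l)})x_{\phi(l')}$ is orthogonal to $\mathcal{V}^\bot$, i.e.\ lies in $\mathcal{V}=\mathcal{V}_{\neq 0}\oplus\mathcal{V}_0$, not in $\mathcal{V}_{\neq 0}$. You cannot simply run the unital-subalgebra argument on $\mathcal{V}$ instead, because that would force all of $\mathcal{V}$, including $\mathcal{V}_0$, to be spanned by cell indicators, contradicting the theorem's own conclusion that $\mathcal{V}_0$ is a generic subspace of $\ker B$. So the two issues you list as anticipated obstacles are in fact the entire content of the theorem, and they remain open in your write-up.

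For comparison, the paper avoids the Hadamard algebra altogether. From the same quadratic condition it proves (Lemma~\ref{lemma:w_v_orth}) that $\operatorname{diag}(w)v\in\mathcal{V}^\bot$ for every $w\in\mathcal{V}^\bot$ and $v\in\mathcal{V}_{\neq 0}$, then exploits the fact that the last $m$ coordinates of any $w\in\mathcal{V}^\bot$ are fixed linear functions $\chi_i$ of the first $N-m$ coordinates (Lemma~\ref{lemma:w_last_entries_from_first_entries}) to extract the pointwise constraints $\chi_{ij}(v_i-v_j)=0$, which directly carve out the cells; Assumptions~\ref{assumption:v_positive} to \ref{assumption:matrix_B_symmetric} are then used in Lemma~\ref{lemma:N_m1plus1_empty} to show that every node belongs to some cell, and equitability follows from the eigendecomposition of $B$ restricted to $\operatorname{span}\{y_1,\dots,y_{m_1}\}$. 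It is worth noting that your observation about opposite eigenvalues is sharp: the paper's own Lemma~\ref{lemma:w_v_orth} passes from the vanishing of the quadratic form $z^T\Lambda_1 E^T\operatorname{diag}(w)E\,z$ for all $z$ to the vanishing of the matrix $\Lambda_1 E^T\operatorname{diag}(w)E$ itself, which strictly speaking only yields the vanishing of its symmetric part, i.e.\ your relation. So you have identified a real subtlety, but identifying it is not the same as resolving it, and as written your proof does not go through.
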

\begin{proof}
Appendix \ref{appendix:invariant_sets_are_equitable_partitions}.
\end{proof}

The Euclidean space $\mathbb{R}^N$ is always an invariant set of NIMFA. For $\mathcal{V}=\mathbb{R}^N$ and $\mathcal{V}_0=\emptyset$, the equitable partition $\pi$ in Theorem~\ref{theorem:invariant_sets_are_equitable_partitions} becomes \textit{trivial}, i.e., $\pi=\left\{\mathcal{N}_1, ..., \mathcal{N}_{N}\right\}$ with exactly one node in every cell $\mathcal{N}_l$. On the other hand, if there is an invariant set $\mathcal{V}$ of dimension $m<N$, then Theorem~\ref{theorem:invariant_sets_are_equitable_partitions} implies that the matrix $B$ is equitable with $m_1\le m$ cells.

If $\mathcal{V}_0=\emptyset$, then Theorem~\ref{theorem:invariant_sets_are_equitable_partitions} essentially reverts Corollary~\ref{corollary:equitable_are_invariant}. Thus, every equitable partition $\pi$ corresponds to an invariant set $\mathcal{V}_0$, and vice versa. \textit{In other words, the macroscopic structure of equitable partitions $\pi$ and the low-rank dynamics of invariant sets $\mathcal{V}$ are two sides of the same coin.} If $\mathcal{V}_0=\emptyset$, then the dynamics on the invariant set $\mathcal{V}=\mathcal{V}_{\neq 0}$ are given by the reduced-size NIMFA system (\ref{kjnkjnertgverg}) with $m=m_1$ equations.

If $\mathcal{V}_0\neq \emptyset$, then Theorem~\ref{theorem:invariant_sets_are_equitable_partitions} is more general than the inversion of Corollary~\ref{corollary:equitable_are_invariant}. Theorem~\ref{theorem:invariant_sets_are_equitable_partitions} states that invariant set of NIMFA is equal to the direct sum $\mathcal{V}=\mathcal{V}_{\neq 0}\oplus\mathcal{V}_0$, where the subspace $\mathcal{V}_{\neq 0}$ corresponds to an equitable partition $\pi$ of the infection rate matrix, and the subspace $\mathcal{V}_0$ is a subset of the kernel of the matrix $B$. If $\mathcal{V}_0\neq \emptyset$, then the dynamics on the invariant set $\mathcal{V}=\mathcal{V}_{\neq 0}\oplus\mathcal{V}_0$ are described by the $m>m_1$ differential equations (\ref{aergtgrtedgtasgtsrht}). 

The curing rates $\delta_i$ satisfy Assumption~\ref{assumption:delta_v_in_V} if there are some scalars $\tilde{\delta}_1, ..., \tilde{\delta}_{m_1}$ such that $\delta_i= \tilde{\delta}_l$ for all nodes $i$ in cell $\mathcal{N}_l$, where $l=1, ..., m_1$. However, Assumption~\ref{assumption:delta_v_in_V} allows for more general curing rates. With Lemma~\ref{lemma:B_decomposition} and Theorem~\ref{theorem:invariant_sets_are_equitable_partitions}, the infection rate matrix $B$ can be constructed from specifying the agitation modes $y_l$, such that $\mathcal{V}=\operatorname{span}\{y_1, ..., y_m\}$ is an invariant set of NIMFA (\ref{NIMFA_stacked}):
\begin{example}\label{example:construct_B_from_y}
Consider NIMFA (\ref{NIMFA_stacked}) on a network of $N=5$ nodes and the subspaces $\mathcal{V}_{\neq 0}=\operatorname{span}\{y_1,y_2\}$, $\mathcal{V}_0 =\operatorname{span}\{y_3\}$, where the agitation modes equal
\begin{align}
y_1&= \frac{1}{\sqrt{3}} \begin{pmatrix}
1 & 1 & 1 & 0 & 0
\end{pmatrix}^T, \\
y_2&= \frac{1}{\sqrt{2}} \begin{pmatrix}
0 & 0 & 0 & 1 & 1
\end{pmatrix}^T, \\
y_3&= \frac{1}{\sqrt{6}} \begin{pmatrix}
1 & -2 & 1 & 0 & 0
\end{pmatrix}^T. 
\end{align}
Furthermore, let $y_4, y_5$ be two vectors, with $y^T_4 y_5=0$ and $y^T_4 y_4=y^T_5 y_5=1$, that are orthogonal to the agitation modes $y_1, y_2, y_3$. With Lemma~\ref{lemma:B_decomposition}, define the infection rate matrix as
\begin{align}
B = \begin{pmatrix}
y_1 & y_2
\end{pmatrix}
\tilde{B}_{\mathcal{V}_{\neq 0}}
\begin{pmatrix}
y^T_1 \\
y^T_2
\end{pmatrix}
+\begin{pmatrix}
y_4 & y_5
\end{pmatrix}
\tilde{B}_{\mathcal{V}^\bot}
\begin{pmatrix}
y^T_4 \\
y^T_5
\end{pmatrix},
\end{align}
where the symmetric $2\times 2$ matrices $\tilde{B}_{\mathcal{V}_{\neq 0}}$, $\tilde{B}_{\mathcal{V}^\bot}$ are chosen such that the matrix $B$ is irreducible and contains only non-negative elements. Furthermore, consider the curing rate matrix $S = \operatorname{diag}(\tilde{\delta}_1, \tilde{\delta}_2, \tilde{\delta}_1, \tilde{\delta}_3, \tilde{\delta}_3  )$ for some curing rates $\tilde{\delta}_1, \tilde{\delta}_2, \tilde{\delta}_3>0$. Then, Assumptions~\ref{assumption:delta_v_in_V} to \ref{assumption:matrix_B_symmetric} are satisfied, and Theorem~\ref{theorem:invariant_sets_are_equitable_partitions} states that the subspace $\mathcal{V}=\mathcal{V}_{\neq 0}\oplus\mathcal{V}_0$ is an invariant set of NIMFA (\ref{NIMFA_stacked}). (An alternative choice for the curing rate matrix is $S = \operatorname{diag}(\tilde{\delta}_1, \tilde{\delta}_1, \tilde{\delta}_1, \tilde{\delta}_2, \tilde{\delta}_2 )$, which also satisfies Assumption~\ref{assumption:delta_v_in_V}.)
\end{example}

In \cite{prasse2019time}, we derived the solution of the NIMFA model (\ref{NIMFA_stacked}) around the epidemic threshold $R_0=1$. More precisely, under mild assumptions, we derived the approximation $v_\text{apx}(t)=c(t)v_\infty$ with an explicit, closed-form expression for the scalar function $c(t)$. If the initial viral state satisfies $\lVert v(0)\rVert_2\le \tilde{\sigma} (R_0-1)^2$ for some constant $\tilde{\sigma}$ as $R_0\downarrow 1$, then it holds that $\lVert v(t) - v_\text{apx}(t)\rVert_2 \le \sigma (R_0-1)^2$ at every time $t$ for some constant $\sigma$ as $R_0\downarrow 1$. Hence, the viral state $v(t)$ converges to the approximation $v_\text{apx}(t)$ uniformly in time $t$. Remarkably, since $v_\text{apx}=c(t)v_\infty$, the viral state $v(t)$ lies in the one-dimensional subspace $\mathcal{V}=\operatorname{span}\{v_\infty\}$ when $R_0\downarrow 1$, for an arbitrarily large and heterogeneous contact network. Figure~\ref{fig:sfwrfetgetg} illustrates the uniform convergence result in \cite[Theorem 3]{prasse2019time}.

\begin{figure}[!ht]
    \centering 
\includegraphics[width=0.6\textwidth]{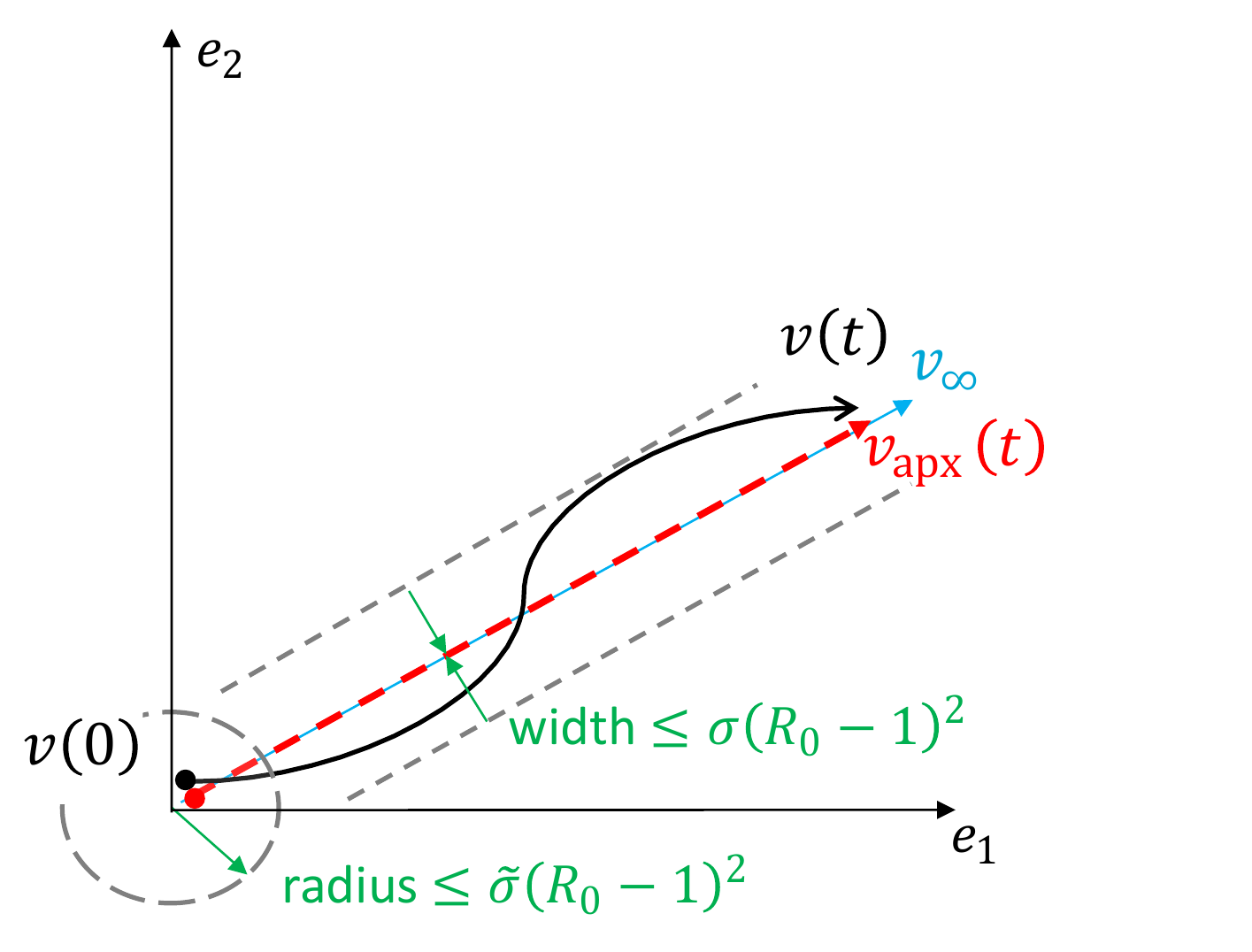} 
\caption{\textbf{Viral dynamics around the epidemic threshold $R_0=1$.} An illustration of the uniform convergence result in \cite[Theorem 3]{prasse2019time} for a network with $N=2$ nodes. The black curve shows the trajectory of the $2\times 1$ viral state vector $v(t)$ as time $t$ evolves. The blue line shows the steady state $v_\infty$. The red curve depicts the trajectory closed-form approximation $v_\text{apx}(t)=c(t)v_\infty$, which is in the subspace $\operatorname{span}\{v_\infty\}$ at every time $t$. If the initial viral state $v(0)$ is positive and in the disk of radius $\tilde{\sigma} \left( R_0 -1 \right)^2$ for some constant $\tilde{\sigma}$, then the approximation error $\lVert v(t) - v_\text{apx}(t)\rVert_2$ is bounded by $\sigma \left( R_0 -1 \right)^2$ for some constant $\sigma$ at every time $t$ as $R_0\downarrow 1$.}\label{fig:sfwrfetgetg}
\end{figure}

As illustrated by Figure~\ref{fig:sfwrfetgetg}, the viral state $v(t)$ converges to the one-dimensional dynamics $v_\text{apx}(t)$ as $R_0\downarrow 1$. \textit{Are there networks for which the approximation $v_\textup{apx}(t)$ is exact, for any basic reproduction number $R_0>1$?} The infection rate matrix $B$ is \textit{regular} if 
\begin{align} \label{asdafsdfseddasss}
\sum^N_{k=1} \beta_{ik} =\sum^N_{k=1} \beta_{jk}
\end{align}
for all nodes $i,j$. From Theorem~\ref{theorem:invariant_sets_are_equitable_partitions}, we obtain:

\begin{corollary}\label{corollary:one_dimensional_solution}
Suppose that Assumptions~\ref{assumption:delta_v_in_V} to \ref{assumption:matrix_B_symmetric} hold and consider that $R_0>1$. Then, there is an $m=1$ dimensional invariant set $\mathcal{V}=\operatorname{span}\{y_1\}$ of NIMFA (\ref{NIMFA_stacked}) if and only if $\mathcal{V}_0=\emptyset$, the agitation mode equals either $y_1=v_\infty/\lVert v_\infty\rVert_2$ or $y_1=-v_\infty/\lVert v_\infty\rVert_2$ and the infection rate matrix $B$ is regular. Furthermore, the approximation $v_\textup{apx}(t)=c(t)v_\infty$ is exact if and only if the matrix $B$ is regular and $v(0)=c(0)v_\infty$ for some scalar $c(0)$.
\end{corollary}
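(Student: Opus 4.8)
The plan is to read the corollary off Theorem~\ref{theorem:invariant_sets_are_equitable_partitions} specialised to $m=1$, combined with the Perron--Frobenius theorem for the symmetric, irreducible, entrywise non-negative matrix $B$ (Assumptions~\ref{assumption:spreading_rates} and~\ref{assumption:matrix_B_symmetric}). The observation used throughout is that, since $R_0>1$ and $B$ is irreducible, Assumption~\ref{assumption:v_positive} places the strictly positive steady state $v_\infty$ inside $\mathcal{V}$; for a one-dimensional $\mathcal{V}=\operatorname{span}\{y_1\}$ this forces $y_1=\pm v_\infty/\lVert v_\infty\rVert_2$, so $y_1$ is sign-definite with no zero entry.

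I would prove the first equivalence as follows. For the forward direction, assume $\mathcal{V}=\operatorname{span}\{y_1\}$ is invariant. Theorem~\ref{theorem:invariant_sets_are_equitable_partitions} gives $\mathcal{V}=\mathcal{V}_{\neq 0}\oplus\mathcal{V}_0$; as $\dim\mathcal{V}=1$, either $m_1=1$ with $\mathcal{V}_0=\emptyset$, or $m_1=0$ with $\mathcal{V}=\mathcal{V}_0$. The second case makes $y_1$ an eigenvector of $B$ for the eigenvalue $0$, which is impossible because Perron--Frobenius guarantees that the only sign-definite eigenvector of an irreducible non-negative symmetric $B$ belongs to $\rho(B)>0$. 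Hence $\mathcal{V}_0=\emptyset$ and, by the cell-indicator form (\ref{y_l_definition}), $y_1$ is the normalised indicator of a single cell of an equitable partition; since no entry of $y_1$ vanishes, that cell is all of $\{1,\dots,N\}$, and a one-cell equitable partition is exactly the regularity condition (\ref{asdafsdfseddasss}) by Definition~\ref{def:equitable_partition}. For the converse, I assume the three stated conditions. Applying Assumption~\ref{assumption:delta_v_in_V} to the positive vector $v_\infty\in\operatorname{span}\{y_1\}$ gives $\operatorname{diag}(\delta_1,\dots,\delta_N)v_\infty\in\operatorname{span}\{v_\infty\}$, and because $v_\infty$ has strictly positive entries this forces all curing rates equal; regularity then gives $Bu=\rho(B)u$ and the homogeneous steady state $v_\infty=(1-1/R_0)u$, so $\operatorname{span}\{y_1\}=\operatorname{span}\{u\}$. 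The one-cell partition is equitable with homogeneous curing, so Corollary~\ref{corollary:equitable_are_invariant} certifies that $\operatorname{span}\{y_1\}$ is invariant.

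For the second statement I read ``$v_\text{apx}(t)=c(t)v_\infty$ is exact'' as $v(t)=c(t)v_\infty$ for all $t$. If $B$ is regular and $v(0)=c(0)v_\infty$, then the first part makes $\operatorname{span}\{v_\infty\}$ invariant, so $v(t)=\gamma(t)v_\infty$ stays on the line; substituting into NIMFA (\ref{NIMFA_stacked}) reduces the system exactly to the scalar logistic equation whose closed-form solution defines $c(t)$ in \cite{prasse2019time}, and uniqueness gives $\gamma=c$, hence exactness. Conversely, suppose $v(t)=c(t)v_\infty$ for all $t$; setting $t=0$ gives $v(0)=c(0)v_\infty$. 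Substituting the ansatz into (\ref{NIMFA_stacked}) produces a vector identity whose right-hand side is a quadratic polynomial in the scalar $c(t)$, so once $c(t)$ is non-constant it sweeps an interval of values and the coefficient vectors of $c$ and $c^2$ must each lie in $\operatorname{span}\{v_\infty\}$. The $c^2$ coefficient gives $\operatorname{diag}(v_\infty)Bv_\infty\in\operatorname{span}\{v_\infty\}$; dividing entrywise by the strictly positive entries of $v_\infty$ shows $Bv_\infty$ is a constant multiple of $u$, which together with the Perron identity $Bv_\infty=\rho(B)v_\infty$ forces $v_\infty\propto u$ and hence $Bu\propto u$, i.e.\ $B$ is regular.

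The step I expect to be the main obstacle is this last implication: converting the pointwise identity $v(t)=c(t)v_\infty$ into the algebraic statement that $B$ is regular. The clean device is to treat the substituted equation as a polynomial identity in $c$, but this is justified only after checking that $c(t)$ genuinely varies over an interval; I would therefore isolate the degenerate equilibrium start $v(0)=v_\infty$ separately and run the coefficient-matching argument on the non-constant trajectory. A secondary point to pin down is that the explicit $c(t)$ of \cite{prasse2019time} coincides with the exact scalar reduction in the regular case, so that ``exact'' can be taken as literal equality of trajectories rather than asymptotic agreement near $R_0=1$.
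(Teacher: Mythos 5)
Your treatment of the first equivalence follows essentially the same route as the paper: apply Theorem~\ref{theorem:invariant_sets_are_equitable_partitions} with $m=1$, use $R_0>1$ to place the positive steady state $v_\infty$ in $\mathcal{V}$ and hence pin down $y_1=\pm v_\infty/\lVert v_\infty\rVert_2$, rule out $\mathcal{V}=\mathcal{V}_0$ because a positive vector cannot lie in the kernel of an irreducible non-negative $B$ (the paper argues $By_1\neq 0$ directly; your Perron--Frobenius phrasing is equivalent), and identify a one-cell equitable partition with regularity. Your converse is actually a little more self-contained than the paper's: you extract homogeneity of the curing rates from Assumption~\ref{assumption:delta_v_in_V} explicitly and then invoke Corollary~\ref{corollary:equitable_are_invariant}, whereas the paper leans on the cited closed-form solution of \cite{van2014sis,prasse2019time}. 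That part is fine.

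There is, however, one step that is wrong as written: in the \quotes{only if} direction of the second statement you invoke \quotes{the Perron identity $Bv_\infty=\rho(B)v_\infty$}. The NIMFA steady state is \emph{not} an eigenvector of $B$; it satisfies the nonlinear equation $\operatorname{diag}(u-v_\infty)Bv_\infty=Sv_\infty$, and only in the regular homogeneous case (which is exactly what you are trying to prove) does $v_\infty$ become proportional to the Perron vector $u$. So the chain \quotes{$Bv_\infty=\mu u$ plus $Bv_\infty=\rho(B)v_\infty$ forces $v_\infty\propto u$} does not go through. The repair is short and stays within your own framework: your coefficient matching shows that \emph{both} vectors $(B-S)v_\infty$ and $\operatorname{diag}(v_\infty)Bv_\infty$ lie in $\operatorname{span}\{v_\infty\}$, and this already implies that the right-hand side of NIMFA maps every point $\gamma v_\infty$ of the line into $\operatorname{span}\{v_\infty\}$, i.e.\ $\operatorname{span}\{v_\infty\}$ is a one-dimensional invariant set; regularity of $B$ then follows from the first equivalence, with no eigenvector identity needed. (Alternatively, combine $Bv_\infty=\mu u$ with Assumption~\ref{assumption:delta_v_in_V}, which gives $Sv_\infty\in\operatorname{span}\{v_\infty\}$ and hence $u\in\operatorname{span}\{v_\infty\}$ from the linear-coefficient condition.) Your caveats about the degenerate starts $v(0)=0$ and $v(0)=v_\infty$, where $c(t)$ is constant and the polynomial-identity argument degenerates, are well taken — the paper glosses over this point as well.
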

\begin{proof}
Appendix~\ref{appendix:one_dimensional_solution}
\end{proof}

\subsection{Decomposition of the viral dynamics}

Suppose the infection rate matrix $B$ has an equitable partition $\pi$ and the infection rates $\beta_{ij}$ are the same between all nodes $i,j$ in any two cells\footnote{If the matrix $B$ is decomposable as $B=B_{\mathcal{V}}+B_{\mathcal{V}^\bot}$ as in Lemma~\ref{lemma:B_decomposition}, then the infection rates $\beta_{ij}$ are the same between all nodes $i,j$ in any two cells if and only if $B_{\mathcal{V}^\bot}=0$.}. Then, we can decompose the dynamics of the viral state $v(t)$ as:
\begin{theorem}\label{theorem:decompose_dynamics}
Consider NIMFA (\ref{NIMFA_stacked}) on a symmetric $N\times N$ infection rate matrix $B$ with an equitable partition $\pi= \{\mathcal{N}_1, ..., \mathcal{N}_r\}$. Furthermore, suppose that the curing rates $\delta_i$ are the same for all nodes $i$ in any cell $\mathcal{N}_l$, and that the infection rates $\beta_{ij}$ are the same for all nodes $i$ in any cell $\mathcal{N}_l$ and all nodes $j$ in any cell $\mathcal{N}_p$. Denote the subspace $\mathcal{V}_{\neq 0}=\operatorname{span}\{y_1, ..., y_r\}$, with the basis vectors $y_l$ defined in (\ref{y_l_definition}), and denote the kernel of the matrix $B$ by $\operatorname{ker}(B)=\operatorname{span}\{y_{r+1}, ..., y_N\}$. At every time $t\ge 0$, consider the viral state decomposition 
\begin{align}\label{khbjkhbjkrergregrgreg}
v(t) = \tilde{v}(t) + v_\textup{ker}(t),
\end{align}
where the projection of the viral state $v(t)$ on the subspace $\mathcal{V}_{\neq 0}$ equals
\begin{align}
\tilde{v}(t) = \sum^{r}_{l=1} \left( y^T_l v(t) \right) y_l,
\end{align}
and the projection of the viral state $v(t)$ on the kernel $\operatorname{ker}(B)$ equals
\begin{align}
v_\textup{ker}(t) = \sum^{N}_{l=r+1} \left( y^T_l v(t) \right) y_l.
\end{align}
Furthermore, denote the $r\times 1$ reduced-size projection $\tilde{v}^{\pi}(t)= \left(\tilde{v}^{\pi}_{i_1}(t), ..., \tilde{v}^{\pi}_{i_r}(t)\right)^T$, where $i_l$ denotes an arbitrary node in cell $\mathcal{N}_l$. Then, the reduced-size projection $\tilde{v}^{\pi}(t)$ evolves, independently of the projection $v_\textup{ker}(t)$, as
\begin{align}\label{kjnksdfsdfsdf}
	\frac{d \tilde{v}^{\pi}(t)}{d t } & = - S^{\pi} \tilde{v}^{\pi}(t) + \textup{\textrm{diag}}\left( u_r - \tilde{v}^{\pi}(t)\right) B^{\pi}  \tilde{v}^{\pi}(t)
\end{align}
with the quotient matrix $B^{\pi}$ and the matrix $S^{\pi}$ given by (\ref{definition_S_pi}), and the projection $v_\textup{ker}(t)$ obeys
\begin{align}\label{sadfsdgadfdddaaa}
\frac{d v_\textup{ker}(t)}{dt} = - \left( S+ \operatorname{diag}\left( B \tilde{v}(t)\right) \right)v_\textup{ker}(t).
\end{align}
\end{theorem}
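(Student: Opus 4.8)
The plan is to project the NIMFA vector field onto the two orthogonal subspaces $\mathcal{V}_{\neq 0}$ and $\operatorname{ker}(B)$ and to show that, thanks to the block structure forced by the equitable partition, the projected equations decouple in the asserted way. First I would rewrite NIMFA (\ref{NIMFA_stacked}) as $\frac{dv}{dt} = -Sv + Bv - \operatorname{diag}(v)Bv$ and record the elementary identity $\operatorname{diag}(v)Bv = \operatorname{diag}(Bv)v$, so that the nonlinearity is a diagonal rescaling of $v$ by the vector $Bv$. Writing $P = \sum_{l=1}^{r} y_l y_l^T$ for the orthogonal projection onto $\mathcal{V}_{\neq 0}$ and $Q = I - P$ for the projection onto $\operatorname{ker}(B)$, the decomposition (\ref{khbjkhbjkrergregrgreg}) reads $\tilde v(t) = Pv(t)$ and $v_{\textup{ker}}(t) = Qv(t)$, and differentiating gives $\frac{d\tilde v}{dt} = P f_{\text{NIMFA}}(v)$ and $\frac{dv_{\textup{ker}}}{dt} = Q f_{\text{NIMFA}}(v)$.

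The heart of the argument is three invariance facts that let me sort each term of $f_{\text{NIMFA}}(v)$ into one of the two subspaces. (i) Because the infection rates are equal between any two cells, Lemma~\ref{lemma:B_decomposition} (with $B_{\mathcal{V}^\bot}=0$) gives $B = B_{\mathcal{V}_{\neq 0}}$, so $B$ annihilates $\operatorname{ker}(B)$ and has range inside $\mathcal{V}_{\neq 0}$; hence $Bv = B\tilde v$ and $B\tilde v \in \mathcal{V}_{\neq 0}$, i.e. $B\tilde v$ is constant on every cell. (ii) Since the curing rates are constant on cells, each basis vector $y_l$ in (\ref{y_l_definition}) satisfies $S y_l = \tilde\delta_l y_l$, so $S$ preserves $\mathcal{V}_{\neq 0}$; being symmetric, $S$ then also preserves the orthogonal complement $\operatorname{ker}(B)$ and therefore commutes with $P$ and $Q$. (iii) For the cell-constant vector $w = B\tilde v$, the diagonal matrix $\operatorname{diag}(w)$ maps any cell-constant vector to a cell-constant vector, so it preserves $\mathcal{V}_{\neq 0}$; again by symmetry it preserves $\operatorname{ker}(B)$ as well.

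With (i)--(iii) in hand, applying $P$ and $Q$ to $f_{\text{NIMFA}}(v) = -Sv + B\tilde v - \operatorname{diag}(B\tilde v)(\tilde v + v_{\textup{ker}})$ is routine: the terms $B\tilde v$ and $\operatorname{diag}(B\tilde v)\tilde v$ lie in $\mathcal{V}_{\neq 0}$ and are killed by $Q$, while $\operatorname{diag}(B\tilde v)v_{\textup{ker}}$ lies in $\operatorname{ker}(B)$ and is killed by $P$. This yields $\frac{d\tilde v}{dt} = -S\tilde v + \operatorname{diag}(u - \tilde v)B\tilde v$, a self-contained NIMFA system for $\tilde v$ that does not see $v_{\textup{ker}}$, and $\frac{dv_{\textup{ker}}}{dt} = -\left(S + \operatorname{diag}(B\tilde v)\right)v_{\textup{ker}}$, which is exactly (\ref{sadfsdgadfdddaaa}). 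Finally, since $\tilde v$ is constant on each cell, I would extract one representative coordinate per cell and use the equitable-partition identity (\ref{khbrffdg}) exactly as in Theorem~\ref{theorem:equitableOriginal} to collapse the $N$-dimensional equation for $\tilde v$ into the $r$-dimensional reduced system (\ref{kjnksdfsdfsdf}) for $\tilde v^{\pi}$.

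I expect the only delicate point to be justifying that the nonlinear term splits cleanly across the two subspaces, namely step (iii): one must first observe that $B\tilde v$ is cell-constant (which rests on $B = B_{\mathcal{V}_{\neq 0}}$) and then invoke the fact that a symmetric operator preserving a subspace also preserves its orthogonal complement. Everything else is bookkeeping of orthogonal projections together with the identity $\operatorname{diag}(v)Bv = \operatorname{diag}(Bv)v$.
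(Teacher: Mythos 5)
Your proposal is correct and follows essentially the same route as the paper's proof in Appendix~\ref{appendix:decompose_dynamics}: decompose $v(t)$ along $\mathbb{R}^N=\mathcal{V}_{\neq 0}\oplus\operatorname{ker}(B)$, exploit $Bv_{\textup{ker}}(t)=0$, and sort the resulting terms of the vector field into the two invariant subspaces (the content of Lemma~\ref{lemma:first_addend_orth} and Lemma~\ref{lemma:second_addend_orth}). The only difference is in how the sorting is verified: where the paper computes explicitly with the basis vectors $y_l$ (via $\operatorname{diag}(y_l)y_p=0$ for $l\neq p$ and $y_l^2=y_l/\sqrt{|\mathcal{N}_l|}$), you obtain the same facts more abstractly from the cell-constancy of vectors in $\mathcal{V}_{\neq 0}$ together with the observation that a symmetric operator preserving a subspace also preserves its orthogonal complement, which is a clean and fully valid shortcut.
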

\begin{proof}
Appendix~\ref{appendix:decompose_dynamics}.
\end{proof}

In Theorem~\ref{theorem:decompose_dynamics}, the set $\mathcal{V}_0$ is equal to the kernel $\operatorname{ker}(B)$, which is equivalent to $\mathcal{V}^\bot=\emptyset$ and assuming the same infection rates $\beta_{ij}$ between all nodes $i,j$ in any two cells. In contrast to Theorem~\ref{theorem:equitableOriginal}, we do not consider that the initial state satisfies $v_i(0)=v_j(0)$ for all nodes $i,j$ in the same cell $\mathcal{N}_l$. 

With the definition of the agitation mode $y_l$ in (\ref{y_l_definition}), the viral state average in cell $\mathcal{N}_l$ follows from the projection of the viral state $v(t)$ on the vector $y_l$ as
\begin{align}
\frac{1}{\left|\mathcal{N}_l \right|}\sum_{i \in \mathcal{N}_l} v_i(t) = \frac{1}{\sqrt{\left|\mathcal{N}_l \right|}} y^T_l v(t)
\end{align}
for every cell $l=1, ..., r$. Furthermore, the subspace $\mathcal{V}_{\neq 0}$ is spanned by the vectors $y_1$, ..., $y_r$. Hence, the dynamics of the projection $\tilde{v}(t)$ on the subspace $\mathcal{V}_{\neq 0}$ describes the evolution of viral state averages of every cell $\mathcal{N}_l$, which is described by $r$ differential equations (\ref{kjnksdfsdfsdf}) on the quotient graph $B^\pi$. Since the steady state $v_{\infty, i}$ of every node $i$ in the same cell $\mathcal{N}_l$ is the same \cite{bonaccorsi2015epidemic, ottaviano2018optimal}, it holds that $v_\infty\in \mathcal{V}_{\neq 0}$, which implies that $v_\textup{ker}(t)\rightarrow 0$ as $t\rightarrow \infty$. Furthermore, from Theorem~\ref{theorem:equitableOriginal} it follows that, if $v_\textup{ker}(0)=0$, then $v_\textup{ker}(t)=0$ at every time $t$. Thus, the evolution of the projection $v_\textup{ker}(t)$ describes convergence of the viral states $v_i(t)$ to the respective cell-averages. \textit{By (\ref{kjnksdfsdfsdf}), Theorem~\ref{theorem:decompose_dynamics} implies that the viral state cell-averages evolve independently of the dynamics on the kernel $\operatorname{ker}(B)$.} Schaub \textit{et al.} \cite{schaub2016graph} obtained an analogous result for linear dynamics on networks.
 
If we can derive the closed-form expression for the projection $\tilde{v}(t)$ by solving (\ref{kjnksdfsdfsdf}), then the dynamics $v_\textup{ker}(t)$ follow by the linear time-\textit{varying} system (\ref{sadfsdgadfdddaaa}). Furthermore, the reduced-size steady state $v^{\pi}_\infty = \left(\tilde{v}^{\pi}_{\infty, i_1}, ..., \tilde{v}^{\pi}_{\infty, i_r}\right)^T$ is an equilibrium of (\ref{kjnksdfsdfsdf}). Thus, if $\tilde{v}(t)=v_\infty$, then the dynamics of the projection $v_\text{ker}(t)$ obey the linear time-\textit{invariant} (LTI) system
 \begin{align}\label{ljnkrgdergrgerreg}
\frac{d v_\textup{ker}(t)}{dt} = - \left( S+ \operatorname{diag}\left( B v_\infty \right) \right)v_\textup{ker}(t).
\end{align}
Thus, the affine subspace $\left\{v_\infty + v_\text{ker}\big| v_\text{ker} \in \operatorname{ker}(B) \right\}$ is an invariant set of NIMFA, on which the viral dynamics are linear.

Loosely speaking, Theorem~\ref{theorem:decompose_dynamics} shows that a crucial challenge for solving NIMFA on graphs with equitable partitions is the dynamics of the projection $\tilde{v}(t)$, since solving the set of nonlinear equations (\ref{kjnksdfsdfsdf}) seems more difficult than solving the linear time-varying system (\ref{sadfsdgadfdddaaa}) for a given $\tilde{v}(t)$. For a complete graph, the solution $\tilde{v}(t)$ to set of nonlinear equations (\ref{kjnksdfsdfsdf}) is one-dimensional and can be stated in closed form \cite{van2014sis}. Thus, we obtain the solution of NIMFA on the complete graph, for \textit{arbitrary} initial viral states $v(0)$, as:

\begin{theorem}\label{theorem:complete_graph_solution}
Consider NIMFA (\ref{NIMFA_stacked}) on the complete graph, whose infection rates equal $\beta_{ij}=\beta$ for all nodes $i,j=1, ..., N$. Suppose the curing rates satisfy $\delta_i=\delta$ for all nodes $i$. Then, for any initial viral state $v(0)\in [0,1]^N$, the solution of NIMFA (\ref{NIMFA_stacked}) equals
\begin{align}
v(t) = c_1(t) v_\infty  + c_2(t) v_\textup{ker}(0),
\end{align}
where the steady-state vector equals $v_\infty=\left( 1 - \frac{\delta}{\beta N} \right) u$, and the $N\times 1$ vector $v_\textup{ker}(0)$ is given by
\begin{align}
v_\textup{ker}(0) &= \left( I - \frac{1}{N}u u^T\right) v(0).
\end{align}
The scalar function $c_1(t)$ equals
\begin{align}\label{c_1_closedform}
c_1(t) =\frac{1}{2}\left( 1 + \tanh\left( \frac{w}{2} t +\Upsilon_1(0)  \right) \right)
\end{align}
with the \emph{viral slope} $w=\beta N - \delta$ and the constant
\begin{align}
\Upsilon_1(0) = \operatorname{arctanh}\left( 2 \frac{v^T_\infty v(0)}{\lVert v_\infty \rVert^2_2} - 1\right),
\end{align}
and the scalar function $c_2(t)$ equals
\begin{align}\label{c_2_closedform}
c_2(t) = \Upsilon_2(0) e^{- \Phi t } \operatorname{sech}\left( \frac{w}{2} t + \Upsilon_1(0)\right)
\end{align}
with the constant $\Phi = \beta N v_{\infty, i}/2 +\delta$, for an arbitrary node $i$, and the constant
\begin{align}\label{kjbnkjnsdfaserwfaerfaedrfaerf}
\Upsilon_2(0) = \frac{v^T_\textup{ker}(0) v(0)}{\lVert v_\textup{ker}(0) \rVert^2_2} \cosh\left( \Upsilon_1(0)\right) .
\end{align}
\end{theorem}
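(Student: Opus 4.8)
The plan is to apply the decomposition of Theorem~\ref{theorem:decompose_dynamics} to the complete graph, which is the simplest instance to which that theorem applies. First I would note that the complete graph with $\beta_{ij}=\beta$ and $\delta_i=\delta$ is equitably partitioned by the \emph{single} cell $\mathcal{N}_1=\{1,\ldots,N\}$ (so $r=1$), that $B=\beta u u^T$ has rank one with $\operatorname{ker}(B)=\{w\mid u^T w=0\}$ of dimension $N-1$, and that the hypotheses of Theorem~\ref{theorem:decompose_dynamics} (homogeneous curing rates; equal infection rates between any two cells) hold trivially because there is only one cell. Theorem~\ref{theorem:decompose_dynamics} then splits $v(t)=\tilde v(t)+v_\textup{ker}(t)$, with $\tilde v(t)=\tfrac1N(u^Tv(t))\,u$ the projection onto $\operatorname{span}\{u\}=\mathcal{V}_{\neq0}$ and $v_\textup{ker}(t)=(I-\tfrac1N uu^T)v(t)$ the projection onto $\operatorname{ker}(B)$; in particular $v_\textup{ker}(0)=(I-\tfrac1N uu^T)v(0)$, the stated vector.

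For the cell-average I would specialise the reduced system (\ref{kjnksdfsdfsdf}): with $r=1$ the quotient matrix is the scalar degree $B^\pi=\sum_{k=1}^N\beta=\beta N$ and $S^\pi=\delta$, so the common cell value $a(t):=\tfrac1N u^T v(t)$ obeys the scalar logistic equation $\dot a=(\beta N-\delta)a-\beta N a^2$. Its closed-form solution \cite{van2014sis} is $a(t)=v_\infty^*\,c_1(t)$ with $v_\infty^*=1-\delta/(\beta N)$ and $c_1$ of the stated $\tanh$ form, after rewriting the logistic solution $a(t)=v_\infty^*/(1+Ce^{-wt})$ in hyperbolic form with viral slope $w=\beta N-\delta$. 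Since $\tilde v(t)=a(t)u$ and $v_\infty=v_\infty^* u$, this gives $\tilde v(t)=c_1(t)v_\infty$. The constant $\Upsilon_1(0)$ follows from $a(0)$ once I observe that $v_\infty^T v(0)/\lVert v_\infty\rVert_2^2=a(0)/v_\infty^*$, which turns $\operatorname{arctanh}(2a(0)/v_\infty^*-1)$ into the stated expression.

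The heart of the argument is the kernel equation (\ref{sadfsdgadfdddaaa}), $\dot v_\textup{ker}=-(S+\operatorname{diag}(B\tilde v(t)))v_\textup{ker}$. The decisive simplification, special to the complete graph, is that $B\tilde v(t)=\beta N v_\infty^* c_1(t)\,u$ is a scalar multiple of the all-one vector, so $\operatorname{diag}(B\tilde v(t))=\beta N v_\infty^* c_1(t)\,I$ is a \emph{scalar multiple of the identity}. The coefficient matrix $-(\delta+\beta N v_\infty^* c_1(t))I$ is then a single scalar times $I$, the $N-1$ kernel modes all share one time-dependent factor, and the system integrates to $v_\textup{ker}(t)=c_2(t)v_\textup{ker}(0)$ with $c_2(t)=\exp(-\int_0^t(\delta+\beta N v_\infty^* c_1(s))\,ds)$ and $c_2(0)=1$. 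I expect this decoupling to be the main point to get right: for a general equitable partition $\operatorname{diag}(B\tilde v)$ need not be scalar, so the kernel modes would not collapse to a single scaling.

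It remains to evaluate the integrating factor and match constants. Using $\beta N v_\infty^*=w$ and $\int_0^t\tanh(\tfrac w2 s+\Upsilon_1(0))\,ds=\tfrac2w\ln[\cosh(\tfrac w2 t+\Upsilon_1(0))/\cosh(\Upsilon_1(0))]$, the exponent collapses to $-(\delta+\tfrac w2)t+\ln[\cosh(\Upsilon_1(0))/\cosh(\tfrac w2 t+\Upsilon_1(0))]$, whence $c_2(t)=\cosh(\Upsilon_1(0))e^{-\Phi t}\operatorname{sech}(\tfrac w2 t+\Upsilon_1(0))$ with $\Phi=\delta+\tfrac w2=\delta+\beta N v_{\infty,i}/2$. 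Finally $\Upsilon_2(0)=\cosh(\Upsilon_1(0))$ because $\tilde v(0)\perp v_\textup{ker}(0)$ forces $v_\textup{ker}^T(0)v(0)=\lVert v_\textup{ker}(0)\rVert_2^2$; substituting into (\ref{kjbnkjnsdfaserwfaerfaedrfaerf}) reproduces the stated $c_2$. Adding $v(t)=\tilde v(t)+v_\textup{ker}(t)=c_1(t)v_\infty+c_2(t)v_\textup{ker}(0)$ completes the proof.
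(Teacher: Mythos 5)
Your proposal is correct and follows essentially the same route as the paper's proof: decompose $v(t)=\tilde v(t)+v_{\textup{ker}}(t)$ via Theorem~\ref{theorem:decompose_dynamics} with the single cell $\mathcal{N}_1=\{1,\ldots,N\}$, solve the scalar logistic equation for the cell average to get $c_1(t)$, and integrate the kernel equation (whose coefficient matrix collapses to a scalar multiple of $I$ on the complete graph) to get $c_2(t)$. Your observation that $v_{\textup{ker}}^T(0)v(0)=\lVert v_{\textup{ker}}(0)\rVert_2^2$ by orthogonality, so that $\Upsilon_2(0)=\cosh(\Upsilon_1(0))$, is a harmless simplification of the paper's formula (\ref{kjbnkjnsdfaserwfaerfaedrfaerf}), which the paper leaves as the (unit) ratio.
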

\begin{proof}
Appendix~\ref{appendix:complete_graph_solution}.
\end{proof}

\begin{figure}[!ht]
\captionsetup[subfigure]{justification=centering}
    \centering
     \begin{subfigure}[t]{0.99\textwidth}
         \centering
              \includegraphics[width=\textwidth]{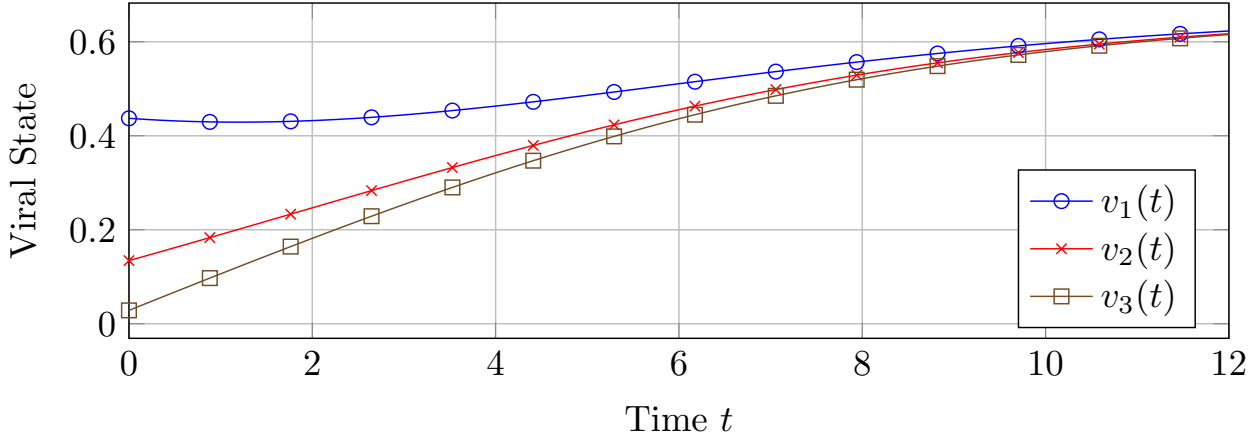}
         \caption{Viral state $v(t)$ versus time $t$.}
     \end{subfigure} 
     \vspace{3mm}     \\
     \begin{subfigure}[t]{0.99\textwidth}
         \centering
              \includegraphics[width=\textwidth]{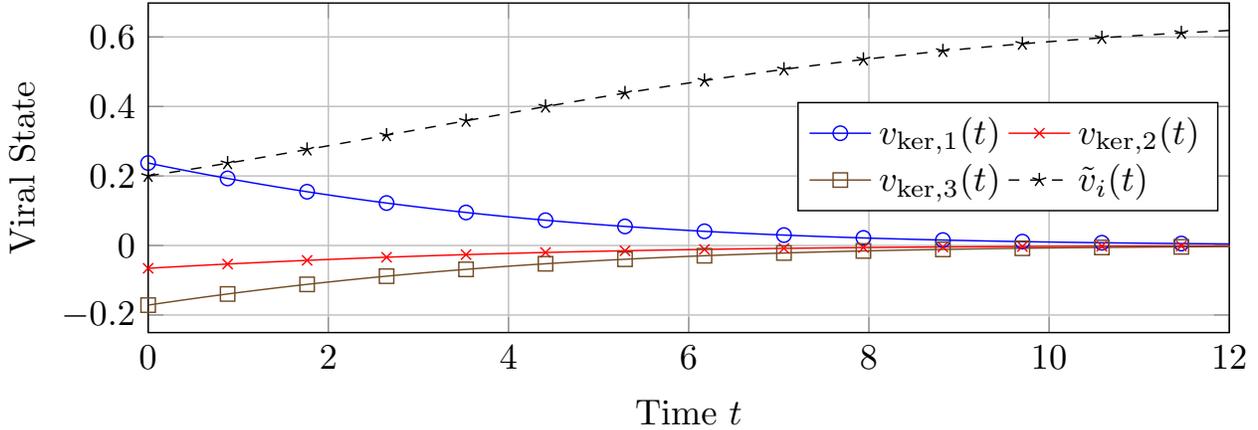}
         \caption{Projections $\tilde{v}(t)$ and $v_\text{ker}(t)$ versus time $t$.}
     \end{subfigure}     
\caption{\textbf{Closed-form solution of NIMFA on the complete graph.} The solution of NIMFA (\ref{NIMFA_continuous}) for a complete graph with $N=3$ nodes and homogeneous spreading rates. As stated by Theorem~\ref{theorem:decompose_dynamics}, the viral state satisfies $v(t)=\tilde{v}(t) + v_{\text{ker}}(t)$, where $\tilde{v}(t)$ and $v_{\text{ker}}(t)$ denote the projection of the viral state $v(t)$ on the subspace $\mathcal{V}_{\neq 0}$ and the kernel $\operatorname{ker}(B)$, respectively. \textbf{(a)}: The viral state $v_i(t)$ versus time $t$ for every node $i$. \textbf{(b)}: The projections $\tilde{v}(t)$ and $v_{\text{ker}}(t)$, which follow from Theorem~\ref{theorem:complete_graph_solution} as $\tilde{v}_i(t) = c_1(t) v_{\infty, i}$ and $v_{\text{ker},i}(t) = c_2(t) \left( y_2\right)_i$ for all nodes $i$, where the scalar functions $c_1(t)$ and $c_2(t)$ are given by the closed-form expressions (\ref{c_1_closedform}) and (\ref{c_2_closedform}), respectively. Since the steady state $v_{\infty, i}$ is the same for every node $i$ in the complete graph, it holds that $\tilde{v}_i(t) = \tilde{v}_j(t)$ for all nodes $i,j$.}\label{fig:kkkqqooii}
\end{figure}

Figure~\ref{fig:kkkqqooii} illustrates the closed-form solution of NIMFA for complete graphs, as given by Theorem~\ref{theorem:complete_graph_solution}. As shown by Figure~\ref{fig:kkkqqooii}, even though the viral state average $\tilde{v}(t)$ is monotonically increasing, the viral state $v_1(t) = \tilde{v}_1(t) + v_{\text{ker}, 1}(t)$ is decreasing until $t\approx 1$, which is due to the dynamics of the projection $v_\text{ker}(t)$ on the kernel $\operatorname{ker}(B)$.

\section{Approximate clustering}

As shown by Theorem~\ref{theorem:invariant_sets_are_equitable_partitions}, equitable partitions and low-dimensional viral state dynamics in NIMFA are equivalent. Many networks possess some macroscopic structure, which may \textit{resemble} an equitable partition, but which \textit{is not precisely} an equitable partition. \textit{Is it possible to reduce the number of NIMFA equations, if the network has an \quotes{almost} equitable partition?}

For two $N\times 1$ vectors~$x,y$, $x\ge y$ denotes that $x_i\ge y_i$ for all entries~$i=1, ..., N$. Theorem~\ref{theorem:NIMFA_bounds} shows that NIMFA (\ref{NIMFA_stacked}) on any network can be bounded by increasing or decreasing the spreading rates $\beta_{ij}$,$\delta_i$:
\begin{theorem}\label{theorem:NIMFA_bounds}
Consider two NIMFA systems with respective positive curing rates~$\delta_i$ and $\tilde{\delta}_i$, non-negative infection rates~$\beta_{ij}$ and $\tilde{\beta}_{ij}$, and viral states~$v_i(t)$ and $\tilde{v}_i(t)$. Suppose that the initial viral state $v_i(0)$,$\tilde{v}_i(0)$ are in $[0,1]$ for all nodes $i$ and that the matrices $B$ and $\tilde{B}$, with elements $\beta_{ij}$ and $\tilde{\beta}_{ij}$, respectively, are irreducible. Then, if $\tilde{\delta}_i \le \delta_i$ and $\tilde{\beta}_{ij} \ge \beta_{ij}$ for all nodes $i,j$, $\tilde{v}(0)\ge v(0)$ implies that~$\tilde{v}(t)\ge v(t)$ at every time $t$.
\end{theorem}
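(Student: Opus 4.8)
The plan is to recognise NIMFA \eqref{NIMFA_stacked} as a \emph{cooperative} (monotone) dynamical system on the cube $[0,1]^N$ and then invoke a comparison principle. Writing the right-hand side of the $\delta,\beta$-system as $f_i(v)=-\delta_i v_i+(1-v_i)\sum_j\beta_{ij}v_j$ and of the $\tilde\delta,\tilde\beta$-system as $\tilde f_i$, the first step is to show that $[0,1]^N$ is positively invariant for both systems, so that all the sign conditions used below actually hold along the entire trajectory. This is a boundary check: on the face $v_i=0$ one has $f_i(v)=\sum_j\beta_{ij}v_j\ge 0$ (using $\beta_{ij}\ge 0$ and $v_j\ge 0$), so $v_i$ cannot become negative; on the face $v_i=1$ one has $f_i(v)=-\delta_i<0$ (using $\delta_i>0$ and $1-v_i=0$), so $v_i$ cannot exceed $1$. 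The same computation applies verbatim to $\tilde f$.

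Next I would verify the two monotonicity ingredients. \emph{Cooperativity in the state:} for $j\neq i$ one has $\partial f_i/\partial v_j=(1-v_i)\beta_{ij}\ge 0$ on $[0,1]^N$, so $f$ (and likewise $\tilde f$) is quasimonotone increasing there. \emph{Monotonicity in the parameters:} for every fixed $x\in[0,1]^N$,
\begin{align*}
\tilde f_i(x)-f_i(x)=(\delta_i-\tilde\delta_i)\,x_i+(1-x_i)\sum_{j}(\tilde\beta_{ij}-\beta_{ij})\,x_j\ge 0,
\end{align*}
since $\delta_i-\tilde\delta_i\ge 0$, $1-x_i\ge 0$, $\tilde\beta_{ij}-\beta_{ij}\ge 0$ and $x_i,x_j\ge 0$. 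Hence $\tilde f\ge f$ componentwise on the invariant cube.

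With these facts the conclusion follows from a standard comparison (Kamke--M\"uller) argument: along its own trajectory $\tilde v(t)\in[0,1]^N$ satisfies $\dot{\tilde v}_i=\tilde f_i(\tilde v)\ge f_i(\tilde v)$, so $\tilde v$ is a supersolution of the cooperative system $\dot x=f(x)$, while $v$ is its exact solution with $v(0)\le\tilde v(0)$; quasimonotonicity of $f$ then yields $v(t)\le\tilde v(t)$ for all $t\ge 0$, which is the claim.

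The main obstacle is making this last step rigorous, because the differential inequality at a hypothetical \emph{first} time $t^\ast$ at which some component satisfies $\tilde v_i(t^\ast)=v_i(t^\ast)$ is only weak, namely $\dot{\tilde v}_i(t^\ast)-\dot v_i(t^\ast)\ge 0$, which by itself does not forbid a crossing. I would resolve this by the usual perturbation device: replace $\tilde f$ by the strictly dominating field $\tilde f+\eta u$ for $\eta>0$ (and, if needed, bump the initial data to $\tilde v(0)+\eta u$), establish \emph{strict} ordering $\tilde v^\eta(t)>v(t)$ for the perturbed problem via the first-crossing-time contradiction, and then let $\eta\downarrow 0$ using continuous dependence of solutions on $\eta$. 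The only delicate point is that the perturbed trajectory might momentarily leave the cube through a face $v_i=1$; since the inward drift $-\delta_i<0$ dominates there, a sufficiently small $\eta$ keeps $\tilde v^\eta$ in a fixed neighbourhood of $[0,1]^N$ on any finite time interval, which is enough to run the argument.
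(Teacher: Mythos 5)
Your proof is correct and follows essentially the same route as the paper's: both reduce the claim to the Kamke--M\"uller comparison principle after checking that the two vector fields are ordered and that the reference field is quasimonotone on $[0,1]^N$. The only (immaterial) difference is the direction of comparison --- you treat $\tilde{v}$ as a supersolution of the $(\delta,\beta)$-system, whereas the paper treats $v$ as a subsolution of the $(\tilde{\delta},\tilde{\beta})$-system --- and you additionally spell out the invariance of the cube and the perturbation/limiting argument that the paper delegates to the cited references.
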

\begin{proof}
Appendix~\ref{appendix:NIMFA_bounds}.
\end{proof}
We emphasise that Theorem~\ref{theorem:NIMFA_bounds} does not assume symmetric infection rate matrices $B$, $\tilde{B}$. Building upon Theorem~\ref{theorem:NIMFA_bounds}, we aim to bound the viral state $v(t)$ of any network at every time $t$ by the viral state of networks with equitable partitions. In the following, we consider a partition $\pi=\{\mathcal{N}_1,...,\mathcal{N}_r\}$ of the node set $\mathcal{N}=\{1, ..., N\}$ of an arbitrary network. We stress that $\pi$ can be \textit{any}, not necessarily equitable, partition. We define the minimum $d_{\textup{min},pl}$ of the sum of infection rates from cell $\mathcal{N}_l$ to $\mathcal{N}_p$ as
\begin{align}\label{d_lowwwww}
d_{\textup{min},pl} = \underset{i\in\mathcal{N}_p}{\operatorname{min}} ~ \sum_{k \in \mathcal{N}_l} \beta_{ik} 
\end{align}
and the maximum $d_{\textup{max},pl}$ as
\begin{align}\label{d_uppp}
d_{\textup{max},pl} = \underset{i\in\mathcal{N}_p}{\operatorname{max}} ~ \sum_{k \in \mathcal{N}_l} \beta_{ik}. 
\end{align}
Furthermore, we denote the $r\times r$ matrices $B_\text{min}$ and $B_\text{max}$, whose elements are given by $d_{\textup{min},pl}$ and $d_{\textup{max},pl}$, respectively. Analogously, we define the minimum $\delta_{\text{min},l}$ of the curing rates in cell $\mathcal{N}_l$ as
\begin{align}\label{delta_lb_def}
\delta_{\text{min},l} = \underset{i\in\mathcal{N}_l}{\operatorname{min}} ~ \delta_i
\end{align}
and the maximum $\delta_{\text{max},l}$ as
\begin{align}\label{delta_ub_def}
\delta_{\text{max},l} = \underset{i\in\mathcal{N}_l}{\operatorname{max}} ~ \delta_i.
\end{align}
We combine Theorem~\ref{theorem:equitableOriginal} and Theorem~\ref{theorem:NIMFA_bounds} to obtain:
\begin{theorem}\label{theorem:bound_by_equitable}
Suppose that the Assumptions~\ref{assumption:spreading_rates} and~\ref{assumption:matrix_B_symmetric} hold. At every time $t$, consider the $r\times 1$ reduced-size lower bound $v_{\textup{lb},l}(t)$ and $r\times 1$ upper bound $v_{\textup{ub},l}(t)$, which evolve as
\begin{align}\label{wrfaergaergaergrea}
	\frac{d v_{\textup{lb}}(t)}{d t } & = - \operatorname{diag}\left(\delta_{\textup{max},1},..., \delta_{\textup{max},r}\right) v_{\textup{lb}}(t) + \textup{\textrm{diag}}\left( u_r - v_{\textup{lb}}(t)\right) B_{\textup{min}}  v_{\textup{lb}}(t)
	\intertext{and} 
	\frac{d v_{\textup{ub}}(t)}{d t } & = - \operatorname{diag}\left(\delta_{\textup{min},1},..., \delta_{\textup{min},r}\right) v_{\textup{ub}}(t) + \textup{\textrm{diag}}\left( u_r - v_{\textup{ub}}(t)\right) B_{\textup{max}}  v_{\textup{ub}}(t).
\end{align}
Then, if the initial states satisfy $v_{\textup{lb},l}(0) \le v_i(0)\le v_{\textup{ub},l}(0)$ for all nodes $i$ in any cell $\mathcal{N}_l$, the viral state $v_i(t)$ of all nodes $i$ in any cell $\mathcal{N}_l$ is bounded by
\begin{align}\label{khbkervgergeg}
v_{\textup{lb},l}(t)\le v_i(t) \le v_{\textup{ub},l}(t) \quad \forall t\ge 0.
\end{align}
\end{theorem}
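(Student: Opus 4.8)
The plan is to sandwich the viral state $v(t)$ of the arbitrary (possibly non-equitable) network between the solutions of two auxiliary full-size NIMFA systems that \emph{are} equitable with respect to the given partition $\pi$. Concretely, I build a lower auxiliary system whose infection rates are everywhere no larger than $\beta_{ij}$ and whose curing rates are everywhere no smaller than $\delta_i$, chosen so that $\pi$ is equitable for it with quotient matrix $B_{\textup{min}}$ and cell-wise curing $\delta_{\textup{max},l}$; symmetrically, an upper auxiliary system with rates no smaller than $\beta_{ij}$, curing no larger than $\delta_i$, equitable with quotient $B_{\textup{max}}$ and curing $\delta_{\textup{min},l}$. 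Then Theorem~\ref{theorem:equitableOriginal} collapses each auxiliary $N$-dimensional system to the $r$-dimensional systems in \eqref{wrfaergaergaergrea}, while Theorem~\ref{theorem:NIMFA_bounds} furnishes the entrywise comparison with $v(t)$.

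For the lower system I would fix, for each source node $i\in\mathcal{N}_p$ and each target cell $\mathcal{N}_l$, non-negative rates $\underline\beta_{ik}\le\beta_{ik}$ ($k\in\mathcal{N}_l$) whose sum is exactly $d_{\textup{min},pl}$; this is possible because $\sum_{k\in\mathcal{N}_l}\beta_{ik}\ge d_{\textup{min},pl}$ by definition \eqref{d_lowwwww}, e.g.\ by scaling $\beta_{ik}$ uniformly over $k\in\mathcal{N}_l$. By construction the row-sum-into-cell condition of Definition~\ref{def:equitable_partition} holds, so $\pi$ is equitable for $\underline B=(\underline\beta_{ik})$ with quotient $B_{\textup{min}}$, and I set $\underline\delta_i=\delta_{\textup{max},l}$ for $i\in\mathcal{N}_l$. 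Choosing the cell-constant initial condition $\underline v_i(0)=v_{\textup{lb},l}(0)$ for $i\in\mathcal{N}_l$ meets the hypothesis $\underline v_i(0)=\underline v_j(0)$ of Theorem~\ref{theorem:equitableOriginal}, so the lifted solution stays cell-constant and its reduced-size representative solves precisely the first system in \eqref{wrfaergaergaergrea}; hence $\underline v_i(t)=v_{\textup{lb},l}(t)$. Since $\underline\beta_{ij}\le\beta_{ij}$, $\underline\delta_i\ge\delta_i$ and $\underline v_i(0)\le v_i(0)$, Theorem~\ref{theorem:NIMFA_bounds} (read with the roles of the two systems exchanged, so that the \emph{original} system plays the tilde role) yields $v_{\textup{lb},l}(t)=\underline v_i(t)\le v_i(t)$. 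The upper bound is the mirror image: enlarge each within-cell-target sum up to $d_{\textup{max},pl}\ge\sum_{k\in\mathcal{N}_l}\beta_{ik}$ to obtain $\overline\beta_{ik}\ge\beta_{ik}$, set $\overline\delta_i=\delta_{\textup{min},l}$, and use $v_{\textup{ub},l}(0)\ge v_i(0)$; Theorems~\ref{theorem:equitableOriginal} and~\ref{theorem:NIMFA_bounds} then give $v_i(t)\le v_{\textup{ub},l}(t)$, which together with the lower bound is exactly \eqref{khbkervgergeg}.

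The main obstacle is the irreducibility requirement that Theorem~\ref{theorem:NIMFA_bounds} imposes on \emph{both} matrices. For the upper system this is automatic, since $\overline B\ge B\ge 0$ entrywise and $B$ is irreducible by Assumption~\ref{assumption:matrix_B_symmetric}, so the support of $\overline B$ contains that of $B$ and $\overline B$ is irreducible. For the lower system, however, $\underline B$ is obtained by shrinking rates, and whenever some $d_{\textup{min},pl}=0$ (a node of $\mathcal{N}_p$ with no links into $\mathcal{N}_l$) equitability forces the entire $\mathcal{N}_p\times\mathcal{N}_l$ block to vanish, which may disconnect the graph. I would resolve this by a continuity argument: for $\eta\in(0,1]$ put $\hat B_\eta=(1-\eta)\underline B+\eta B$, which satisfies $\hat B_\eta\le B$ entrywise (as $\underline B\le B$) and is irreducible (its support contains that of $\eta B$). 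Applying Theorem~\ref{theorem:NIMFA_bounds} to the pair $(\delta,B)$ and $(\underline\delta,\hat B_\eta)$ with the same cell-constant initial data gives $\hat v_\eta(t)\le v(t)$ for every $\eta>0$; letting $\eta\downarrow 0$ and invoking continuous dependence of the NIMFA solution on the rate matrix (the vector field is polynomial and $[0,1]^N$ is forward invariant, so solutions converge uniformly on compact time intervals) shows $\hat v_\eta(t)\to\underline v(t)$, whence $v_{\textup{lb},l}(t)=\underline v_i(t)\le v_i(t)$ as claimed.

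A few routine checks remain, which I expect to be painless: the reduced systems in \eqref{wrfaergaergaergrea} are themselves NIMFA systems on the non-negative matrices $B_{\textup{min}},B_{\textup{max}}$, so $[0,1]^r$ is forward invariant and the factors $1-v_{\textup{lb},l}(t)$ and $1-v_{\textup{ub},l}(t)$ stay non-negative, as needed both for the comparison and for keeping all auxiliary states in $[0,1]$; and the hypotheses of Theorem~\ref{theorem:NIMFA_bounds} on the initial data are met verbatim by the assumption $v_{\textup{lb},l}(0)\le v_i(0)\le v_{\textup{ub},l}(0)$.
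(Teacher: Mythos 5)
Your proof follows essentially the same route as the paper's: construct full-size equitable minorant and majorant systems (the paper's Lemma~\ref{lemma:beta_lb_beta} is exactly your rate-shrinking step), collapse them to $r$ dimensions with Theorem~\ref{theorem:equitableOriginal}, and compare with Theorem~\ref{theorem:NIMFA_bounds}. The one point where you go beyond the paper is the perturbation argument for the possible reducibility of the minorant matrix $\tilde{B}_{\textup{min}}$ --- the paper invokes Theorem~\ref{theorem:NIMFA_bounds} without comment even though shrinking rates can disconnect the graph --- and your fix is a welcome tightening (alternatively, one can observe that the Kamke--M\"uller argument in Appendix~\ref{appendix:NIMFA_bounds} never actually uses irreducibility).
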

\begin{proof}
Appendix \ref{appendix:bound_by_equitable}.
\end{proof}

Theorem~\ref{theorem:bound_by_equitable} states that the $N\times 1$ viral state $v(t)$ on \textit{any} network is bounded by the $r\times 1$ viral states $v_{\textup{lb}}(t)$, $v_{\textup{ub}}(t)$ on networks with equitable partitions and $r$ cells. Reducing the $N$-dimensional viral state dynamics to $r$-dimensional dynamics comes at the cost of an \textit{approximate} description by the bounds in (\ref{khbkervgergeg}). If the partition $\pi$ is equitable, then it holds that $d_{\text{min},pl}=d_{\text{max},pl}$, and the bounds in Theorem~\ref{theorem:bound_by_equitable} can be replaced by the exact statement in Theorem~\ref{theorem:equitableOriginal}.

Similarly to the lower bound and upper bound of the degrees in (\ref{d_lowwwww}) and (\ref{d_uppp}), respectively, we define the \textit{average} degree from cell $\mathcal{N}_l$ to $\mathcal{N}_p$ for any partition $\pi$ as
\begin{align}
\bar{d}_{pl} = \frac{1}{\big|\mathcal{N}_p\big|}
\sum_{i \in \mathcal{N}_p} \sum_{k \in \mathcal{N}_l} \beta_{ik}.
\end{align}
Then, we define the $r\times r$ reduced-size infection rate matrix $\bar{B}$, which consists of the elements $\bar{d}_{pl}$. Furthermore, we define the average curing rate of any cell $\mathcal{N}_l$ as
\begin{align}
\bar{\delta}_l = \frac{1}{\big|\mathcal{N}_l\big|}\sum_{i \in \mathcal{N}_l} \delta_i.
\end{align}
Then, we approximate the viral state by $v_i(t)\approx \bar{v}_l(t)$ for all nodes $i$ in any cell $\mathcal{N}_l$. Here, the $r\times 1$ reduced-size viral state vector $\bar{v}(t)$ evolves as 
\begin{align}\label{khbjegvetgetbtbrtbrt}
	\frac{d \bar{v}(t)}{d t } & = - \operatorname{diag}\left(\bar{\delta}_1,..., \bar{\delta}_r\right) \bar{v}(t) + \textup{\textrm{diag}}\left( u_r - \bar{v}(t)\right) \bar{B} \bar{v}(t),
\end{align}
and, for all cells $\mathcal{N}_l$, the initial state equals
\begin{align}
\bar{v}_l(0) = \frac{1}{\big|\mathcal{N}_l\big|}\sum_{i \in \mathcal{N}_l} v_l(0).
\end{align}
If the matrix $B$ has an equitable partition $\pi$ and the rates $\delta_i$, $\beta_{ij}$ are the same between all nodes $i,j$ in any two cells as in Theorem~\ref{theorem:decompose_dynamics}, then the approximation $\bar{v}(t)$ coincides with the projection $\tilde{v}(t)$ of the viral state $v(t)$ on the subspace $\mathcal{V}_{\neq 0}$.

To illustrate the accuracy of the bounds in Theorem~\ref{theorem:bound_by_equitable} and the reduced-size viral state $\bar{v}(t)$ for networks without equitable partitions, we consider the \textit{Stochastic Blockmodel} (SBM), originally introduced by Holland \textit{et al.} \cite{holland1983stochastic}. We consider a network with $N=1000$ nodes and a partition $\pi$ with $r=5$ cells $\mathcal{N}_1$, ..., $\mathcal{N}_5$. The cells are of size $|\mathcal{N}_1|=400$, $|\mathcal{N}_2|=250$, $|\mathcal{N}_3|=200$, $|\mathcal{N}_4|=100$ and $|\mathcal{N}_5|=50$. With a probability of $0.7$, there are no links between two cells $\mathcal{N}_p$, $\mathcal{N}_l$, i.e., $\beta_{ij}=\beta_{ji}=0$ for all nodes $i\in \mathcal{N}_p$ and $j\in\mathcal{N}_l$. Otherwise, with a probability of $0.3$, we denote the mean of the links between the cells $\mathcal{N}_p$, $\mathcal{N}_l$ by $\bar{\beta}_{pl}=\bar{\beta}_{lp}$, which is set to a uniform random number in $[0.1,0.2]$. Then, the infection rate $\beta_{ij}=\beta_{ji}$ for all nodes $i\in \mathcal{N}_p$ and $j\in\mathcal{N}_l$ is set to a random number $[\bar{\beta}_{pl}, \bar{\beta}_{pl}(1+\sigma_\text{rel})]$, where we vary the \textit{relative variance} $\sigma_\text{rel}$ for different scenarios in the numerical evaluation. If $\sigma_\text{rel}=0$, then the partition $\pi$ is equitable. The larger the variance $\sigma_\text{rel}$, the \quotes{less equitable} the partition $\pi$. For every node $i$, the curing rate $\delta_i$ is set to a uniform random number in $[1, 1+\sigma_\text{rel}]$, and the initial viral state $v_i(0)$ is set to a uniform random number in $[0.01, 0.01(1+\sigma_\text{rel})]$. Hence, if the variance $\sigma_\text{rel}=0$, then it holds that 
$v_{\textup{lb},l}(t)=v_{\textup{lb},l}(t)=v_i(t)$ for every node $i$ in any cell $\mathcal{N}_l$. Lastly, the curing rates are decreased to $\delta_i\gets c \delta_i$, where the scalar $c$ is chosen such that the basic reproduction number (\ref{R_0_NIMFA_def}) equals $R_0=3$. To obtain the viral state $v(t)$, we discretise NIMFA (\ref{NIMFA_continuous}) with a sufficiently small sampling time, see \cite{pare2018analysis, prasse2019viral, liu2020stability} for a detailed analysis of the resulting discrete-time NIMFA model.

\begin{figure}[!ht]
\captionsetup[subfigure]{justification=centering}
    \centering
     \begin{subfigure}[t]{0.48\textwidth}
         \centering
         \includegraphics[width=\textwidth]{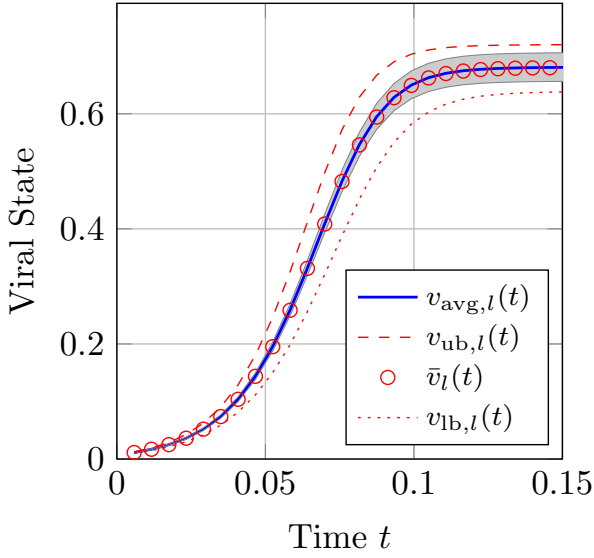}
         \caption{Cell $\mathcal{N}_1$ and relative variance $\sigma_\text{rel}=0.25$.}
     \end{subfigure} 
     \quad
         \begin{subfigure}[t]{0.48\textwidth}
         \centering
         \includegraphics[width=\textwidth]{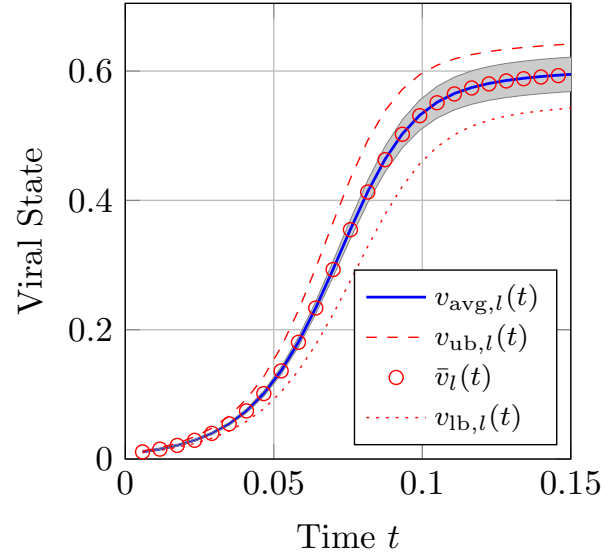}
         \caption{Cell $\mathcal{N}_5$ and relative variance $\sigma_\text{rel}=0.25$.}
     \end{subfigure} 
     \vspace{2.5mm}\\
     \begin{subfigure}[t]{0.48\textwidth}
         \centering
         \includegraphics[width=\textwidth]{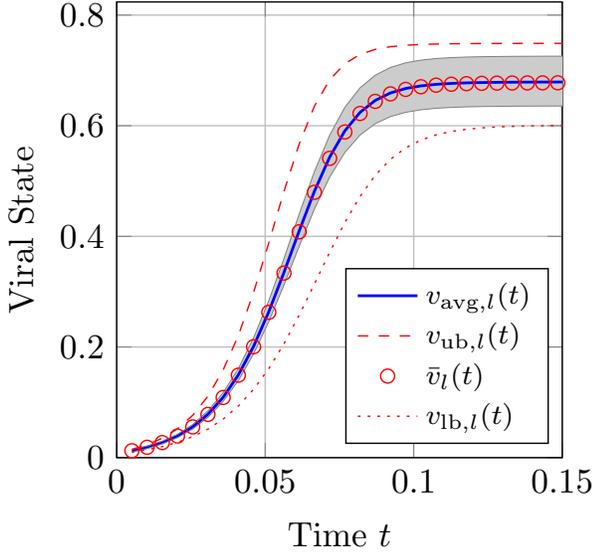}
         \caption{Cell $\mathcal{N}_1$ and relative variance $\sigma_\text{rel}=0.5$.}
     \end{subfigure} 
     \quad
         \begin{subfigure}[t]{0.48\textwidth}
         \centering
         \includegraphics[width=\textwidth]{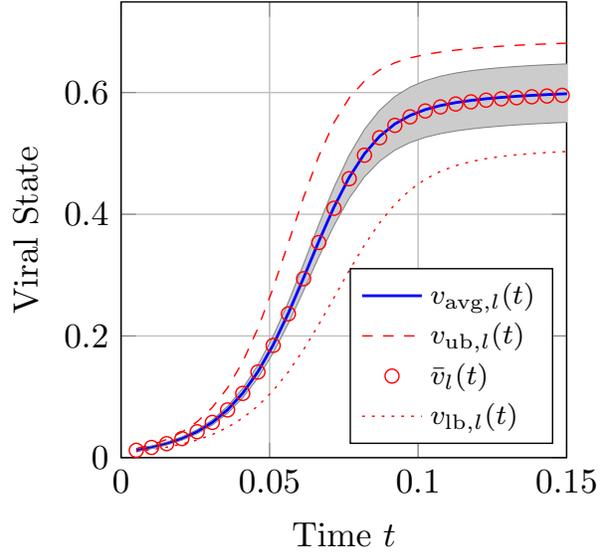}
         \caption{Cell $\mathcal{N}_5$ and relative variance $\sigma_\text{rel}=0.5$.}
     \end{subfigure} 
     \caption{\textbf{Low-dimensional approximation of the viral state dynamics.} For a stochastic blockmodel network with $N=1000$ nodes and $r=5$ cells, the accuracy of the approximation $\bar{v}_l(t)$ and the tightness of the bounds $v_{\textup{lb},l}(t)$, $v_{\textup{lb},l}(t)$ are depicted. The reduced-size viral states $\bar{v}(t)$,$v_{\textup{lb}}(t)$ and $v_{\textup{lb}}(t)$ are equal to the linear combination of $m=r=5$ agitation modes $y_l$, each of which corresponds to one cell. The first and second row correspond to the relative variance $\sigma_\text{rel}=0.25$ and $\sigma_\text{rel}=0.5$, respectively. The left column corresponds to the largest cell $\mathcal{N}_1$, the right column corresponds to the smallest cell $\mathcal{N}_5$. The viral state $v_i(t)$ of every node $i$ in the respective cell $\mathcal{N}_l$ is within the shaded grey area.}
  \label{fig:chapter1_bounds_1}
\end{figure}   

Figure~\ref{fig:chapter1_bounds_1} illustrates the accuracy of the bounds $v_{\textup{lb},l}(t)$, $v_{\textup{lb},l}(t)$ in Theorem~\ref{theorem:bound_by_equitable} and the approximation accuracy of $\bar{v}(t)$ in (\ref{khbjegvetgetbtbrtbrt}) for the largest cell $\mathcal{N}_1$ and the smallest cell $\mathcal{N}_5$. For both $\sigma_\text{rel}=0.25$ and $\sigma_\text{rel}=0.5$, the approximation $\bar{v}_l(t)$ is close to the exact average viral state in cell $\mathcal{N}_l$,
\begin{align}\label{ljnergvesrgstbtgb}
v_{\text{avg}, l}(t) = \frac{1}{\big|\mathcal{N}_l\big|}\sum_{i \in \mathcal{N}_l} v_l(t).
\end{align}
The accuracy of the bounds $v_{\textup{lb},l}(t)$, $v_{\textup{lb},l}(t)$ on any viral state $v_i(t)$ in cell $\mathcal{N}_l$ decreases when the variance $\sigma_\text{rel}$ is increased. Nonetheless, the bounds $v_{\textup{lb},l}(t)$, $v_{\textup{lb},l}(t)$ are reasonably accurate for both $\sigma_\text{rel}=0.25$ and $\sigma_\text{rel}=0.5$.

\subsection{Clustering for epidemics on real-world networks}

Approximating the viral state dynamics by $m<N$ equations requires the specification of a partition $\pi$ of the nodes. In some cases, this partition is given \textit{a priori}, as in the experiments in Figure~\ref{fig:chapter1_bounds_1}, where the node partition $\pi$ was chosen corresponding to the SBM blocks. In contrast, for real-world networks, it is more challenging to determine an appropriate clustering and, hence, to obtain an accurate description of the viral state dynamics by $m<N$ equations.

We consider a two-step approach to reduce NIMFA to $m=r<N$ equations. First, we obtain a partition $\pi$ of the nodes by the Bethe spectral clustering algorithm \cite{NIPS2014_63923f49}, which makes use of the \textit{Bethe Hessian} $H_{\pm} = (d_\text{avg}-1)I \pm d_\text{avg} B+D$, with the average degree $d_\text{avg}$ and the degree matrix $D=\operatorname{diag}(d_1, ..., d_N)$. When the matrix $B$ has an (approximate) SBM structure, the negative eigenvalues of $H_{\pm}$ have corresponding eigenvectors which are (approximately) piecewise constant on the blocks of $B$. The spectral clustering algorithm partitions the nodes of $B$ based on a $k$-means clustering of the negative eigenvector entries of $H_{\pm}$. Second, we evaluate the accuracy of reduced-size viral state $\bar{v}(t)$ in (\ref{khbjegvetgetbtbrtbrt}) by the deviation of the prevalence, 
\begin{align}\label{ejnjnkjnkjbkb}
\epsilon_{\text{avg}} = \sum^n_{k=1}\left|  \frac{1}{N}  \sum^N_{i=1} v_i\left(k\Delta t\right)  -  \frac{1}{N}  \sum^r_{l=1} \left|\mathcal{N}_l\right| \bar{v}_l\left(k\Delta t\right) \right|.
\end{align}
Here, $\Delta t$ denotes the sampling time, $k$ is the discrete time, and the number of observations $n$ is chosen such that the viral state $v(n\Delta t)$ practically converged to the steady state $v_\infty$.

We applied the Bethe clustering algorithm to three real-world networks, which were accessed through \cite{kunegis2013konect}: the \textit{American football} network \cite{girvan2002community} with $N=115$ nodes and $L=613$ links, for which $r=10$ clusters were detected; the \textit{primary school} contact network (day 1) \cite{10.1371/journal.pone.0023176} with $N=236$ nodes and $L=5899$ links, resulting in $r=8$ clusters; and the \textit{train bombing} network \cite{hayes2006connecting} with $N=64$ nodes, $L=243$ links and $r=3$ identified clusters. For all networks, we considered homogeneous spreading rates $\beta_{ij}$, $\delta_i$, which were set such that the basic reproduction number equals $R_0=3$. The initial viral state was set to $v_i(\Delta t)=1/N$ for every node $i$. To evaluate the accuracy of the Bethe clustering approach, we additionally considered a collection of random partitions, which are obtained by randomly permuting the nodes in the partition $\pi$ of the Bethe clustering.

\begin{figure}[!ht]
\captionsetup[subfigure]{justification=centering}
    \centering
         \begin{subfigure}[t]{0.3\textwidth}
         \centering
         \includegraphics[width=\textwidth]{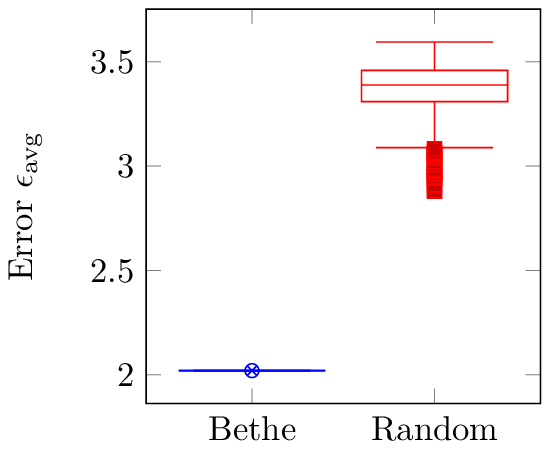}
         \caption{American football.}
     \end{subfigure} 
     \quad
     \begin{subfigure}[t]{0.3\textwidth}
         \centering
         \includegraphics[width=\textwidth]{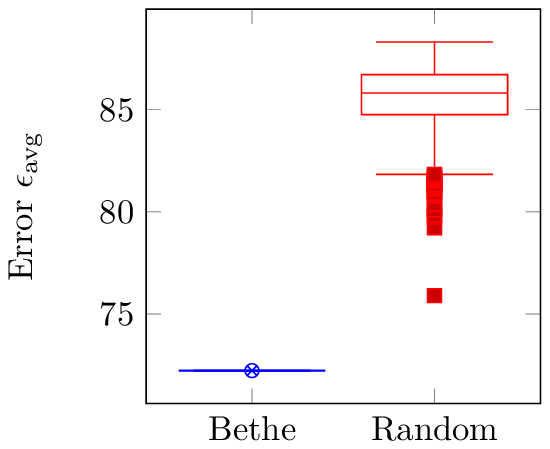}
         \caption{Primary school.}
     \end{subfigure} 
     \quad
         \begin{subfigure}[t]{0.3\textwidth}
         \centering
         \includegraphics[width=\textwidth]{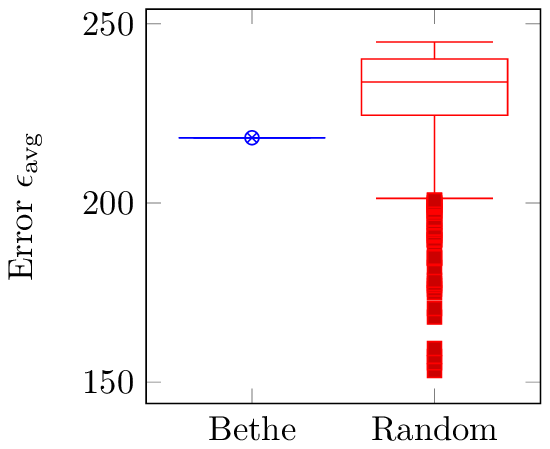}
         \caption{Train bombing.}
     \end{subfigure} 
     \caption{\textbf{Low-dimensional approximation of epidemics on real-world networks.} The error $\epsilon_{\text{avg}}$ of the reduced-size viral state $\bar{v}(t)$, in (\ref{khbjegvetgetbtbrtbrt}), for partitions obtained by Bethe clustering and random partitions.}
  \label{fig:mean_sbm_like_networks}
\end{figure}   

Figure~\ref{fig:mean_sbm_like_networks} shows that, for the football and the school network which have a clear community structure, the Bethe spectral clustering approach results in significantly more accurate low-dimensional viral dynamics $\bar{v}(t)$ than for random partitions. For the train network, which does not possess a clear community structure, there is a smaller advantage of Bethe clustering. Thus, our results indicate that if the network has an underlying community structure, then spectral clustering may be used to find an accurate low-dimensional approximation of the viral state dynamics.

\begin{figure}[!ht]
\captionsetup[subfigure]{justification=centering}
    \centering
         \begin{subfigure}[t]{0.3\textwidth}
         \centering
         \includegraphics[width=\textwidth]{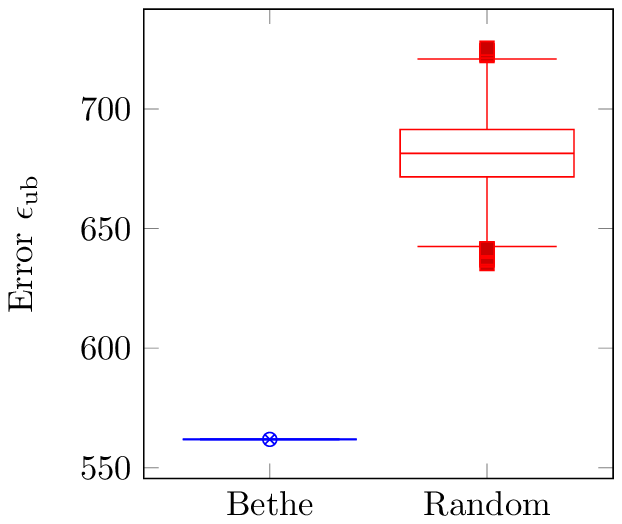}
         \caption{American football.}
     \end{subfigure} 
     \quad
     \begin{subfigure}[t]{0.3\textwidth}
         \centering
         \includegraphics[width=\textwidth]{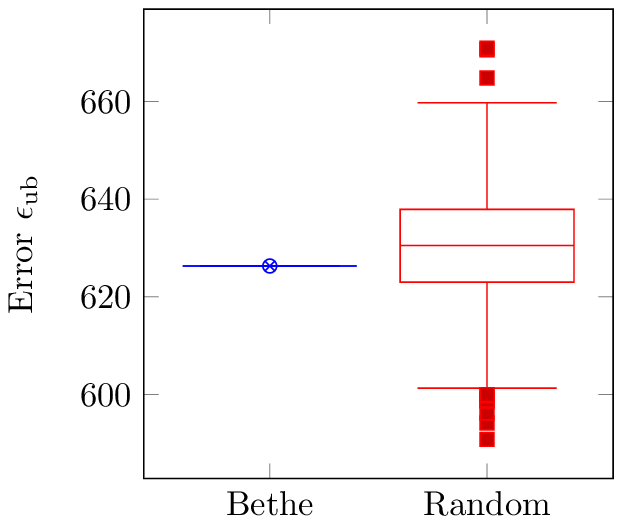}
         \caption{Primary school.}
     \end{subfigure} 
     \quad
         \begin{subfigure}[t]{0.3\textwidth}
         \centering
         \includegraphics[width=\textwidth]{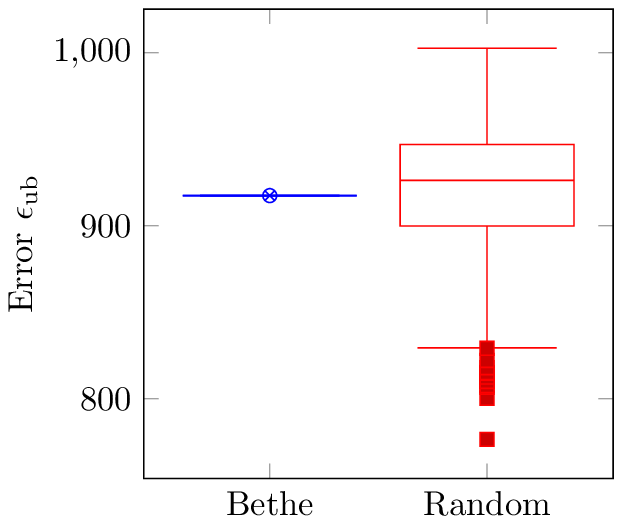}
         \caption{Train bombing.}
     \end{subfigure} 
          \vspace{2.5mm}\\
         \begin{subfigure}[t]{0.3\textwidth}
         \centering
         \includegraphics[width=\textwidth]{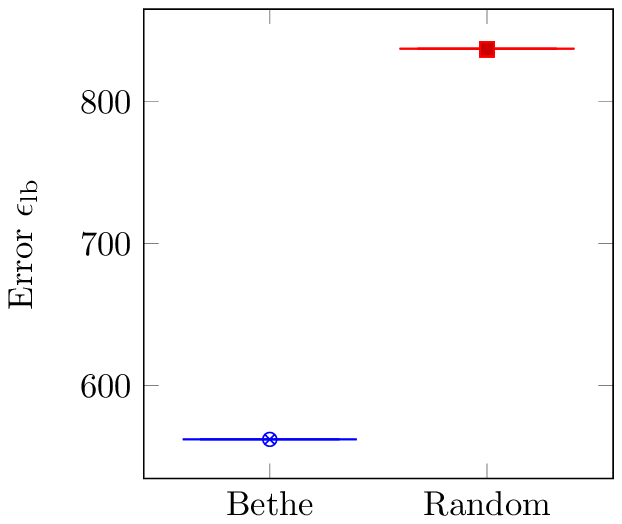}
         \caption{American football.}
     \end{subfigure} 
     \quad
          \begin{subfigure}[t]{0.3\textwidth}
         \centering
         \includegraphics[width=\textwidth]{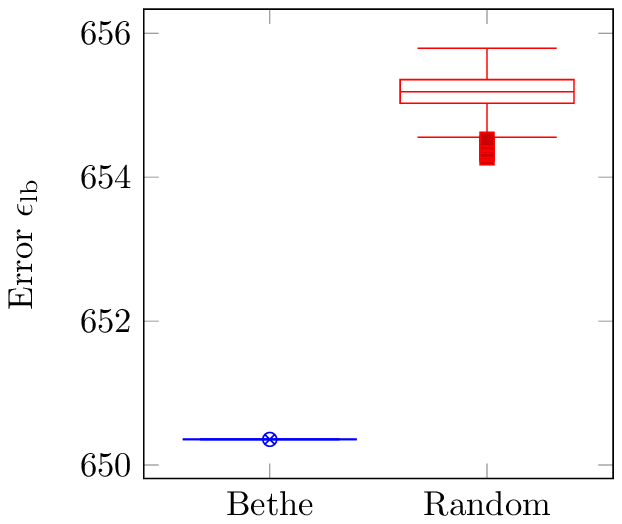}
         \caption{Primary school.}
     \end{subfigure} 
     \quad
         \begin{subfigure}[t]{0.3\textwidth}
         \centering
         \includegraphics[width=\textwidth]{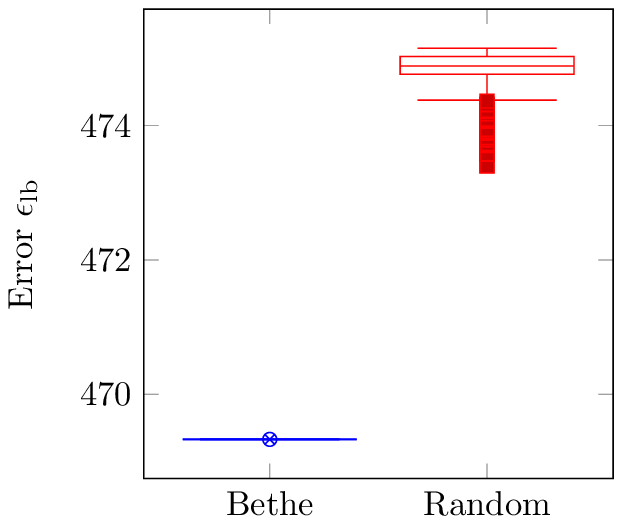}
         \caption{Train bombing.}
     \end{subfigure} 
     \caption{\textbf{Low-dimensional bounds of epidemics on real-world networks.} The errors of the low-dimensional bounds $v_{\textup{lb},l}(t)$ and $v_{\textup{ub},l}(t)$, stated by Theorem~\ref{theorem:bound_by_equitable}, for partitions obtained by Bethe clustering and random partitions. The subplots in the first and second row show the errors $\epsilon_{\text{ub}}$ and $\epsilon_{\text{lb}}$ of the upper bound $v_{\textup{ub},l}(t)$ and the lower bound $v_{\textup{ub},l}(t)$, respectively.}
  \label{fig:bounds_sbm_like_networks}
\end{figure}  

Furthermore, for any partition $\pi$ of the nodes, there are low-dimensional bounds $v_{\textup{lb},l}(t)$, $v_{\textup{ub},l}(t)$ of the viral state dynamics, as stated by Theorem~\ref{theorem:bound_by_equitable}. We define the errors $\epsilon_{\text{ub}}$ and $\epsilon_{\text{lb}}$ of the bounds $v_{\textup{ub},l}(t)$ and $v_{\textup{lb},l}(t)$ analogously to (\ref{ejnjnkjnkjbkb}). Figure~\ref{fig:bounds_sbm_like_networks} demonstrates that
the partition of the nodes by the Bethe clustering algorithm results in significantly more accurate lower bounds $v_{\textup{lb},l}(t)$ than those obtained from random partitions, and somewhat more accurate upper bounds $v_{\textup{ub},l}(t)$.

\section{Conclusions}
In this work, we focussed on reducing NIMFA on a network with $N$ nodes to only $m<<N$ differential equations. We believe that the geometric clustering approach outlined in this work can be applied to other dynamics on networks, particularly to general epidemic models \cite{sahneh2013generalized,prasse2019gemf} and the class of dynamics in \cite{timme2007revealing, barzel2013universality, laurence2019spectral, prasse2020predicting}. Our contribution is composed of three parts. In the first part, we showed that the viral dynamics evolve on an $m$-dimensional subspace $\mathcal{V}$ \textit{if and only if} the contact network has an equitable partition with $m_1\le m$ cells. Thus, low-dimensional viral state dynamics and the macroscopic structure of equitable partitions are equivalent. 

In the second part, we focussed on equitable partitions $\pi$ with the same spreading rates $\beta_{ij}$ and $\delta_i$ for all nodes $i,j$ in the same cell $\mathcal{N}_l$. We considered the decomposition of the viral state $v(t)=v_{\text{ker}}(t)+\tilde{v}(t)$ into two parts: the term $\tilde{v}(t)$ describes the viral state average in every cell $\mathcal{N}_l$; and the term $v_{\text{ker}}(t)$ equals the projection of the viral state $v(t)$ onto the kernel of the infection rate matrix $B$. By showing that the term $\tilde{v}(t)$ evolves independently from the projection $v_{\text{ker}}(t)$ and the projection $v_{\text{ker}}(t)$ obeys a linear time-varying system, we derived the solution of the NIMFA differential equations on the complete graph for \textit{arbitrary} initial conditions $v(0)$.

Strictly speaking, most contact networks do not have an equitable partition, and an exact reduction of the number of NIMFA equations is not possible. In the third part, we considered arbitrary contact networks with a (not necessarily equitable) partition of the nodes into $m$ cells. For any partition of the nodes, we derived bounds and approximations of the NIMFA epidemics with only $m$ differential equations. The \quotes{more equitable} the partition, the more accurate the approximation. Thus, finding (almost) equitable partitions is crucial for reducing an epidemic outbreak in a large population to the interaction of only few groups of individuals.

\section*{Acknowledgements}

We are grateful to Massimo Achterberg for helpful discussions on this material.

\newpage


\appendix

\section{Proof of Lemma~\ref{lemma:invariance}}\label{appendix:lemma_invariance}
Let $w$ denote a vector in the orthogonal complement $\mathcal{V}^\bot$ of the invariant set $\mathcal{V}$. Hence, it must hold that $w^T v(t)=0$ for every time $t \ge 0$ if $v(0)\in \mathcal{V}$, which is equivalent to both $w^T v(0) = 0$ and
 \begin{align}\label{derivativeZero}
 \frac{d ( w^T v(t) ) }{d t} = 0 \quad \forall v(t)\in \mathcal{V}, w \in \mathcal{V}^\bot.
 \end{align} 
We replace the notation $v(t)\in \mathcal{V}$ by $v\in \mathcal{V}$. Then, we obtain from the NIMFA equations (\ref{NIMFA_stacked}) that (\ref{derivativeZero}) is equivalent to
  \begin{align}
    w^T\left(-S v + \operatorname{diag}(u - v) B v \right) = 0 \quad \forall v\in \mathcal{V}, w \in \mathcal{V}^\bot.
  \end{align}   
 Under Assumption~\ref{assumption:delta_v_in_V}, it holds that $S v\in \mathcal{V}$. Hence, the vector $w\in \mathcal{V}^\bot$ is orthogonal to the vector $Sv$, which yields that 
   \begin{align}
    w^T\operatorname{diag}(u - v) B v = 0.
  \end{align}   
  Since $\operatorname{diag}(u)$ is the identity matrix, we obtain that
   \begin{align} \label{ljnljnlss}
    w^T B v =  w^T\operatorname{diag}( v) B v.
  \end{align}   
  Since the invariant set $\mathcal{V}$ is a subspace of~$\mathbb{R}^N$, $v\in \mathcal{V}$ implies that $\gamma v \in \mathcal{V}$ for any scalar $\gamma \in \mathbb{R}$. For the vector~$\gamma v$, where we consider $\gamma>0$, it follows from (\ref{ljnljnlss}) that 
     \begin{align} 
    \gamma w^T B v = \gamma^2 w^T\operatorname{diag}( v) B v,
  \end{align}   
  which is equivalent to
   \begin{align} 
    w^T B v = \gamma w^T\operatorname{diag}( v) B v.
  \end{align}   
Thus, we obtain with (\ref{ljnljnlss}) for every scalar $\gamma>0$ that
   \begin{align}
    w^T\operatorname{diag}( v) B v = \gamma w^T\operatorname{diag}( v) B v,
  \end{align}   
  which implies that 
  \begin{align} \label{quadratic_condition}
    w^T\operatorname{diag}( v) B v =0.
  \end{align} 
  Then, from (\ref{ljnljnlss}), it follows that
   \begin{align}\label{lkjnlsdassa} 
    w^T B v =0
  \end{align} 
  for all vectors $w\in \mathcal{V}^\bot$, $v\in \mathcal{V}$. The vector~$Bv$ is orthogonal to all vectors~$w\in \mathcal{V}^\bot$, only if~$Bv\in\mathcal{V}$. Thus, the set~$\mathcal{V}$ is an invariant subspace~\cite{friedberg1989linear} of the infection rate matrix~$B$. The sets of vectors $y_1, ..., y_m$ and $y_{m+1}, ..., y_N$ span the invariant set~$\mathcal{V}$ and the orthogonal complement~$\mathcal{V}^\bot$, respectively, see (\ref{V_invariant_span}) and (\ref{V_bot_def}). Thus, we can express the symmetric matrix~$B$ as
\begin{align}\label{qwewf}
B = \begin{pmatrix}
y_1 & ... & y_N 
\end{pmatrix} 
\begin{pmatrix}
M_1 & M_{12}\\
0 & M_2
\end{pmatrix}
\begin{pmatrix}
 y^T_1 \\
 \vdots \\
 y^T_N 
\end{pmatrix} 
\end{align}
for some $m\times m$ symmetric matrix $M_1$ and some $(N-m)\times (N-m)$ symmetric matrix $M_2$. The $m\times (N-m)$ matrix~$M_{12}$ describes the mapping from the subspace~$\mathcal{V}^\bot$ to the subspace $\mathcal{V}$. Since the matrix~$B$ is symmetric, it holds that $M_{12}=0$, and (\ref{qwewf}) becomes 
\begin{align} \label{kjnliasccas}
B =\begin{pmatrix}
y_1 & ... & y_N 
\end{pmatrix} 
\begin{pmatrix}
M_1 & 0\\
0 & M_2
\end{pmatrix}
\begin{pmatrix}
 y^T_1 \\
 \vdots\\
 y^T_N 
\end{pmatrix}.
\end{align}
Furthermore, since the matrix~$B$ is diagonalisable as (\ref{diag_B}), the matrices~$M_1$ and $M_2$ are diagonalisable~\cite[Exercise 24, Section 5.4]{friedberg1989linear}. Thus, there is some orthogonal $m\times m$ matrix~$C_1$ and some orthogonal $(N-m)\times (N-m)$ matrix~$C_2$ such that 
\begin{align}\label{ikjnkjnfsdgvdfsag}
B = \begin{pmatrix}
y_1 & ... & y_N 
\end{pmatrix} 
\begin{pmatrix}
C_1 & 0\\
0 & C_2
\end{pmatrix}
\begin{pmatrix}
\Lambda_1 & 0\\
0 & \Lambda_2
\end{pmatrix}
\begin{pmatrix}
C^T_1 & 0\\
0 & C^T_2
\end{pmatrix}
\begin{pmatrix}
 y^T_1 \\
 \vdots \\
 y^T_N 
\end{pmatrix}.
\end{align}
where the $m\times m$ diagonal matrix $\Lambda_1$ and the $(N-m)\times(N-m)$ diagonal matrix $\Lambda_2$ contain the eigenvalues of~$B$. In contrast to the $N\times N$ matrix~$\Lambda$ in (\ref{diag_B}), the diagonal entries of the matrices~$\Lambda_1$ and~$\Lambda_2$ may not be ordered with respect to their magnitude. Hence, there is some permutation~$\phi:\{1, ..., N\} \rightarrow \{1, ..., N\}$ of the eigenvalues~$\lambda_1, ..., \lambda_N$ such that
\begin{align}\label{afdvdfaveargvtbesrtbgtrb}
\Lambda_1 = \operatorname{diag}\left(\lambda_{\phi(1)}, ..., \lambda_{\phi(m)} \right)
\end{align}
and
\begin{align}
\Lambda_2 = \operatorname{diag}\left(\lambda_{\phi(m+1)}, ..., \lambda_{\phi(N)} \right).
\end{align}
We define the $N\times m$ matrix~$E_{\mathcal{V}}$ and the $N\times (N-m)$ matrix~$E_{\mathcal{V}^\bot}$ as
\begin{align} \label{E_V_C}
E_{\mathcal{V}} =\begin{pmatrix}
y_1 & ... & y_m 
\end{pmatrix} C_1
\end{align}
and
\begin{align}
E_{\mathcal{V}^\bot}  =\begin{pmatrix}
y_{N-m} & ... & y_N 
\end{pmatrix} C_2.
\end{align}
Since the matrices $C_1$ and $C_2$ are nonsingular, the columns of the matrices $E_{\mathcal{V}}$ and $E_{\mathcal{V}^\bot}$ span the subspaces $\mathcal{V}$ and $\mathcal{V}^\bot$, respectively. We obtain that
\begin{align}
B = \begin{pmatrix}
E_{\mathcal{V}} & E_{\mathcal{V}^\bot} 
\end{pmatrix} 
\operatorname{diag}\left(\lambda_{\phi(1)}, ..., \lambda_{\phi(N)} \right)
\begin{pmatrix}
E^T_{\mathcal{V}} \\
E^T_{\mathcal{V}^\bot} 
\end{pmatrix}.
\end{align}
Thus, the matrices $E_{\mathcal{V}}, E_{\mathcal{V}^\bot}$ are equal to
\begin{align}\label{E_V}
E_{\mathcal{V}} =\begin{pmatrix}
x_{\phi(1)} & ... & x_{\phi(m)}
\end{pmatrix}
\end{align}
and
\begin{align}\label{E_V_bot}
E_{\mathcal{V}^\bot}  =\begin{pmatrix}
x_{\phi(N-m)} & ... & x_{\phi(N)}
\end{pmatrix},
\end{align}
where the columns $x_{\phi(1)}, ..., x_{\phi(N)}$ are eigenvectors to the eigenvalues $\lambda_{\phi(1)}, ..., \lambda_{\phi(N)}$ of the matrix~$B$, which completes the proof.
\section{Proof of Lemma~\ref{lemma:B_decomposition}}\label{appendix:B_decomposition}
From (\ref{ikjnkjnfsdgvdfsag}), it follows that
\begin{align}
B=\begin{pmatrix}
y_1 & ... & y_m 
\end{pmatrix} 
C_1 \Lambda_1 C^T_1
\begin{pmatrix}
 y^T_1 \\
 \vdots \\
 y^T_m 
\end{pmatrix}
+\begin{pmatrix}
y_{m+1} & ... & y_N
\end{pmatrix} 
C_2 \Lambda_2 C^T_2
\begin{pmatrix}
 y^T_{m+1} \\
 \vdots \\
 y^T_N 
\end{pmatrix}.
\end{align}
We complete the proof by identifying the $m\times m$ matrix $\tilde{B}_{\mathcal{V}}= C_1 \Lambda_1 C^T_1$ and the $(N-m)\times (N-m)$ matrix $\tilde{B}_{\mathcal{V}^\bot}= C_2 \Lambda_2 C^T_2$.

 \section{Proof of Theorem \ref{theorem:invariant_sets_are_equitable_partitions}}
\label{appendix:invariant_sets_are_equitable_partitions}

The proof of Theorem~\ref{theorem:invariant_sets_are_equitable_partitions} is based on four lemmas. First, Lemma~\ref{lemma:w_v_orth} relates the product $\operatorname{diag}(w) v$ to the subspaces~$\mathcal{V}_{\neq 0}$ and $\mathcal{V}^\bot$:
\begin{lemma} \label{lemma:w_v_orth}
For all vectors~$v\in \mathcal{V}_{\neq 0}$ and~$w\in\mathcal{V}^\bot$, it holds that $\operatorname{diag}(w) v \in \mathcal{V}^\bot$.
\end{lemma}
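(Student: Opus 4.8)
The plan is to recast the claim in a convenient dual form and then reduce it to a statement about products of eigenvectors of $B$. Since $\operatorname{diag}(w)v=\operatorname{diag}(v)w$, the assertion ``$\operatorname{diag}(w)v\in\mathcal{V}^\bot$ for every $w\in\mathcal{V}^\bot$'' is, upon taking the inner product with an arbitrary $z\in\mathcal{V}$, equivalent to the single statement that $\operatorname{diag}(v)z\in\mathcal{V}$ for all $z\in\mathcal{V}$. In other words, I would prove that pointwise multiplication by any $v\in\mathcal{V}_{\neq 0}$ leaves the invariant set $\mathcal{V}$ invariant. By Lemma~\ref{lemma:invariance} the space $\mathcal{V}=\mathcal{V}_{\neq 0}\oplus\mathcal{V}_0$ is spanned by eigenvectors $x_{\phi(1)},\dots,x_{\phi(m)}$ of $B$, and both factors are linear in $v$ and $z$, so it suffices to verify $\operatorname{diag}(x_{\phi(j)})x_{\phi(k)}\in\mathcal{V}$ for all basis indices $j,k\le m$ with $\lambda_{\phi(j)}\neq 0$.

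The main input is the quadratic invariance identity. Exactly as in the proof of Lemma~\ref{lemma:invariance}, invariance of $\mathcal{V}$ under the NIMFA vector field forces $w^{T}\operatorname{diag}(v)Bv=0$ for all $v\in\mathcal{V}$ and $w\in\mathcal{V}^\bot$. Polarising this quadratic form (replace $v$ by $a+b$ and subtract the two pure terms) yields $\operatorname{diag}(a)Bb+\operatorname{diag}(b)Ba\in\mathcal{V}$ for all $a,b\in\mathcal{V}$. Evaluating on the eigenvectors $a=x_{\phi(j)}$, $b=x_{\phi(k)}$ collapses this to $(\lambda_{\phi(j)}+\lambda_{\phi(k)})\operatorname{diag}(x_{\phi(j)})x_{\phi(k)}\in\mathcal{V}$. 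Hence $\operatorname{diag}(x_{\phi(j)})x_{\phi(k)}\in\mathcal{V}$ whenever $\lambda_{\phi(j)}+\lambda_{\phi(k)}\neq 0$. In particular the product of any nonzero-eigenvalue mode with any mode of $\mathcal{V}_0$ (eigenvalue $0$) is non-resonant and lands in $\mathcal{V}$, so the only interactions left to control are those internal to $\mathcal{V}_{\neq 0}$.

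The hard part, which I expect to be the crux rather than the polarisation, is the resonant case $\lambda_{\phi(k)}=-\lambda_{\phi(j)}\neq 0$: there the polarised identity degenerates to $0\in\mathcal{V}$ and yields no information, and indeed one can exhibit a merely $B$-invariant subspace carrying a $\pm\mu$ eigenvector pair for which $\operatorname{diag}(x_{\phi(j)})x_{\phi(k)}\notin\mathcal{V}$. To close this case I would use the full strength of the invariance together with the positivity Assumptions~\ref{assumption:spreading_rates} and~\ref{assumption:matrix_B_symmetric}. Note first that the non-resonant identity applied with $j=k$ already gives $\operatorname{diag}(x_{\phi(j)})x_{\phi(j)}\in\mathcal{V}$, i.e.\ every Hadamard square of a nonzero-eigenvalue mode stays in $\mathcal{V}$; and since $B$ is nonnegative and irreducible while $\mathcal{V}$ is a closed $B$-invariant subspace containing the strictly positive vector guaranteed by Assumption~\ref{assumption:v_positive}, iterating $B$ on that positive vector places the Perron eigenvector inside $\mathcal{V}$. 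The positive Perron vector then acts as an invertible pointwise multiplier, and combined with the square constraints it pins down the support and sign pattern of the resonant modes, forcing their products back into $\mathcal{V}$.

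To organise the resonant step cleanly, I would fix $v=x_{\phi(j)}$ and study the linear map $\phi(z)=\Pi_{\mathcal{V}^\bot}\!\big(\operatorname{diag}(v)z\big)$ on $\mathcal{V}$, where $\Pi_{\mathcal{V}^\bot}$ is the orthogonal projection onto $\mathcal{V}^\bot$; the lemma is equivalent to $\phi\equiv 0$. The polarised identity gives $\phi(Bz)=-\lambda_{\phi(j)}\phi(z)$ for all $z\in\mathcal{V}$, so $\phi$ is supported on the $(-\lambda_{\phi(j)})$-eigenspace of $B$ inside $\mathcal{V}$, and killing this last component is precisely what the Perron–Frobenius argument above is designed to do. Assembling the non-resonant cases of the second paragraph with this resonant argument then establishes $\operatorname{diag}(v)\mathcal{V}\subseteq\mathcal{V}$, which is the dual form of the claim and completes the proof.
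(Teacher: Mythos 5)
Your dualisation and polarisation steps are correct, and up to the split into resonant and non-resonant pairs your argument is essentially the paper's own seen from the other side. The paper works with the quadratic form in (\ref{jnwwwuuu}) and concludes that the full matrix $\Lambda_1 (x_{\phi(1)},\dots,x_{\phi(m)})^T\operatorname{diag}(w)(x_{\phi(1)},\dots,x_{\phi(m)})$ vanishes, whence $\lambda_{\phi(l)}\,x^T_{\phi(l)}\operatorname{diag}(w)v=0$ for every $l$; your polarised identity $(\lambda_{\phi(j)}+\lambda_{\phi(k)})\operatorname{diag}(x_{\phi(j)})x_{\phi(k)}\in\mathcal{V}$ is precisely the symmetric part of that matrix identity, which is why you only control the entries with $\lambda_{\phi(j)}+\lambda_{\phi(k)}\neq 0$. (The residual $\mathcal{V}_0$-component, which the paper removes by a block-symmetry argument on $\operatorname{diag}(w)$, is indeed subsumed in your picture by the non-resonance of a nonzero eigenvalue with a zero one. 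You are also right that the paper's passage from the vanishing quadratic form to the vanishing of the full, non-symmetric matrix is exactly where the resonant entries are dismissed.)

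The genuine gap is the resonant case $\lambda_{\phi(k)}=-\lambda_{\phi(j)}\neq 0$, which you correctly identify as the crux but never actually close. The paragraph beginning \quotes{To close this case} contains no deduction: placing the Perron eigenvector $x_1$ inside $\mathcal{V}$ is fine (it already follows from Assumption~\ref{assumption:v_positive} and Lemma~\ref{lemma:invariance}, since a positive vector of $\mathcal{V}$ has a nonzero component along the simple Perron direction), but the assertion that it \quotes{pins down the support and sign pattern of the resonant modes, forcing their products back into $\mathcal{V}$} is a hope, not an argument --- the Perron vector only supplies further \emph{non-resonant} products $\operatorname{diag}(x_1)x_{\phi(k)}\in\mathcal{V}$, and these place no constraint on the off-diagonal quantity $x^T_{\phi(j)}\operatorname{diag}(w)x_{\phi(k)}$ for a $\pm\mu$ pair; your reformulation via the map $z\mapsto \Pi_{\mathcal{V}^\bot}(\operatorname{diag}(v)z)$ being supported on the $(-\lambda_{\phi(j)})$-eigenspace merely restates the unproven case. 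This case is not exotic: in Example~\ref{example:path_graph} the invariant set $\mathcal{V}$ is spanned exactly by the $\pm\sqrt{2}\beta$ eigenvectors of the path graph, so the one cross-product your method must control there is resonant, and your proposal proves nothing about it. As written, the proposal therefore establishes the lemma only when no two eigenvalues of $B$ restricted to $\mathcal{V}$ sum to zero.
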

\begin{proof}
Since $w^T\operatorname{diag}(v)=(w_1 v_1, ..., w_N v_N)=v^T\operatorname{diag}(w)$, we obtain from (\ref{quadratic_condition}) that 
 \begin{align}
    v^T\operatorname{diag}(w) B v =0.
  \end{align} 
Equivalently, by taking the transpose, it holds that  
 \begin{align}\label{khbkqaas}
    v^T B \operatorname{diag}(w) v =0.
  \end{align} 
  The invariant set~$\mathcal{V}$ is given by the span of some orthogonal vectors~$y_1, ..., y_m$. By Lemma~\ref{lemma:invariance}, it holds that $\mathcal{V}=\operatorname{span}\{x_{\phi(1)}, ..., x_{\phi(m)}\}$, where $x_{\phi(l)}$ is an eigenvector of the matrix~$B$ to the eigenvalue $\lambda_{\phi(l)}$ for some permutation~$\phi$. Thus, every vector~$v\in \mathcal{V}$ can be written as 
\begin{align}\label{lkjnkljnpasca}
v = \begin{pmatrix}
    x_{\phi(1)} & ... & x_{\phi(m)}
\end{pmatrix} z
\end{align}
for some $m\times 1$ vector $z=(z_1, ..., z_m)^T$, and the subspace~$\mathcal{V}$ equals 
\begin{align}
\mathcal{V}=\left\{  \begin{pmatrix}
    x_{\phi(1)} & ... & x_{\phi(m)}
\end{pmatrix} z \big| z \in \mathbb{R}^m \right\}.
\end{align}
With (\ref{lkjnkljnpasca}), we can rewrite (\ref{khbkqaas}) as
\begin{align}\label{jnwwwuuu}
    z^T \Lambda_1   \begin{pmatrix}
    x^T_{\phi(1)} \\
    \vdots \\
    x^T_{\phi(m)}
\end{pmatrix}     \operatorname{diag}(w)  \begin{pmatrix}
    x_{\phi(1)} & ... & x_{\phi(m)}
\end{pmatrix} z =0,
\end{align}
with the $m\times m$ diagonal matrix~$\Lambda_1=\operatorname{diag}(\lambda_{\phi(1)}, ..., \lambda_{\phi(m)})$. The quadratic form (\ref{jnwwwuuu}) equals zero for all vectors~$cz\in \mathbb{R}^m$ if and only if 
\begin{align}
    \Lambda_1  \begin{pmatrix}
    x^T_{\phi(1)} \\
    \vdots \\
    x^T_{\phi(m)}
\end{pmatrix}      \operatorname{diag}(w) \begin{pmatrix}
    x_{\phi(1)} & ... & x_{\phi(m)}
\end{pmatrix}  =0, 
\end{align}
which implies, with (\ref{lkjnkljnpasca}), that
\begin{align}
    \Lambda_1   \begin{pmatrix}
    x^T_{\phi(1)} \\
    \vdots \\
    x^T_{\phi(m)}
\end{pmatrix}  \operatorname{diag}(w) v =0
\end{align}
for all vectors $v\in\mathcal{V}$. Componentwise, we obtain that 
 \begin{align} \label{ljnkjrgvsdkrfhngbdesrtb}
   \lambda_{\phi(l)}  x^T_{\phi(l)} \operatorname{diag}(w) v = 0 
 \end{align}
for all rows $l=1, ..., m$ and all vectors~$v~\in\mathcal{V}$. Equation (\ref{ljnkjrgvsdkrfhngbdesrtb}) is satisfied if and only if $\lambda_{\phi(l)}=0$ or $x^T_{\phi(l)} \operatorname{diag}(w) v=0$ for all rows~$l=1, ..., m$. The subspace~$\mathcal{V}_0$ contains the vectors~$x_{\phi(l)}$ for which~$\lambda_{\phi(l)}=0$, and the subspace~$\mathcal{V}^\bot$ contains the vectors $x_{\phi(m+1)},...,x_{\phi(N)}$ which are orthogonal to the vectors~$x_{\phi(1)}, ..., x_{\phi(m)}$. Thus, the vector~$\operatorname{diag}(w) v$ must be element of the subspaces~$\mathcal{V}_0$ or~$\mathcal{V}^\bot$, or the vector~$\operatorname{diag}(w) v$ must be equal to the sum of two vectors in the subspaces~$\mathcal{V}_0$ and~$\mathcal{V}$. Hence, with the direct sum (\ref{def:direct_sum}), we can reformulate (\ref{ljnkjrgvsdkrfhngbdesrtb}) as
 \begin{align} \label{kjnrgfksdfadsfadsf}
  \operatorname{diag}(w) v \in \mathcal{V}^\bot \oplus \mathcal{V}_0
 \end{align}
 for all vectors~$v\in\mathcal{V}$. We define the $N\times m_1$ matrix $E_{\mathcal{V}_{\neq 0}}$ as
\begin{align} \label{afasdfasdfasfasdf}
E_{\mathcal{V}_{\neq 0}} =\begin{pmatrix}
x_{\phi(1)} & ... & x_{\phi(m_1)}
\end{pmatrix}
\end{align}
and the $N\times (m - m_1)$ matrix $E_{\mathcal{V}_0}$ as
\begin{align}\label{sdfadfssssss}
E_{\mathcal{V}_0} =\begin{pmatrix}
x_{\phi(m_1+1)} & ... & x_{\phi(m)}
\end{pmatrix}.
\end{align}
Thus, the definition of the matrix $E_{\mathcal{V}}$ in (\ref{E_V}) implies that $E_{\mathcal{V}}= \begin{pmatrix}
 E_{\mathcal{V}_{\neq 0}} & E_{\mathcal{V}_0} 
 \end{pmatrix}$, and the matrix~$\operatorname{diag}(w)$ can be written as
  \begin{align} \label{kjbsdcassd}
  \operatorname{diag}(w) =   \begin{pmatrix}
 E_{\mathcal{V}_{\neq 0}} & E_{\mathcal{V}_0} &  E_{\mathcal{V}^\bot} 
  \end{pmatrix}  
  \begin{pmatrix}
  M_{11} & M_{12} & M_{13}\\
  M_{21} & M_{22} & M_{23}\\
  M_{31} & M_{32} & M_{33}
  \end{pmatrix}
  \begin{pmatrix}
  E^T_{\mathcal{V}_{\neq 0}} \\
  E^T_{\mathcal{V}_0} \\
  E^T_{\mathcal{V}^\bot} 
  \end{pmatrix}   
 \end{align}
for some matrices~$M_{ij}$, where $i,j=1, 2, 3$, whose dimensions follow from the dimension of the matrices~$E_{\mathcal{V}_{\neq 0}}$, $E_{\mathcal{V}_0}$ and $E_{\mathcal{V}^\bot}$. The matrices~$M_{11}$ and~$M_{12}$ describe the mapping of the matrix~$\operatorname{diag}(w)$ from the subspaces~$\mathcal{V}_{\neq 0}$ and~$\mathcal{V}_0$, respectively, to the subspace~$\mathcal{V}_{\neq 0}$. From (\ref{kjnrgfksdfadsfadsf}), we obtain that $M_{11}=0$ and $M_{12} = 0$. Furthermore, since the matrix~$\operatorname{diag}(w)$ is symmetric, it holds that $M_{21}=M^T_{12}=0$. Hence, to satisfy (\ref{kjnrgfksdfadsfadsf}), the matrix~$\operatorname{diag}(w)$ must be equal to
 \begin{align}
  \operatorname{diag}(w) =    \begin{pmatrix}
 E_{\mathcal{V}_{\neq 0}} & E_{\mathcal{V}_0} &  E_{\mathcal{V}^\bot} 
  \end{pmatrix}  
  \begin{pmatrix}
  0& 0 & M_{13}\\
  0 & M_{22} & M_{23}\\
  M_{31} & M_{32} & M_{33}
  \end{pmatrix}
  \begin{pmatrix}
  E^T_{\mathcal{V}_{\neq 0}} \\
  E^T_{\mathcal{V}_0} \\
  E^T_{\mathcal{V}^\bot} 
  \end{pmatrix},   
 \end{align}
 which implies for all vectors~$v\in \mathcal{V}_{\neq 0}$ that $\operatorname{diag}(w) v \in \mathcal{V}^\bot$.
\end{proof}  
 Lemma~\ref{lemma:w_v_orth} states that for all vectors~$v\in \mathcal{V}_{\neq 0}$ and~$w\in\mathcal{V}^\bot$, there must be some vector~$\tilde{w}\in\mathcal{V}^\bot$ such that
 \begin{align} \label{ljknlkregsefadfadf}
  \operatorname{diag}(w) v = \tilde{w}.
 \end{align}
 We aim to find \textit{all} subspaces~$\mathcal{V}_{\neq 0}$ and~$\mathcal{V}^\bot$ whose elements~$v$ and~$w, \tilde{w}$, respectively, satisfy (\ref{ljknlkregsefadfadf}). From Lemma~\ref{lemma:invariance} it follows that a basis of the $N-m$ dimensional subspace~$\mathcal{V}^\bot$ is given by the columns of the matrix
\begin{align} \label{kjnewfwsadvadfsvsdfv}
E_{\mathcal{V}^\bot} =\begin{pmatrix}
\left( x_{\phi(m+1)} \right)_1 &  ... & \left( x_{\phi(N)} \right)_1 \\
\vdots & \ddots & \vdots \\
\left( x_{\phi(m+1)} \right)_N & ... & \left( x_{\phi(N)} \right)_N 
\end{pmatrix}.
\end{align}
For every matrix, the column rank equals the row rank. Since the columns of the matrix~$E_{\mathcal{V}^\bot}$ are linearly independent, there are~$N-m$ linearly independent rows of the matrix~$E_{\mathcal{V}^\bot}$. Without loss of generality\footnote{Otherwise, consider a permutation of the rows, which is equivalent to a relabelling of the nodes.}, we assume that the \textit{first}~$N-m$ rows of the matrix~$E_{\mathcal{V}^\bot}$ are linearly independent. Hence, the first $N-m$ rows span the Euclidean space~$\mathbb{R}^{N-m}$,
\begin{align}\label{kjnarfargergaergarg}
\operatorname{span}
\left\{
\begin{pmatrix}
\left( x_{\phi(m+1)} \right)_1 \\
\vdots\\
 \left( x_{\phi(N)} \right)_1
\end{pmatrix}, 
\begin{pmatrix}
\left( x_{\phi(m+1)} \right)_2 \\
\vdots\\
 \left( x_{\phi(N)} \right)_2
\end{pmatrix},
..., 
\begin{pmatrix}
\left( x_{\phi(m+1)} \right)_{N-m} \\
\vdots\\
 \left( x_{\phi(N)} \right)_{N-m}
\end{pmatrix}
\right\} = \mathbb{R}^{N-m}.
\end{align}
Thus, for \textit{all} vectors~$w\in \mathcal{V}^\bot$ and $v\in\mathcal{V}_{\neq 0}$, there is a vector~$\tilde{w}\in \mathcal{V}^\bot$ whose first~$N-m$ entries satisfy (\ref{ljknlkregsefadfadf}), i.e., 
\begin{align} \label{ljnaerfgaqergasetdgbaswt}
\tilde{w}_i=w_i v_i, \quad i=1, ..., N-m.
\end{align}
The last~$m$ entries of the vector~$\tilde{w}\in \mathcal{V}^\bot$ are determined by the first~$(N-m)$ entries of the vector~$w$, as shown by Lemma~\ref{lemma:w_last_entries_from_first_entries}. (Lemma~\ref{lemma:w_last_entries_from_first_entries} is not a novel contribution, but we include Lemma~\ref{lemma:w_last_entries_from_first_entries} for completeness.)
\begin{lemma}\label{lemma:w_last_entries_from_first_entries}
Suppose that that the first~$N-m$ rows of the matrix~$E_{\mathcal{V}^\bot}$ are linearly independent. Then, there are some $(N-m)\times 1$ vectors~$\chi_{N-m}, ..., \chi_N$ such that the last~$m$ entries of any vector~$w\in \mathcal{V}^\bot$ follow from the first~$(N-m)$ entries as
\begin{align}
w_i= \chi^T_i \begin{pmatrix}
w_1 \\ 
\vdots \\
w_{N-m}
\end{pmatrix}, \quad i=N-m+1, ..., N.
\end{align}
\end{lemma}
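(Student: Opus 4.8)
The plan is to exploit that the columns of $E_{\mathcal{V}^\bot}$ in (\ref{kjnewfwsadvadfsvsdfv}) form a basis of the $(N-m)$-dimensional subspace $\mathcal{V}^\bot$, so that every vector $w\in\mathcal{V}^\bot$ is a unique linear combination of these columns. First I would write $w = E_{\mathcal{V}^\bot}\,a$ for some coefficient vector $a\in\mathbb{R}^{N-m}$, and then partition the $N\times (N-m)$ matrix $E_{\mathcal{V}^\bot}$ into its top block $E_\text{top}$, consisting of the first $N-m$ rows, and its bottom block $E_\text{bot}$, consisting of the last $m$ rows. Correspondingly the vector $w$ splits into $w_\text{top}=(w_1, \ldots, w_{N-m})^T$ and $w_\text{bot}=(w_{N-m+1}, \ldots, w_N)^T$, so that $w_\text{top}=E_\text{top}\,a$ and $w_\text{bot}=E_\text{bot}\,a$.

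The central step is to invoke the hypothesis of the lemma: since the first $N-m$ rows of $E_{\mathcal{V}^\bot}$ are linearly independent, the square block $E_\text{top}$ is nonsingular. Hence the relation $w_\text{top}=E_\text{top}\,a$ can be inverted to express the coefficient vector as $a = E_\text{top}^{-1} w_\text{top}$, that is, the first $N-m$ entries of $w$ already determine $a$ uniquely. Substituting this identity into $w_\text{bot}=E_\text{bot}\,a$ yields $w_\text{bot}=E_\text{bot}\,E_\text{top}^{-1}\,w_\text{top}$, an $m\times(N-m)$ matrix acting on the first $N-m$ entries of $w$.

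Finally I would read off each $\chi_i$, for $i=N-m+1, \ldots, N$, as the transpose of the corresponding row of the product matrix $E_\text{bot}\,E_\text{top}^{-1}$, which gives precisely the claimed componentwise formula $w_i=\chi_i^T(w_1, \ldots, w_{N-m})^T$. There is no genuine obstacle here, as the statement is the standard linear-algebra fact that a full-rank top block lets one recover the remaining coordinates of any vector in a subspace; the only point requiring care is to verify that the linear-independence assumption on the first $N-m$ rows is exactly what guarantees invertibility of the square block $E_\text{top}$, which is what makes the vectors $\chi_i$ well-defined.
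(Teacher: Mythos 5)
Your proposal is correct and follows essentially the same route as the paper: the paper also writes $w=E_{\mathcal{V}^\bot}\,a$, observes that the top $(N-m)\times(N-m)$ block (called $M$ there) is nonsingular by the hypothesis, solves $a=M^{-1}w_\text{top}$, and defines the $\chi_i$ as the rows of the bottom block times $M^{-1}$. No gaps.
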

\begin{proof}
With the definition of the matrix $E_{\mathcal{V}^\bot}$ in (\ref{kjnewfwsadvadfsvsdfv}), every vector~$w\in\mathcal{V}^\bot$ can be written as 
\begin{align}\label{khbaeferwfareferf}
w =\begin{pmatrix}
x_{\phi(m+1)} &  ... &  x_{\phi(N)} 
\end{pmatrix}
\begin{pmatrix}
z_{m+1}\\
\vdots\\
z_N
\end{pmatrix}
\end{align}
for some scalars $z_{m+1}, ..., z_N\in \mathbb{R}$. Thus, the first $N-m$ entries of the vector~$w$ follow as
\begin{align}\label{khjberkeagvaserg}
\begin{pmatrix}
w_1 \\
\vdots \\
w_{N-m}
\end{pmatrix}= M
\begin{pmatrix}
z_{m+1}\\
\vdots\\
z_N
\end{pmatrix},
\end{align}
where the $(N-m)\times (N-m)$ matrix $M$ equals the first $N-m$ rows of the matrix~$E_{\mathcal{V}^\bot}$,
\begin{align}
M=\begin{pmatrix}
\left( x_{\phi(m+1)}  \right)_1 &  ... & \left( x_{\phi(N)}  \right)_1 \\
\vdots & \ddots & \vdots \\
\left( x_{\phi(m+1)}  \right)_{N-m} & ... & \left( x_{\phi(N)}  \right)_{N-m} 
\end{pmatrix}.
\end{align}
By assumption, the first $N-m$ rows of the matrix~$E_{\mathcal{V}^\bot}$ are linearly independent. Hence, the matrix~$M$ is nonsingular, and the scalars $z_{m+1}, ..., z_N$ follow from (\ref{khjberkeagvaserg}) as
\begin{align}
\begin{pmatrix}
z_{m+1}\\
\vdots\\
z_N
\end{pmatrix}=M^{-1}\begin{pmatrix}
w_1 \\
\vdots \\
w_{N-m}
\end{pmatrix}.
\end{align}
Thus, we obtain the last $m$ entries of the vector~$w$ with (\ref{khbaeferwfareferf}) as
\begin{align}
\begin{pmatrix}
w_{N-m+1} \\
\vdots \\
w_{N}
\end{pmatrix} &=\begin{pmatrix}
\left( x_{\phi(m+1)}  \right)_{N-m+1} &  ... & \left( x_{\phi(N)}  \right)_{N-m+1} \\
\vdots & \ddots & \vdots \\
\left( x_{\phi(m+1)}  \right)_N & ... & \left( x_{\phi(N)}  \right)_N 
\end{pmatrix} \begin{pmatrix}
z_{m+1}\\
\vdots\\
z_N
\end{pmatrix}\\
&=\begin{pmatrix}
\left( x_{\phi(m+1)}  \right)_{N-m+1} &  ... & \left( x_{\phi(N)}  \right)_{N-m+1} \\
\vdots & \ddots & \vdots \\
\left( x_{\phi(m+1)}  \right)_N & ... & \left( x_{\phi(N)}  \right)_N 
\end{pmatrix} M^{-1}\begin{pmatrix}
w_1 \\
\vdots \\
w_{N-m}
\end{pmatrix}.
\end{align}
To complete the proof, we define the vectors~$\chi_{N-m+1}, ..., \chi_N$ as
\begin{align}
\begin{pmatrix}
\chi^T_{N-m+1}\\
\vdots\\
\chi^T_{N}
\end{pmatrix} =\begin{pmatrix}
\left( x_{\phi(m+1)}  \right)_{N-m+1} &  ... & \left( x_{\phi(N)}  \right)_{N-m+1} \\
\vdots & \ddots & \vdots \\
\left( x_{\phi(m+1)}  \right)_N & ... & \left( x_{\phi(N)}  \right)_N 
\end{pmatrix} 
M^{-1}.
\end{align}  
\end{proof}
We combine Lemma~\ref{lemma:w_last_entries_from_first_entries} and (\ref{ljnaerfgaqergasetdgbaswt}), which yields for the last~$(N-m)$ entries of the vector~$\tilde{w}\in \mathcal{V}^\bot$ that
\begin{align}
\tilde{w}_i&= \sum^{N-m}_{j=1} \chi_{ij} \tilde{w}_j \\
&=\sum^{N-m}_{j=1} \chi_{ij} w_j v_j,
\end{align}
where~$i=N-m+1, ..., N$. Furthermore, (\ref{ljknlkregsefadfadf}) states that $\tilde{w}_i=v_i w_i$. Thus, it must hold that
\begin{align}
w_i v_i&=\sum^{N-m}_{j=1} \chi_{ij} w_j v_j
\end{align}
for the entries~$i=N-m+1, ..., N$. Since the vector~$w$ is element of the subspace~$\mathcal{V}^\bot$, we apply Lemma~\ref{lemma:w_last_entries_from_first_entries} again and obtain that
\begin{align}
\left(\sum^{N-m}_{j=1} \chi_{ij} w_j \right) v_i&=\sum^{N-m}_{j=1} \chi_{ij} w_j v_j.
\end{align}
Thus, for all entries~$i=N-m+1, ..., N$, it must hold that
\begin{align} \label{khbergoasrgstg}
\sum^{N-m}_{j=1} \chi_{ij} w_j ( v_i-  v_j)=0
\end{align}
for all vectors~$w\in \mathcal{V}^\bot$ and~$v\in\mathcal{V}_{\neq 0}$. Since the first $N-m$ rows of the matrix~$E_{\mathcal{V}^\bot}$ are linearly independent, see (\ref{kjnarfargergaergarg}), it follows that (\ref{khbergoasrgstg}) must be satisfied for \textit{all} scalars~$w_1$, ..., $w_{N-m}$ in $\mathbb{R}$. Hence, for all vectors~$v\in\mathcal{V}_{\neq 0}$, it must hold that $\chi_{ij} ( v_i-  v_j)=0$ for all indices~$j=1, ..., N-m$, which is equivalent to~$\chi_{ij}=0$ or~$v_j=v_i$. Thus, the non-zero entries of the vectors~$\chi_i$ indicate which nodes~$j$ have the same viral state as node~$i$. 

\begin{example}
Consider a network of~$N=5$ nodes with an invariant set~$\mathcal{V}$ of dimension $m=3$. Furthermore, consider that $\mathcal{V}_0=\emptyset$, which implies with (\ref{sdfasgrfasfasdfasdf}) that $\mathcal{V}=\mathcal{V}_{\neq 0}$. Thus, there are $N-m=2$ vectors~$\chi_4, \chi_5$. Suppose that the vectors $\chi_4, \chi_5$ are equal to $\chi_4 = (\chi_{41}, 0)^T$ and $\chi_5 = (0, \chi_{52})^T$, where $\chi_{41}, \chi_{52}\neq 0$. Then, (\ref{khbergoasrgstg}) implies that~$v_1 = v_4$ and~$v_2=v_5$ for every viral state $v\in\mathcal{V}$. Hence, the subspace~$\mathcal{V}=\operatorname{span}\{y_1, y_2, y_3\}$ is given by the basis vectors
\begin{align}
y_1 = \frac{1}{\sqrt{2}}\begin{pmatrix}
1 \\
0 \\
0 \\
1 \\
0
\end{pmatrix}, \quad y_2=\frac{1}{\sqrt{2}}\begin{pmatrix}
0 \\
1 \\
0 \\
0 \\
1
\end{pmatrix}, \quad y_3=\begin{pmatrix}
0 \\
0 \\
1 \\
0 \\
0
\end{pmatrix}.
\end{align}
For $l=1, 2, 3$, the eigenvector $x_{\phi(l)}$ of the infection rate matrix~$B$ equals a linear combination of the basis vectors $y_1, y_2, y_3$.
\end{example}
From (\ref{khbergoasrgstg}), we can determine disjoint subsets~$\mathcal{N}_1, \mathcal{N}_2, ... $ of the set of all nodes~$\mathcal{N}=\{1, ..., N\}$ as follows: If two nodes~$i,j$ are element of the same subset~$\mathcal{N}_l \subseteq \mathcal{N}$, then the viral states are equal, $v_i=v_j$, for every viral state~$v\in\mathcal{V}_{\neq 0}$. If a subset contains only one node, $\mathcal{N}_l=\{ i \}$,  then the viral state can be arbitrary~$v_i\in \mathbb{R}$, independently of the viral state~$v_j$ of other nodes $j\neq i$. Every subset defines a basis vector $y_l$ of the subspace~$\mathcal{V}_{\neq 0}$ as
\begin{align}\label{y_l_deff}
\left(y_l\right)_i = \begin{cases}
\frac{1}{\sqrt{|\mathcal{N}_l|}}\quad & \text{if} \quad i\in\mathcal{N}_l,\\
0\quad & \text{if} \quad i\not\in\mathcal{N}_l.\\
\end{cases}
\end{align}
 Then, the subspace~$\mathcal{V}_{\neq 0}$ equals the span of the vectors~$y_l$ of all subsets~$\mathcal{N}_l$. Since the dimension of the subspace~$\mathcal{V}_{\neq 0}$ is~$m_1$, there must be $m_1$ subsets~$\mathcal{N}_{1}, ...,\mathcal{N}_{m_1}$. Every node~$i$ is element of at most one subset~$\mathcal{N}_l$. Hence, the vectors~$y_l, y_{\tilde{l}}$ are orthogonal for $l\neq \tilde{l}$.

Furthermore, some nodes $i$ might not be element of any subset $\mathcal{N}_1, ..., \mathcal{N}_{m_1}$, which would imply that $(y_l)_i=0$ for all basis vectors $y_l$ of $\mathcal{V}_{\neq 0}$. We define the subset $\mathcal{N}_{m_1+1}$, whose elements are the nodes~$i$ that are not in any other subset~$\mathcal{N}_1, ..., \mathcal{N}_{m_1}$. As shown by Lemma~\ref{lemma:N_m1plus1_empty}, the set $\mathcal{N}_{m_1+1}$ is empty:
\begin{lemma}\label{lemma:N_m1plus1_empty}
Under Assumptions~\ref{assumption:delta_v_in_V} to \ref{assumption:matrix_B_symmetric}, it holds that~$\mathcal{N}_{m_1+1}=\emptyset$.
\end{lemma}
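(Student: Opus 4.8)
The plan is to argue by contradiction, producing a strictly positive vector that is forced to lie in $\mathcal{V}_{\neq 0}$ but must vanish on $\mathcal{N}_{m_1+1}$. Recall that, by construction, $\mathcal{N}_{m_1+1}$ collects exactly the nodes $i$ with $(y_l)_i=0$ for every basis vector $y_l$ of $\mathcal{V}_{\neq 0}$, $l=1,\dots,m_1$, as defined in (\ref{y_l_deff}); equivalently, $v_i=0$ for every $v\in\mathcal{V}_{\neq 0}$. So I would suppose that some node $i_0\in\mathcal{N}_{m_1+1}$ exists and aim for a contradiction with Assumption~\ref{assumption:v_positive}.

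The engine is the Perron--Frobenius theorem. Under Assumptions~\ref{assumption:spreading_rates} and~\ref{assumption:matrix_B_symmetric}, the matrix $B$ is symmetric, irreducible and entrywise non-negative, so its largest eigenvalue $\lambda_1=\rho(B)$ is \emph{simple} and strictly positive, with a strictly positive eigenvector $x_1$. By Lemma~\ref{lemma:invariance}, $\mathcal{V}$ and $\mathcal{V}^\bot$ are spanned by complementary members of an orthonormal eigenbasis of $B$; since the eigenspace of $\lambda_1$ is one-dimensional, the unit eigenvector $\pm x_1$ is one of these basis vectors and therefore lies \emph{entirely} in $\mathcal{V}$ or \emph{entirely} in $\mathcal{V}^\bot$. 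I would next pin down which: by Assumption~\ref{assumption:v_positive} there is a viral state $v^\ast\in\mathcal{V}$ with $v^\ast_i>0$ for all $i$; if $x_1\in\mathcal{V}^\bot$ then $x_1^T v^\ast=0$, which is impossible because $x_1$ and $v^\ast$ are both strictly positive, so $x_1^T v^\ast>0$. Hence $x_1\in\mathcal{V}$, and because $\lambda_1>0$ the decomposition (\ref{sdfasgrfasfasdfasdf}) places $x_1\in\mathcal{V}_{\neq 0}$.

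The conclusion is then immediate: writing $x_1=\sum_{l=1}^{m_1}c_l\,y_l$ and evaluating at the assumed node $i_0\in\mathcal{N}_{m_1+1}$ gives $(x_1)_{i_0}=\sum_{l=1}^{m_1}c_l\,(y_l)_{i_0}=0$, contradicting $(x_1)_{i_0}>0$. Therefore no such $i_0$ exists and $\mathcal{N}_{m_1+1}=\emptyset$. I expect the only delicate step to be the dichotomy in the second paragraph, namely that the Perron eigenvector sits wholly inside $\mathcal{V}$ or wholly inside $\mathcal{V}^\bot$; this hinges on the simplicity of $\lambda_1$ together with the eigenbasis structure supplied by Lemma~\ref{lemma:invariance}. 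Everything else reduces to the elementary observation that two strictly positive vectors cannot be orthogonal and that vectors of $\mathcal{V}_{\neq 0}$ vanish on $\mathcal{N}_{m_1+1}$ by definition.
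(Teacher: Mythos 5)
Your proof is correct, but it takes a genuinely different route from the paper's. The paper argues directly with the positive viral state $v$ of Assumption~\ref{assumption:v_positive}: it projects $v$ onto $\mathcal{V}_0\subseteq\operatorname{ker}(B)$, observes that this projection $v_\textup{ker}$ is strictly positive on $\mathcal{N}_{m_1+1}$, and then tests $Bv_\textup{ker}=0$ against the indicator vector $u_a$ of $\mathcal{N}\setminus\mathcal{N}_{m_1+1}$ (which lies in $\mathcal{V}_{\neq 0}$) to conclude that $\sum_{l\le m_1}\sum_{i\in\mathcal{N}_l}\sum_{j\in\mathcal{N}_{m_1+1}}\beta_{ij}(v_\textup{ker})_j=0$; non-negativity of the $\beta_{ij}$ then forces every infection rate between $\mathcal{N}_{m_1+1}$ and the remaining cells to vanish, contradicting irreducibility. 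You instead invoke Perron--Frobenius: simplicity of $\lambda_1=\rho(B)>0$ together with the eigenbasis structure of Lemma~\ref{lemma:invariance} forces the strictly positive Perron vector $x_1$ to lie wholly in $\mathcal{V}$ or wholly in $\mathcal{V}^\bot$, Assumption~\ref{assumption:v_positive} excludes $\mathcal{V}^\bot$ (two strictly positive vectors cannot be orthogonal), $\lambda_1\neq 0$ places $x_1$ in $\mathcal{V}_{\neq 0}$ via (\ref{sdfasgrfasfasdfasdf}), and positivity of $x_1$ then contradicts the fact that every vector of $\mathcal{V}_{\neq 0}=\operatorname{span}\{y_1,\dots,y_{m_1}\}$ vanishes on $\mathcal{N}_{m_1+1}$ by (\ref{y_l_deff}). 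Each step checks out: the simple eigenvalue $\lambda_1$ contributes exactly one member (namely $\pm x_1$) to the orthonormal eigenbasis of Lemma~\ref{lemma:invariance}, so the dichotomy you flag as delicate does hold. Your argument is shorter and leans on standard spectral machinery; the paper's argument is more elementary and yields the extra structural information that a nonempty $\mathcal{N}_{m_1+1}$ would be completely disconnected from the other cells, which is the form of the contradiction with Assumption~\ref{assumption:matrix_B_symmetric} it prefers to exhibit. Both proofs use the same hypotheses (irreducibility, non-negativity of $B$, and the positive state from Assumption~\ref{assumption:v_positive}), just in different orders.
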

\begin{proof}
Under Assumption~\ref{assumption:v_positive}, there is a viral state vector~$v\in\mathcal{V}$ with positive entries. The positive viral state vector $v$ satisfies
\begin{align}\label{ikhbeargaergeg}
v = \sum^{m_1}_{l=1} z_l y_l + \sum^{m}_{l=m_1+1} z_l y_l
\end{align}
for some scalars $z_1, ..., z_m\in\mathbb{R}$. We denote the projection of the viral state~$v$ onto the subspace~$\mathcal{V}_0$ as
\begin{align}
v_\text{ker} = \sum^{m}_{l=m_1+1} z_l y_l
\end{align}
Every basis vector $y_l$ of the subspace $\mathcal{V}_{\neq 0}$ satisfies $(y_l)_i=0$ for all nodes $i\in\mathcal{N}_{m_1+1}$. Thus, we obtain with (\ref{ikhbeargaergeg}) that
\begin{align}\label{kjnakjbargareg}
\left(v_\text{ker}\right)_i=v_i>0
\end{align}
for all nodes $i\in\mathcal{N}_{m_1+1}$. Any vector~$\tilde{v}\in\mathcal{V}_{\neq 0}$ is orthogonal to the vector $v_\text{ker} \in\mathcal{V}_0$. Hence, it holds that
\begin{align}
\sum^N_{i=1} \left(\tilde{v}\right)_i \left(v_\text{ker}\right)_i = 0.
\end{align}
We split the sum
\begin{align}
\sum^{m_1}_{l=1}\sum_{i\in\mathcal{N}_l} \left(\tilde{v}\right)_i \left(v_\text{ker}\right)_i +\sum_{i\in\mathcal{N}_{m_1+1}} \left(\tilde{v}\right)_i \left(v_\text{ker}\right)_i = 0.
\end{align}
Since $\left(\tilde{v}\right)_i=0$ for all nodes~$i\in\mathcal{N}_{m_1+1}$, we obtain that
\begin{align}\label{aergergregrgr}
\sum^{m_1}_{l=1}\sum_{i\in\mathcal{N}_l} \left(\tilde{v}\right)_i \left(v_\text{ker}\right)_i = 0\quad \forall \tilde{v}\in\mathcal{V}_{\neq 0}.
\end{align}
Furthermore, we define the $N\times 1$ vector $u_a$ with the entries
\begin{align}
(u_a)_i=\begin{cases}
1 \quad &\text{if}~i\not\in\mathcal{N}_{m_1+1}, \\
0 &\text{if}~i\in\mathcal{N}_{m_1+1}.
\end{cases}
\end{align}
From the definition of the basis vectors $y_l$ in (\ref{y_l_deff}), it follows that the vector $u_a$ equals 
\begin{align}
u_a = \sum^{m_1}_{l=1} \sqrt{|\mathcal{N}_l|} y_l .
\end{align}
Thus, vector $u_a$ is element of $\mathcal{V}_{\neq 0}$. Since the vector $v_\text{ker}$ is in the kernel of the matrix~$B$, it holds that $B v_\text{ker}=0$, which implies that
\begin{align}\label{nregaregaerg}
u^T_a B v_\text{ker}=0.
\end{align}
We decompose the vector $v_\text{ker}$ as $v_\text{ker} = v_{\text{ker},a} + v_{\text{ker},b}$, where the first addend equals
\begin{align}
\left(v_{\text{ker},a}\right)_i = \begin{cases}
\left(v_\text{ker}\right)_i \quad &\text{if}~i\not\in\mathcal{N}_{m_1+1}, \\
0 &\text{if}~i\in\mathcal{N}_{m_1+1},
\end{cases}
\end{align}
and the second addend equals
\begin{align}\label{ikjnreggerg}
\left(v_{\text{ker},b}\right)_i=\begin{cases}
0 \quad &\text{if}~i\not\in\mathcal{N}_{m_1+1}\\
\left(v_\text{ker}\right)_i  &\text{if}~i\in\mathcal{N}_{m_1+1}.
\end{cases}
\end{align}
Then, (\ref{nregaregaerg}) becomes
\begin{align}
u^T_a B v_{\text{ker},a} +u^T_a B v_{\text{ker},b}=0.
\end{align}
Since $u_a\in \mathcal{V}_{\neq 0}$ and $\mathcal{V}_{\neq 0}$ is an invariant subspace of the matrix $B$, it holds that $B u_a\in \mathcal{V}_{\neq 0}$. Thus, (\ref{aergergregrgr}) implies that $u^T_a B v_{\text{ker},a}=0$, and we obtain that
\begin{align}
u^T_a B v_{\text{ker},b}=0,
\end{align}
which is equivalent to
\begin{align}
\sum^{m_1}_{l=1}\sum_{i\in\mathcal{N}_l} \sum^N_{j=1}\beta_{ij}\left(v_{\text{ker},b}\right)_j=0.
\end{align}
With the definition of the vector $v_{\text{ker},b}$ in (\ref{ikjnreggerg}), we obtain that
\begin{align}\label{aergergaergregwqqq}
\sum^{m_1}_{l=1}\sum_{i\in\mathcal{N}_l} \sum_{j\in\mathcal{N}_{m_1+1}}\beta_{ij}\left(v_\text{ker}\right)_j=0.
\end{align}
As stated by (\ref{kjnakjbargareg}), the entries~$\left(v_\text{ker}\right)_j$ are positive for all nodes $j\in\mathcal{N}_{m_1+1}$. Furthermore, the infection rates $\beta_{ij}$ are non-negative under Assumption~\ref{assumption:spreading_rates}. Hence, (\ref{aergergaergregwqqq}) is satisfied only if $\beta_{ij}=0$ for all nodes $j\in\mathcal{N}_{m_1+1}$ and $i\in\mathcal{N}_l$ for all subsets $l=1, ..., m_1$. In other words, the nodes in $\mathcal{N}_{m_1+1}$ are not connected to any nodes in $\mathcal{N}_1, ..., \mathcal{N}_{m_1}$, which contradicts the irreducibility of the matrix $B$ under Assumption~\ref{assumption:matrix_B_symmetric}. Hence, it must hold that $\mathcal{N}_{m_1+1}=\emptyset$.
\end{proof}
 Since $\mathcal{N}_{m_1+1}=\emptyset$, it holds that $\mathcal{N}_1 \cup ...\cup \mathcal{N}_{m_1}=\mathcal{N}$. Hence, the disjoint subsets~$\mathcal{N}_1$, ..., $\mathcal{N}_{m_1}$ define a partition of the set of all nodes~$\mathcal{N}=\{1, ..., N\}$. To complete the proof of Theorem~\ref{theorem:invariant_sets_are_equitable_partitions}, we must show that the subsets $\mathcal{N}_{1}$, ..., $\mathcal{N}_{m_1}$ are an \textit{equitable} partition of the infection rate matrix~$B$. Hence, we must show that the sum of the infection rates~$\beta_{ij}$, 
 \begin{align}\label{d_ij_khjsdbfsad}
\sum_{ j \in \mathcal{N}_l} \beta_{ij},
\end{align}
is the same for all nodes $i\in\mathcal{N}_p$ and all cells $l,p=1,...,m_1$. Lemma~\ref{lemma:invariance} states that
\begin{align}
\mathcal{V}_{\neq 0}&=\operatorname{span}\left\{y_{1}, ..., y_{m_1}\right\}\\
&=\operatorname{span}\left\{x_{\phi(1)}, ..., x_{\phi(m_1)}\right\}.
\end{align}
Thus, there must be some nonsingular $m_1\times m_1$ matrix $H$ such that
\begin{align}\label{kujarfgrdgerg}
\begin{pmatrix}
x_{\phi(1)} &... & x_{\phi(m_1)}
\end{pmatrix}
= \begin{pmatrix}
y_{1} & ... & y_{m_1}
\end{pmatrix} H.
\end{align}
Since the set eigenvectors~$x_i$ and the set of vectors~$y_l$ are orthonormal, the matrix~$H$ is orthogonal\footnote{Since $x^T_i x_j=1$ if $i=j$ and $x^T_i x_j=0$ if $i\neq j$ and analogously for the vectors~$y_i,y_j$, it follows from $x^T_i x_j=y^T_i H^T H y_j$ that the matrix~$H$ is orthogonal.}. The eigendecomposition of the matrix~$B$ reads
\begin{align}
B = &\begin{pmatrix}
x_{\phi(1)} &... & x_{\phi(m_1)}
\end{pmatrix} \operatorname{diag}\left(\lambda_{\phi(1)}, ..., \lambda_{\phi(m_1)} \right)
\begin{pmatrix}
x^T_{\phi(1)} \\
\vdots\\
x^T_{\phi(m_1)}
\end{pmatrix} \\
&+\begin{pmatrix}
x_{\phi(m_1+1)} &... & x_{\phi(m)}
\end{pmatrix} \operatorname{diag}\left(\lambda_{\phi(m_1+1)}, ..., \lambda_{\phi(m)} \right)
\begin{pmatrix}
x^T_{\phi(m_1+1)} \\
\vdots\\
x^T_{\phi(m)}
\end{pmatrix} \\
&+\begin{pmatrix}
x_{\phi(m+1)} &... & x_{\phi(N)}
\end{pmatrix} \operatorname{diag}\left(\lambda_{\phi(m+1)}, ..., \lambda_{\phi(N)} \right)
\begin{pmatrix}
x^T_{\phi(m+1)} \\
\vdots\\
x^T_{\phi(N)}
\end{pmatrix}.
\end{align}
With (\ref{kujarfgrdgerg}), and since the eigenvalues~$\lambda_{\phi(l)}=0$ for $l=m_1+1, ..., m$, we obtain that
\begin{align} \label{sdfawfrfaderfgrer}
B =&\begin{pmatrix}
y_{1} & ... & y_{m_1}
\end{pmatrix} H \operatorname{diag}\left(\lambda_{\phi(1)}, ..., \lambda_{\phi(m_1)} \right)
H^T
\begin{pmatrix}
y^T_{1} \\
\vdots\\
 y^T_{m_1}
\end{pmatrix} \\
&+\begin{pmatrix}
x_{\phi(m+1)} &... & x_{\phi(N)}
\end{pmatrix} \operatorname{diag}\left(\lambda_{\phi(m+1)}, ..., \lambda_{\phi(N)} \right)
\begin{pmatrix}
x^T_{\phi(m+1)} \\
\vdots\\
x^T_{\phi(N)}
\end{pmatrix}.
\end{align}
Consider two nodes $i\in \mathcal{N}_p$ and a subset $\mathcal{N}_l$ for some $l=1, ..., m_1$. Since 
\begin{align}
(y_l)_j = \begin{cases}
\frac{1}{\sqrt{|\mathcal{N}_l|}}\quad &\text{if}~j\in\mathcal{N}_l, \\
0 &\text{if}~j\not\in\mathcal{N}_l,
\end{cases}
\end{align} 
we can express the sum (\ref{d_ij_khjsdbfsad}) as
\begin{align}
\sum_{ j \in \mathcal{N}_l} \beta_{ij} = \sqrt{|\mathcal{N}_l|}  \begin{pmatrix}
\beta_{i1} & ... & \beta_{iN}
\end{pmatrix} y_l.
\end{align}
Thus, with the $N\times 1$ basic vector~$e_i$, it holds that
\begin{align}
\sum_{ j \in \mathcal{N}_l} \beta_{ij} = \sqrt{|\mathcal{N}_l|}  e^T_i B y_l.
\end{align}
From the orthogonality of the vectors $y_1, ..., y_{m_1}$ and from $x^T_{\phi(k)} y_l=0$ for $k=m+1, ..., N$, we obtain with (\ref{sdfawfrfaderfgrer}) that
\begin{align}\label{kjaergaregaregre}
\sum_{ j \in \mathcal{N}_l} \beta_{ij} = \sqrt{|\mathcal{N}_l|}  e^T_i \begin{pmatrix}
y_{1} & ... & y_{m_1}
\end{pmatrix} H \operatorname{diag}\left(\lambda_{\phi(1)}, ..., \lambda_{\phi(m_1)} \right)
H^T e_{m_1\times 1,l},
\end{align}
where the $l$-th entry of the $m_1\times 1$ vector $e_{m_1\times 1,l}$ equals one, and the other entries of $e_{m_1\times 1,l}$ equal zero. Since node $i$ is element of exactly one subset~$\mathcal{N}_p$, it holds that 
\begin{align}
e^T_i \begin{pmatrix}
y_{1} & ... & y_{m_1}
\end{pmatrix} = \frac{1}{\sqrt{|\mathcal{N}_p|}} \tilde{e}^T_{m_1\times 1,p}.
\end{align}
Then, (\ref{kjaergaregaregre}) becomes
\begin{align}
\sum_{ j \in \mathcal{N}_l} \beta_{ij} = d_{il},
\end{align}
where 
\begin{align}
d_{il}=\frac{\sqrt{|\mathcal{N}_l|}}{\sqrt{|\mathcal{N}_p|}} e^T_{m_1\times 1,p} H \operatorname{diag}\left(\lambda_{\phi(1)}, ..., \lambda_{\phi(m_1)} \right)
H^T e_{m_1\times 1,l}
\end{align}
is the same for all nodes $i\in\mathcal{N}_p$, which completes the proof.

\section{Proof of Corollary~\ref{corollary:one_dimensional_solution}} \label{appendix:one_dimensional_solution}

Since $R_0>1$, the viral state $v(t)$ converges to a positive steady state $v_\infty$ as $t\rightarrow \infty$. Thus, the steady state $v_\infty$ must be element of the $m=1$ dimensional invariant set $\mathcal{V}=\operatorname{span}\{y_1\}$, which implies that $v_\infty=\tilde{c}y_1$ for some scalar $c$. Hence, the unit-length agitation mode equals either $y_1=v_\infty/\lVert v_\infty\rVert_2$ or $y_1=-v_\infty/\lVert v_\infty\rVert_2$. Without loss of generality assume that $y_1=v_\infty/\lVert v_\infty\rVert_2$. Then, under Assumption~\ref{assumption:matrix_B_symmetric}, the matrix $B$ is connected, which implies that $By_1\neq 0$ since the vector $y_1$ is positive. Thus, the subspace $\mathcal{V}_0$ must be empty.

To prove Corollary~\ref{corollary:one_dimensional_solution}, we must show two directions. \textbf{\quotes{If} direction}: Suppose the infection rate matrix $B$ is regular. Then, the viral state $v_{\infty,i}$ is the same for all nodes $i$, and $v(0)\in\mathcal{V}$ implies that $v_i(0)=v_j(0)$ for all nodes $i,j$. Since the matrix $B$ is regular and the initial viral state $v_i(0)$ is the same for every node $i$, the approximation $v_\textrm{apx}(t)=c(t)v_\infty$ is exact \cite{van2014sis,prasse2019time}. Since $v(t)=c(t)v_\infty$ at every time $t$, the invariant set $\mathcal{V}=\operatorname{span}\{y_1\}$ is indeed a one-dimensional invariant set of NIMFA.

\textbf{\quotes{Only if} direction}: Suppose the one-dimensional subspace $\mathcal{V}=\operatorname{span}\{y_1\}$ is an invariant set of NIMFA. Then, Theorem~\ref{theorem:invariant_sets_are_equitable_partitions} yields that the infection rate matrix $B$ has the equitable partition $\pi=\{\mathcal{N}_1\}$, where the cell $\mathcal{N}_1=\{1, ..., N\}$ contains all nodes. Thus, (\ref{khbrffdg}) yields, that there exists some degree $d_{11}$ which satisfies
\begin{align}
d_{11} &=\sum_{k \in \mathcal{N}_1} \beta_{ik} \\
&=\sum^N_{k=1} \beta_{ik}
\end{align}
for all nodes $i$. Thus, we obtain with definition (\ref{asdafsdfseddasss}) that the matrix $B$ is regular.

\section{Proof of Theorem~\ref{theorem:decompose_dynamics}}\label{appendix:decompose_dynamics}
By assumption, the infection rates $\beta_{i,j}$ are the same for all nodes $i$ in any cell $\mathcal{N}_l$ and all nodes $j$ in any cell $\mathcal{N}_p$. Thus, with the definition of the vectors $y_1$, ..., $y_r$ in (\ref{y_l_definition}), the symmetric infection rate matrix equals
\begin{align}\label{ljnlkjnlregergreg}
B =\begin{pmatrix}
y_1 & ... & y_r
\end{pmatrix}  \tilde{B}_{\mathcal{V}_{\neq 0}} \begin{pmatrix}
y^T_1 \\
\vdots \\
y^T_r
\end{pmatrix}
\end{align}
for some symmetric $r\times r$ matrix $\tilde{B}_{\mathcal{V}_{\neq 0}}$. Since the kernel $\operatorname{ker}(B)$ is the orthogonal complement of the subspace $\mathcal{V}_{\neq 0}$, it holds that $\mathbb{R}^N = \mathcal{V}_{\neq 0} \oplus \operatorname{ker}(B)$. Thus, any viral state vector $v(t)\in[0,1]^N$ can be decomposed as $v(t)= \tilde{v}(t) + v_\text{ker}(t)$, where $\tilde{v}(t)\in \mathcal{V}_{\neq 0}$ and $v_\text{ker}(t)\in \operatorname{ker}(B)$. With the decomposition $v(t)= \tilde{v}(t) + v_\text{ker}(t)$, NIMFA (\ref{NIMFA_stacked}) becomes
\begin{align}
	\frac{d v (t)}{d t } & = - S \left( \tilde{v}(t) + v_\text{ker}(t) \right) + \textup{\textrm{diag}}\left( u - \tilde{v}(t) - v_\text{ker}(t)\right) B \left( \tilde{v}(t) + v_\text{ker}(t) \right) \\
	&= - S \tilde{v}(t) - S v_\text{ker}(t) + \textup{\textrm{diag}}\left( u - \tilde{v}(t) - v_\text{ker}(t)\right) B \tilde{v}(t),
\end{align}
where the second equality follows from $Bv_\text{ker}(t)=0$. Further rearrangement yields that
\begin{align} \label{kjnkjsferfvvvvvv}
	\frac{d v (t)}{d t } & = \left(  B - S \right) \tilde{v}(t) - \textup{\textrm{diag}}\left( \tilde{v}(t)\right) B \tilde{v}(t) - S v_\text{ker}(t) - \textup{\textrm{diag}}\left( v_\text{ker}(t)\right) B \tilde{v}(t).
\end{align}

We decompose the derivative $dv(t)/dt$ into two addends, by making use of two lemmas:
\begin{lemma} \label{lemma:first_addend_orth}
Suppose that the assumptions in Theorem~\ref{theorem:decompose_dynamics} hold true. Then, if $\tilde{v}\in\mathcal{V}_{\neq 0}$, the vector
\begin{align} \label{kljnkjnkjnasasssssaaa}
B \tilde{v} - S \tilde{v} - \textup{\textrm{diag}}\left( \tilde{v}\right) B \tilde{v}
\end{align}
is element of $\mathcal{V}_{\neq 0}$.
\end{lemma}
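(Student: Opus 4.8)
The plan is to exploit that, since $\mathcal{N}_1\cup\cdots\cup\mathcal{N}_r=\mathcal{N}$ and the basis vectors $y_l$ in (\ref{y_l_definition}) are constant on their own cell and vanish elsewhere, the subspace $\mathcal{V}_{\neq 0}$ is exactly the set of vectors that are constant on every cell,
\begin{align}
\mathcal{V}_{\neq 0} = \left\{ v\in\mathbb{R}^N \;\big|\; v_i=v_j \text{ whenever } i,j\in\mathcal{N}_l \text{ for some } l \right\}.
\end{align}
Because $\mathcal{V}_{\neq 0}$ is a subspace, it then suffices to prove that each of the three summands $B\tilde{v}$, $S\tilde{v}$ and $\operatorname{diag}(\tilde{v})B\tilde{v}$ is individually constant on every cell whenever $\tilde{v}\in\mathcal{V}_{\neq 0}$.

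First I would dispose of $B\tilde{v}$: the hypothesis of equal infection rates between any two cells yields the decomposition $B=Y\tilde{B}_{\mathcal{V}_{\neq 0}}Y^T$ with $Y=(y_1\;\cdots\;y_r)$ in (\ref{ljnlkjnlregergreg}), whence $B\tilde{v}=Y(\tilde{B}_{\mathcal{V}_{\neq 0}}Y^T\tilde{v})$ is a linear combination of the columns $y_l$ and therefore lies in $\mathcal{V}_{\neq 0}$. For $S\tilde{v}$ I would use that $S=\operatorname{diag}(\delta_1,\ldots,\delta_N)$ is constant on each cell by assumption: for $i\in\mathcal{N}_l$ the entry $(S\tilde{v})_i=\delta_i\tilde{v}_i$ is the product of two quantities that are each constant over $i\in\mathcal{N}_l$, so $S\tilde{v}\in\mathcal{V}_{\neq 0}$. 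Finally, for the quadratic term I would note that the $i$-th entry of $\operatorname{diag}(\tilde{v})B\tilde{v}$ equals $\tilde{v}_i(B\tilde{v})_i$, a componentwise (Hadamard) product of the two vectors $\tilde{v}$ and $B\tilde{v}$, both already shown to be constant on cells; since a componentwise product of cell-constant vectors is again cell-constant, $\operatorname{diag}(\tilde{v})B\tilde{v}\in\mathcal{V}_{\neq 0}$.

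Combining the three inclusions shows that (\ref{kljnkjnkjnasasssssaaa}) lies in $\mathcal{V}_{\neq 0}$, which is the claim. There is no genuine obstacle here once $\mathcal{V}_{\neq 0}$ is recognized as the cell-constant subspace: the $B$ term rests on the decomposition (\ref{ljnlkjnlregergreg}) (equivalently $\mathcal{V}_0=\operatorname{ker}(B)$, forced by the equal inter-cell rates), the $S$ term on the cell-wise homogeneity of the curing rates, and the only step needing a moment's care is the quadratic term, where the essential point is simply that the Hadamard product preserves constancy on cells — which is precisely what keeps the nonlinear contribution $\operatorname{diag}(\tilde{v})B\tilde{v}$ inside $\mathcal{V}_{\neq 0}$.
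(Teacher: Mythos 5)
Your proof is correct and follows essentially the same route as the paper's: both treat the three addends separately, using the factorization $B=Y\tilde{B}_{\mathcal{V}_{\neq 0}}Y^T$ for the first, the cell-wise constancy of the curing rates for the second, and the closure of $\mathcal{V}_{\neq 0}$ under componentwise products for the third. Your identification of $\mathcal{V}_{\neq 0}$ as the subspace of cell-constant vectors (valid because the cells partition $\mathcal{N}$) lets you dispatch the Hadamard-product step in one line, where the paper verifies the same fact by the explicit basis computation $\operatorname{diag}(y_l)\,y_p=0$ for $l\neq p$ and $y_l^2=y_l/\sqrt{\left|\mathcal{N}_l\right|}$ --- a purely presentational difference.
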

\begin{proof}
We consider the three addends of the vector (\ref{kljnkjnkjnasasssssaaa}) separately. First, (\ref{ljnlkjnlregergreg}) shows that the addend $B\tilde{v}$ is element of $\mathcal{V}_{\neq 0}$ if $\tilde{v}\in\mathcal{V}_{\neq 0}$. Second, we consider the addend $S \tilde{v}$. By assumption, the curing rates $\delta_i$ are the same for all nodes $i$ in the same cell $\mathcal{N}_l$. Thus, we obtain from the definition of the agitation modes $y_l$ in (\ref{y_l_definition}) that 
\begin{align} \label{ljnlrgergetg}
S y_l = \delta_i y_l
\end{align}
for $l=1, ..., r$, where $i$ denotes an arbitrary node in cell $\mathcal{N}_l$. Since the agitation modes $y_1$, ..., $y_r$ span the subspace $\mathcal{V}_{\neq 0}$, (\ref{ljnlrgergetg}) implies that $S\tilde{v}$ if $\tilde{v}\in\mathcal{V}_{\neq 0}$.

Third, we consider the addend $\textup{\textrm{diag}}\left( \tilde{v}\right) B \tilde{v}$. Since $\tilde{v}\in\mathcal{V}_{\neq 0}$, it holds that
\begin{align}
\tilde{v} = \sum^r_{l=1} \left(y^T_l \tilde{v} \right) y_l.
\end{align}
Similarly, since $B \tilde{v}\in\mathcal{V}_{\neq 0}$, it holds that
\begin{align} \label{kjnkjnaaasssscccc}
B \tilde{v}= \sum^r_{l=1} \left(y^T_l B \tilde{v} \right) y_l.
\end{align}
Thus, we obtain that
\begin{align}\label{lojnlreeeee}
\textup{\textrm{diag}}\left( \tilde{v}\right) B \tilde{v}= \sum^r_{l=1} \sum^r_{p=1} \left(y^T_l \tilde{v} \right)\left(y^T_p B \tilde{v} \right) \operatorname{diag}(y_l) y_p.
\end{align}
From the definition of the vectors $y_l$ in (\ref{y_l_definition}) it follows that
\begin{align} \label{sdfgdrgegteg}
\textup{\textrm{diag}}\left( y_l \right) y_p = \begin{cases}
y^2_l \quad &\text{if} \quad l = p,\\
0 &\text{if} \quad l \neq p,
\end{cases}
\end{align}
where the $N\times 1$ vector $y^2_l = \left( ( y_l )^2_1, ..., ( y_l )^2_N \right)^T$ denotes Hadamard product of the vector $y_l$ with itself. Thus, (\ref{lojnlreeeee}) becomes
\begin{align} \label{ljknljnergergerg}
\textup{\textrm{diag}}\left( \tilde{v}\right) B \tilde{v}=\sum^r_{l=1}  \left(y^T_l \tilde{v} \right)\left(y^T_l B \tilde{v} \right) y^2_l.   
\end{align}
With (\ref{y_l_definition}), the Hadamard product $y^2_l$ equals 
\begin{align}
( y_l )^2_i = \begin{cases}
\frac{1}{\left|\mathcal{N}_l\right|} \quad &\text{if} \quad i \in\mathcal{N}_l,\\
0 &\text{if} \quad i \not\in\mathcal{N}_l,
\end{cases}
\end{align}
which implies that $( y_l )^2=y_l/\sqrt{\left|\mathcal{N}_l\right|}$ and yields with (\ref{ljknljnergergerg}) that
\begin{align} 
\textup{\textrm{diag}}\left( \tilde{v}\right) B \tilde{v}= \sum^r_{l=1}  \frac{\left(y^T_l \tilde{v} \right)\left(y^T_l B \tilde{v} \right)}{\sqrt{\left|\mathcal{N}_l\right|}} y_l.   
\end{align}
Thus, the vector $\textup{\textrm{diag}}\left( \tilde{v}\right) B \tilde{v}$ is a linear combination of the vectors $y_1$, ..., $y_r$, which implies that $\textup{\textrm{diag}}\left( \tilde{v}\right) B \tilde{v} \mathcal{V}_{\neq 0}$. Hence, we have shown that all three addends of the vector (\ref{kljnkjnkjnasasssssaaa}) are in $\mathcal{V}_{\neq 0}$, which completes the proof.
\end{proof}

\begin{lemma}\label{lemma:second_addend_orth}
Suppose that the assumptions in Theorem~\ref{theorem:decompose_dynamics} hold true. Then, if $\tilde{v}\in\mathcal{V}_{\neq 0}$ and $v_\textup{ker}\in\operatorname{ker}(B)$, the vector
\begin{align} \label{kjbjkhbeeeaass}
S v_\textup{ker} + \operatorname{diag}\left( v_\textup{ker}\right) B \tilde{v}
\end{align}
is element of $\operatorname{ker}(B)$.
\end{lemma}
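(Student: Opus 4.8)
The plan is to exploit that, under the hypotheses of Theorem~\ref{theorem:decompose_dynamics}, the infection rate matrix has the form (\ref{ljnlkjnlregergreg}), so that $\operatorname{ker}(B)$ is precisely the orthogonal complement of $\mathcal{V}_{\neq 0}=\operatorname{span}\{y_1, ..., y_r\}$, i.e.\ $\operatorname{ker}(B)=\operatorname{span}\{y_{r+1}, ..., y_N\}$. Hence, to show that the vector $S v_\textup{ker} + \operatorname{diag}(v_\textup{ker}) B\tilde{v}$ lies in $\operatorname{ker}(B)$, it is equivalent to show that it is orthogonal to every basis vector $y_l$, $l=1, ..., r$, of $\mathcal{V}_{\neq 0}$. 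I would therefore compute the inner product $y^T_l\left( S v_\textup{ker} + \operatorname{diag}(v_\textup{ker}) B\tilde{v}\right)$ and verify that it vanishes for each $l$.

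For the first addend I would use that the curing rates are constant on each cell, so that $S y_l = \delta_i y_l$ by (\ref{ljnlrgergetg}), where $i$ is any node in $\mathcal{N}_l$. Since $S$ is symmetric, this gives $y^T_l S v_\textup{ker} = \delta_i\, y^T_l v_\textup{ker}$, which is zero because $v_\textup{ker}\in\operatorname{ker}(B)=\mathcal{V}_{\neq 0}^\bot$ is orthogonal to $y_l\in\mathcal{V}_{\neq 0}$. Thus the curing term contributes nothing.

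The main work is the second addend $y^T_l \operatorname{diag}(v_\textup{ker}) B\tilde{v}$. Here I would first invoke that $B\tilde{v}\in\mathcal{V}_{\neq 0}$, as established in Lemma~\ref{lemma:first_addend_orth} (cf.\ (\ref{kjnkjnaaasssscccc})), so that $B\tilde{v}=\sum^r_{p=1} (y^T_p B\tilde{v})\, y_p$ and hence $y^T_l \operatorname{diag}(v_\textup{ker}) B\tilde{v} = \sum^r_{p=1} (y^T_p B\tilde{v})\, y^T_l\operatorname{diag}(v_\textup{ker}) y_p$. The key structural observation is $y^T_l\operatorname{diag}(v_\textup{ker}) y_p = \sum_{i}(y_l)_i (v_\textup{ker})_i (y_p)_i$: since the cells $\mathcal{N}_l$ are disjoint, by (\ref{y_l_definition}) the supports of $y_l$ and $y_p$ overlap only when $l=p$, so all off-diagonal terms vanish, and the diagonal term is $y^T_l\operatorname{diag}(v_\textup{ker}) y_l = \frac{1}{|\mathcal{N}_l|}\sum_{i\in\mathcal{N}_l}(v_\textup{ker})_i$.

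The crux — and the step I expect to be the most delicate to phrase cleanly — is that this cell-sum itself vanishes: because $v_\textup{ker}\in\mathcal{V}_{\neq 0}^\bot$, one has $0 = y^T_l v_\textup{ker} = \frac{1}{\sqrt{|\mathcal{N}_l|}}\sum_{i\in\mathcal{N}_l}(v_\textup{ker})_i$, so $\sum_{i\in\mathcal{N}_l}(v_\textup{ker})_i=0$, whence $y^T_l\operatorname{diag}(v_\textup{ker}) y_l=0$ as well and the entire second addend is zero. Combining both parts gives $y^T_l\left(S v_\textup{ker} + \operatorname{diag}(v_\textup{ker}) B\tilde{v}\right)=0$ for all $l=1, ..., r$, which shows that the vector is orthogonal to $\mathcal{V}_{\neq 0}$ and therefore lies in $\operatorname{ker}(B)$. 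The only subtlety beyond routine computation is recognizing that orthogonality of $v_\textup{ker}$ to each $y_l$ is exactly the statement that $v_\textup{ker}$ has zero sum on every cell, which is what makes the Hadamard-type diagonal term collapse.
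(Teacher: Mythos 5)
Your proposal is correct and follows essentially the same route as the paper's proof: reduce membership in $\operatorname{ker}(B)=\mathcal{V}_{\neq 0}^\bot$ to orthogonality against each $y_l$, kill the curing term via $Sy_l=\delta_i y_l$, expand $B\tilde{v}$ in the $y_p$ basis, and use the disjoint supports of the $y_l$ together with $y_l^T v_{\textup{ker}}=0$ to collapse the Hadamard term. The paper phrases the last step as $\operatorname{diag}(y_l)y_l = y_l/\sqrt{|\mathcal{N}_l|}$ rather than as a zero cell-sum, but this is the same computation.
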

\begin{proof}
The kernel $\operatorname{ker}(B)$ is the orthogonal complement of the subspace $\mathcal{V}_{\neq 0}$. Thus, the vector (\ref{kjbjkhbeeeaass}) is element of $\operatorname{ker}(B)$ if $S v_\text{ker}$ is orthogonal to every basis vector $y_1$, ..., $y_r$ of the subspace $\mathcal{V}_{\neq 0}$. We show separately that both addends of the vector (\ref{kjbjkhbeeeaass}) are orthogonal to every vector $y_1$, ..., $y_r$. First, for any $l=1, ..., r$, we obtain for the first addend in (\ref{kjbjkhbeeeaass}) that 
\begin{align}
y^T_l S v_\text{ker} = \left( S y_l \right)^T v_\text{ker},
\end{align}
since the matrix $S$ is symmetric. With (\ref{ljnlrgergetg}), we obtain for an arbitrary node $i\in\mathcal{N}_l$ that 
\begin{align}
y^T_l S v_\text{ker} = \delta_i y^T_l v_\text{ker} =0.
\end{align}
Thus, the addend $S v_\text{ker}$ is element of $\operatorname{ker}(B)$. 

Second, for any $l=1, ..., r$, we obtain for the second addend in (\ref{kjbjkhbeeeaass}) with (\ref{kjnkjnaaasssscccc}) that 
\begin{align} 
y^T_l \textup{\textrm{diag}}\left( v_\textup{ker}\right) B \tilde{v} &=  \sum^{r}_{q=1}\left( y^T_l B\tilde{v} \right) y^T_l \textup{\textrm{diag}}\left( v_\textup{ker}\right) y_q \\
&=\sum^{r}_{q=1}\left( y^T_q B\tilde{v} \right) v^T_\textup{ker} \textup{\textrm{diag}}\left( y_l \right) y_q.
\end{align}
Analogous steps as in the proof of Lemma~\ref{lemma:first_addend_orth} yield that
\begin{align} 
y^T_l \textup{\textrm{diag}}\left( v_\textup{ker}\right) B \tilde{v} &= \frac{\left( y^T_l B\tilde{v} \right)}{\sqrt{\left|\mathcal{N}_l\right|}} v^T_\textup{ker}  y_l.
\end{align}
Thus, by the orthogonality of the vectors $v_\textup{ker}$ and $y_l$,
\begin{align}
y^T_l \textup{\textrm{diag}}\left( v_\textup{ker}\right) B \tilde{v} =0,
\end{align}
which completes the proof.
\end{proof}

With Lemma~\ref{lemma:first_addend_orth} and Lemma~\ref{lemma:second_addend_orth}, we obtain from (\ref{kjnkjsferfvvvvvv}) that 
\begin{align} 
	\frac{d v (t)}{d t } & = 	\frac{d \tilde{v}(t)}{d t } + 	\frac{d v_\textup{ker}(t)}{d t },	
\end{align}
where
\begin{align} 
	\frac{d \tilde{v}(t)}{d t } = - S \tilde{v}(t) + \textup{\textrm{diag}}\left(  u - \tilde{v}(t)\right) B \tilde{v}(t)
\end{align}
	and
\begin{align} 
	\frac{d v_\textup{ker}(t)}{d t } = - S v_\text{ker}(t) - \operatorname{diag}\left( v_\text{ker}(t)\right) B \tilde{v}(t),
\end{align}
	which completes the proof, since 
	\begin{align}
	 \operatorname{diag}\left( v_\text{ker}(t)\right) B \tilde{v}(t) =  \operatorname{diag}\left( B \tilde{v}(t)\right)  v_\text{ker}(t).
	\end{align}

\section{Proof of Theorem~\ref{theorem:complete_graph_solution}} \label{appendix:complete_graph_solution}

Since the spreading rates are homogeneous, $\beta_{ij}= \beta$ and $\delta_i = \delta$, the infection rate matrix equals
\begin{align}\label{kjnkjjjjssss}
B = \beta u u^T,
\end{align}
and the curing rate matrix equals
\begin{align}\label{uhiubiaaaassscc}
S = \delta I.
\end{align}
Thus, with $r=1$ cell $\mathcal{N}_1=\{1, ..., N\}$, Theorem~\ref{theorem:decompose_dynamics} yields that the viral state $v(t)$ can be decomposed as $v(t)=\tilde{v}(t) + v_\text{ker}(t)$. We prove Theorem~\ref{theorem:complete_graph_solution} in two steps. First, we show that the projection $\tilde{v}(t)$ equals $c_1(t)v_\infty$ at every time $t$. Second, we prove that the projection $v_\text{ker}(t)$ equals $c_2(t) y_2$ at every time $t$.

\subsection{Projection on the subspace $\mathcal{V}_{\neq 0}$}\label{khjbkbkjrfwefwfwefaa}
With the reduced-size curing rate matrix $S^\pi = \delta$ and the quotient matrix $B^{\pi} = N \beta$, Theorem~\ref{theorem:equitableOriginal} yields that the projection on the subspace $\mathcal{V}_{\neq 0}$ satisfies $\tilde{v}(t)=v^{\pi}(t) u$. The evolution (\ref{kjnkjnertgverg}) of the reduced-size, scalar viral state $v^{\pi}(t)$ becomes
\begin{align}
	\frac{d v^{\pi}(t)}{d t } & = - \delta  v^{\pi}(t) + \left( 1 - v^{\pi}(t)\right) N \beta v^{\pi}(t),
\end{align}
whose solution equals \cite{van2014sis,prasse2019time}
\begin{align}
v^{\pi}(t) = \frac{v^{\pi}_\infty}{2}\left( 1 + \tanh\left( \frac{w}{2} t +\Upsilon_1(0)  \right) \right)
\end{align}
with the reduced-size steady state $v^{\pi}_\infty=1-\frac{\delta}{\beta N}$, the viral slope $w=\beta N - \delta$ and the constant
\begin{align}
\Upsilon_1(0) = \operatorname{artanh}\left( 2 \frac{v(0)}{v_\infty} - 1\right).
\end{align}
Thus, the projection $\tilde{v}(t)=v^{\pi}(t) u$ is equal to $c_1(t) v_\infty$ at every time $t$.

\subsection{Projection on the kernel $\operatorname{ker}(B)$}

With (\ref{kjnkjjjjssss}) and (\ref{uhiubiaaaassscc}), Theorem~\ref{theorem:decompose_dynamics} yields that the projection $v_\text{ker}(t)$ obeys
\begin{align}
\frac{d v_\text{ker}(t)}{dt} = -\left( \delta I + \beta \operatorname{diag}\left( u u^T \tilde{v}(t)\right) \right)v_\textup{ker}(t).
\end{align}
Since $\tilde{v}(t)=c_1(t) v_\infty$ and $v_\infty = v_{\infty, i} u$ for an arbitrary node $i$, we obtain that
\begin{align}
\frac{d v_\text{ker}(t)}{dt} &= -\left( \delta I + \beta N c_1(t)  v_{\infty, i} I \right)v_\textup{ker}(t).
\end{align}
From the function $c_1(t)$ given in (\ref{c_1_closedform}), it follows that
\begin{align} \label{ojnnasssdasdasdsd}
\frac{d v_\text{ker}(t)}{dt} &= - \delta v_\textup{ker}(t) - \frac{\beta N v_{\infty, i}}{2}\left( 1 + \tanh\left( \frac{w}{2} t +\Upsilon_1(0)  \right) \right) v_\textup{ker}(t).
\end{align}
For any initial condition $v_\text{ker}(0)\in \operatorname{ker}(B)$, the right side of (\ref{ojnnasssdasdasdsd}) is element of the one-dimensional subspace $\operatorname{span}\{v_\text{ker}(0)\}$. Thus, the projection $v_\textup{ker}(t)$ obeys $v_\textup{ker}(t)=c_2(t) v_\textup{ker}(0)$. We solve (\ref{ojnnasssdasdasdsd}) in two steps. First, we compute the initial condition $v_\text{ker}(0)$. Since $v(0) = v_\text{ker}(0) + c_1(0) v_\infty $, the initial condition $v_\text{ker}(0)$ is obtained as
\begin{align}
v_\text{ker}(0) &= v(0) - c_1(0) v_\infty \\
&= v(0) - \frac{v^T_\infty v(0)}{\lVert v_\infty \rVert^2_2} v_\infty.
\end{align}
Since $v_\infty = v_{\infty, i}u$, it follows that
\begin{align}
v_\text{ker}(0) &= v(0) - \frac{1}{N}u^T v(0) u,
\end{align}
which simplifies to 
\begin{align}
v_\text{ker}(0) &= \left( I - \frac{1}{N}u u^T\right) v(0).
\end{align}
Second, using $v_\textup{ker}(t)=c_2(t) v_\textup{ker}(0)$, we project (\ref{ojnnasssdasdasdsd}) on the initial condition $v_\textup{ker}(0)$ to obtain that the scalar function $c_2(t)$ obeys the linear differential equation 
\begin{align}  \label{ljnlkjnlkjnsfgsv}
\frac{d c_2(t)}{dt} &= - \delta c_2(t) - \frac{\beta N v_{\infty, i}}{2}\left( 1 + \tanh\left( \frac{w}{2} t +\Upsilon_1(0)  \right) \right) c_2(t) 
\end{align}
and hence, with the constant $\Phi = \beta N v_{\infty, i}/2 +\delta$,
\begin{align}  
\log\left( c_2(t) \right) &= - \int^t_0 \left( \Phi + \frac{1}{2}\beta N v_{\infty, i}\tanh\left( \frac{w}{2} \xi +\Upsilon_1(0)  \right) \right) d\xi. 
\end{align}
The integral of the hyperbolic tangent equals to the logarithm of the hyperbolic cosine \cite{abramowitz1965handbook},
\begin{align}
\int \tanh\left( \xi \right) d\xi = \log\left( \cosh(\xi) \right),
\end{align}
which yields that 
\begin{align}  \label{kjbnkjnkasasddfasdas}
\log\left( c_2(t) \right) &= - \Phi t -  \frac{\beta N v_{\infty, i}}{2} \frac{2}{w} \log\left( \cosh\left( \frac{w}{2} t +\Upsilon_1(0)  \right)\right) +K(0) 
\end{align}
for some constant $K(0)$. With the definition of the viral slope $w$ in Subsection~\ref{khjbkbkjrfwefwfwefaa} and $v_{\infty,i}=1 - \frac{\delta}{\beta N}$, we obtain that 
\begin{align}
\frac{\beta N v_{\infty, i}}{w} = \frac{\beta N (1 - \frac{\delta}{\beta N})  }{\beta N -\delta} =1.
\end{align}
Thus, (\ref{kjbnkjnkasasddfasdas}) becomes
\begin{align}  
\log\left( c_2(t) \right) &= - \Phi t + \log\left( \cosh\left( \frac{w}{2} t +\Upsilon_1(0)  \right)^{-1}\right) +K(0),
\end{align}
and we obtain, with the hyperbolic secant $\operatorname{sech}(x)=\cosh(x)^{-1}$, that 
\begin{align}  \label{kljnkjnsdfsdfgsdfgagfsd}
c_2(t) &= \Upsilon_2(0) e^{ - \Phi t } \operatorname{sech}\left( \frac{w}{2} t +\Upsilon_1(0)  \right).
\end{align}
with the constant $\Upsilon_2(0)=\exp(K(0))$. At the initial time $t=0$, (\ref{kljnkjnsdfsdfgsdfgagfsd}) becomes
\begin{align}  
c_2(0) &= \Upsilon_2(0) \operatorname{sech}\left( \Upsilon_1(0)  \right),
\end{align}
and it holds that 
\begin{align}
c_2(0) = \frac{v^T_\text{ker}(0) v(0)}{\lVert v_\text{ker}(0) \rVert^2_2}.
\end{align}
Thus, with $\operatorname{sech}(x)=\cosh(x)^{-1}$, we obtain the constant $\Upsilon_2(0)$ as (\ref{kjbnkjnsdfaserwfaerfaedrfaerf}), which completes the proof.

\section{Proof of Theorem~\ref{theorem:NIMFA_bounds}} 
\label{appendix:NIMFA_bounds}
The viral state $\tilde{v}_i(t)$ evolves as
\begin{align}
\frac{d \tilde{v}_i(t)}{dt} &= \tilde{f}_{\text{NIMFA}, i}( \tilde{v}(t) ),
\end{align}
where we define, for every node $i$, 
\begin{align}\label{khjbnksdsdfsdsdssdd}
\tilde{f}_{\text{NIMFA}, i}( \tilde{v}(t) ) = -\tilde{\delta}_i \tilde{v}_i(t) + \left(1-\tilde{v}_i(t)\right)\sum^N_{j=1} \tilde{\beta}_{ij} \tilde{v}_j(t) .
\end{align}
Since $\tilde{\beta}_{ij}\ge \beta_{ij}$ and $\tilde{\delta}_i\le \delta_i$ for all nodes $i$, we obtain an upper bound on NIMFA (\ref{NIMFA_continuous}) as
\begin{align}
\frac{d v_i(t)}{dt} &\le -\tilde{\delta}_i v_i(t) + \left(1-v_i(t)\right)\sum^N_{j=1} \tilde{\beta}_{ij} v_j(t) \\
&= \tilde{f}_{\text{NIMFA}, i}(v(t)).
\end{align}
Since $d v_i(t)/dt \le \tilde{f}_{\text{NIMFA}, i}(v(t))$, we can apply the \textit{Kamke-M{\"u}ller condition} \cite{kamke1932theorie, muller1927fundamentaltheorem}, see also \cite{kiss2017mathematics}: If $v\le \tilde{v}$ and $v_i=\tilde{v}_i$ implies that $\tilde{f}_{\text{NIMFA}, i}(v)\le \tilde{f}_{\text{NIMFA}, i}\left(\tilde{v}\right)$ for all nodes $i$, then $v(0)\le \tilde{v}(0)$ implies that $v(t)\le \tilde{v}(t)$ at every time $t\ge 0$.

Thus, it remains to show that $v\le \tilde{v}$ and $v_i=\tilde{v}_i$ implies that $\tilde{f}_{\text{NIMFA}, i}(v)\le \tilde{f}_{\text{NIMFA}, i}\left(\tilde{v}\right)$. From (\ref{khjbnksdsdfsdsdssdd}), we obtain that
\begin{align} 
\tilde{f}_{\text{NIMFA}, i}(v)- \tilde{f}_{\text{NIMFA}, i}\left(\tilde{v}\right) = -\tilde{\delta}_i \left(v_i - \tilde{v}_i\right) + \left(1-v_i\right)\sum^N_{j=1} \tilde{\beta}_{ij} v_j - \left(1-\tilde{v}_i\right)\sum^N_{j=1} \tilde{\beta}_{ij} \tilde{v}_j.
\end{align}
From $v_i=\tilde{v}_i$, it follows that
\begin{align} 
\tilde{f}_{\text{NIMFA}, i}(v)- \tilde{f}_{\text{NIMFA}, i}\left(\tilde{v}\right) = \left(1-v_i\right)\sum^N_{j=1} \tilde{\beta}_{ij} v_j - \left(1-v_i\right)\sum^N_{j=1} \tilde{\beta}_{ij} \tilde{v}_j,
\end{align}
which yields that 
\begin{align} 
\tilde{f}_{\text{NIMFA}, i}(v)- \tilde{f}_{\text{NIMFA}, i}\left(\tilde{v}\right) &= \sum^N_{j=1} \tilde{\beta}_{ij}\left( v_j -v_i v_j -\tilde{v}_j + v_i \tilde{v}_j \right) \\
&=\sum^N_{j=1} \tilde{\beta}_{ij}\left(1-v_i \right) \left( v_j -\tilde{v}_j\right).
\end{align}
Since $\left( v_j -\tilde{v}_j\right)\le 0$, we obtain that $\tilde{f}_{\text{NIMFA}, i}(v)\le \tilde{f}_{\text{NIMFA}, i}\left(\tilde{v}\right)$, which completes the proof.

\section{Proof of Theorem \ref{theorem:bound_by_equitable}}
\label{appendix:bound_by_equitable}
Here, we prove that $v_i(t)\ge v_{\textup{lb},l}(t)$ for all nodes $i$ in any cell $\mathcal{N}_l$. The proof of $v_i(t)\le v_{\textup{ub},l}(t)$ follows analogously. First, we define the curing rates $\tilde{\delta}_{\textup{max},i}$ by 
\begin{align}\label{hiuouhqwer}
\tilde{\delta}_{\textup{max},i} = \delta_{\textup{max},l}
\end{align}
for all nodes $i$ in any cell $\mathcal{N}_p$. Thus, (\ref{delta_ub_def}) implies that $\tilde{\delta}_{\textup{max},i}\ge \delta_i$ for all nodes $i=1, ..., N$. 

\begin{lemma}\label{lemma:beta_lb_beta}
For all nodes $i,j$, there are infection rates $\tilde{\beta}_{ij}$, which satisfy $\tilde{\beta}_{ij}\le \beta_{ij}$ and 
\begin{align} \label{kjnergergeg}
\sum_{j\in\mathcal{N}_l} \tilde{\beta}_{ij} = d_{\textup{min}, pl}
\end{align}
for all nodes $i$ in any cell $\mathcal{N}_p$ and all cells $\mathcal{N}_l$.
\end{lemma}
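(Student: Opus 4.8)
The plan is to construct the infection rates $\tilde{\beta}_{ij}$ explicitly, by scaling down each block of the $i$-th row of $B$ by a common, node-dependent factor. The guiding idea is that the minimum in (\ref{d_lowwwww}) is attained by \emph{some} node of the cell $\mathcal{N}_p$, while an arbitrary node $i\in\mathcal{N}_p$ may have a strictly larger block sum; rescaling brings every such block sum down to the common value $d_{\textup{min},pl}$.

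First I would record the key inequality underlying the construction. Since $d_{\textup{min},pl}$ is, by definition (\ref{d_lowwwww}), the minimum over all nodes in $\mathcal{N}_p$ of the block sum $\sum_{k\in\mathcal{N}_l}\beta_{ik}$, every node $i\in\mathcal{N}_p$ satisfies $d_{\textup{min},pl}\le\sum_{k\in\mathcal{N}_l}\beta_{ik}$. This guarantees that the scaling factor introduced next never exceeds one. Then I would define, for a node $i\in\mathcal{N}_p$ and a node $j\in\mathcal{N}_l$,
\[
\tilde{\beta}_{ij} = \beta_{ij}\cdot\frac{d_{\textup{min},pl}}{\sum_{k\in\mathcal{N}_l}\beta_{ik}}
\]
whenever the block sum $\sum_{k\in\mathcal{N}_l}\beta_{ik}$ is positive, and $\tilde{\beta}_{ij}=0$ otherwise. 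Since every node lies in exactly one cell of $\pi$, this prescribes $\tilde{\beta}_{ij}$ for every pair $i,j$.

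Next I would verify the two required properties. Summing over $j\in\mathcal{N}_l$, the denominator cancels against the block sum $\sum_{j\in\mathcal{N}_l}\beta_{ij}$ in the numerator, giving $\sum_{j\in\mathcal{N}_l}\tilde{\beta}_{ij}=d_{\textup{min},pl}$, which is exactly (\ref{kjnergergeg}). The inequality $\tilde{\beta}_{ij}\le\beta_{ij}$ then follows because the scaling factor lies in $[0,1]$ by the inequality from the first step, and because $\beta_{ij}\ge0$ under Assumption~\ref{assumption:spreading_rates}.

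The only point that needs care is the degenerate case $\sum_{k\in\mathcal{N}_l}\beta_{ik}=0$, which is why the definition splits into two branches. Here I would observe that $d_{\textup{min},pl}$ is itself forced to be zero, being a minimum of nonnegative block sums one of which---the one for node $i$---vanishes, and that each summand $\beta_{ij}$ with $j\in\mathcal{N}_l$ must vanish as well since all infection rates are nonnegative. Hence the choice $\tilde{\beta}_{ij}=0$ simultaneously respects $\tilde{\beta}_{ij}\le\beta_{ij}$ and reproduces the required block sum $d_{\textup{min},pl}=0$. This degenerate branch is the main (indeed the only) obstacle; away from it the argument is a routine proportional rescaling.
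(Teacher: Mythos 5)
Your construction is correct: the proportional rescaling $\tilde{\beta}_{ij} = \beta_{ij}\, d_{\textup{min},pl} \big/ \sum_{k\in\mathcal{N}_l}\beta_{ik}$ plainly satisfies the block-sum identity (\ref{kjnergergeg}) by cancellation, the factor lies in $[0,1]$ because $d_{\textup{min},pl}$ is the minimum of the block sums over $\mathcal{N}_p$, and your treatment of the degenerate branch $\sum_{k\in\mathcal{N}_l}\beta_{ik}=0$ (forcing $d_{\textup{min},pl}=0$ and $\beta_{ij}=0$ for all $j\in\mathcal{N}_l$) closes the only gap. Your route differs from the paper's in an instructive way: the paper works with the decrements $\varepsilon_{ij}=\beta_{ij}-\tilde{\beta}_{ij}$ and gives a pure feasibility argument, observing that the required total decrement $\sum_{j\in\mathcal{N}_l}\beta_{ij}-d_{\textup{min},pl}$ lies in $[0,\sum_{j\in\mathcal{N}_l}\beta_{ij}]$ and that the constrained sum $\sum_{j\in\mathcal{N}_l}\varepsilon_{ij}$ with $\varepsilon_{ij}\in[0,\beta_{ij}]$ can attain any value in that interval, hence \emph{some} admissible choice exists without ever naming one. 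Your proof buys an explicit, canonical matrix $\tilde{B}_{\textup{min}}$ (which, incidentally, preserves the zero pattern of $B$ within each block), at the cost of having to handle the vanishing-denominator case separately; the paper's argument avoids that case distinction but yields only existence. Both establish exactly what Theorem~\ref{theorem:bound_by_equitable} needs, namely some nonnegative $\tilde{B}_{\textup{min}}\le B$ entrywise for which $\pi$ is equitable with quotient degrees $d_{\textup{min},pl}$, so your proposal is a valid substitute.
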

\begin{proof}
With the definition of the lower bound $d_{\textup{min},pl}$ in (\ref{d_lowwwww}), we obtain that (\ref{kjnergergeg}) is satisfied if
\begin{align} \label{asdasdasssssaaa}
\sum_{j\in\mathcal{N}_l} \tilde{\beta}_{ij} = \underset{i\in\mathcal{N}_p}{\operatorname{min}} ~ \sum_{k \in \mathcal{N}_l} \beta_{ik}.
\end{align}
Denote the difference of the infection rates by $\varepsilon_{ij} = \beta_{ij} - \tilde{\beta}_{ij}$. Thus, $\tilde{\beta}_{ij}\le \beta_{ij}$ and $\tilde{\beta}_{ij}\ge 0$ holds if and only if $0\le \varepsilon_{ij}\le \beta_{ij}$. We obtain from (\ref{asdasdasssssaaa}) that the differences $\varepsilon_{ij}$ must satisfy
\begin{align}
\sum_{j\in\mathcal{N}_l} \beta_{ij} - \sum_{j\in\mathcal{N}_l} \varepsilon_{ij} = \underset{i\in\mathcal{N}_p}{\operatorname{min}} ~ \sum_{k \in \mathcal{N}_l} \beta_{ik},
\end{align}
which yields that
\begin{align}\label{kjntgrtgtrrrr}
\sum_{j\in\mathcal{N}_l} \varepsilon_{ij} = \sum_{j\in\mathcal{N}_l} \beta_{ij} - \underset{i\in\mathcal{N}_p}{\operatorname{min}} ~ \sum_{k \in \mathcal{N}_l} \beta_{ik}.
\end{align}
To complete the proof, we must show that there exist some $\varepsilon_{ij}\in [0, \beta_{ij}]$ that solve (\ref{kjntgrtgtrrrr}). Since 
\begin{align}
 \sum_{j\in\mathcal{N}_l} \beta_{ij} \ge \underset{i\in\mathcal{N}_p}{\operatorname{min}} ~ \sum_{k \in \mathcal{N}_l} \beta_{ik}
\end{align}
and $\beta_{ij}\ge 0$, the right side of (\ref{kjntgrtgtrrrr}) is some value in $[0, \sum_{j\in\mathcal{N}_l} \beta_{ij}]$. Since the feasible values of the infection rate differences $\varepsilon_{ij}$ are in the interval $[0, \beta_{ij}]$, the left side of (\ref{kjntgrtgtrrrr}) may attain an arbitrary value in $[0, \sum_{j\in\mathcal{N}_l} \beta_{ij}]$. Thus, there are some infection rate differences $\varepsilon_{ij}\in [0, \beta_{ij}]$ that solve (\ref{kjntgrtgtrrrr}), which completes the proof.
\end{proof}
Lemma~\ref{lemma:beta_lb_beta} states the existence of an $N\times N$ matrix $\tilde{B}_{\textup{min}}$ whose elements $\tilde{\beta}_{\text{min},ij}$ satisfy $\tilde{\beta}_{ij}\le \beta_{ij}$ and (\ref{kjnergergeg}). Thus, $\pi$ is an equitable partition of the matrix $\tilde{B}_{\textup{min}}$. We define the $N\times 1$ viral state $\tilde{v}_{\textup{lb}}(t)$ as
\begin{align}\label{kjbaergergea}
	\frac{d \tilde{v}_{\textup{lb}}(t)}{d t } & = - \operatorname{diag}\left(\tilde{\delta}_{\textup{max},1},..., \tilde{\delta}_{\textup{max},N}\right) \tilde{v}_{\textup{lb}}(t) + \textup{\textrm{diag}}\left( u - \tilde{v}_{\textup{lb}}(t)\right) \tilde{B}_{\textup{min}}  \tilde{v}_{\textup{lb}}(t)
\end{align}	
with the initial viral state
\begin{align}
\tilde{v}_{\textup{lb},i}(0)= \underset{j\in\mathcal{N}_p}{\operatorname{min}} ~ v_j(0)
\end{align}
for all nodes $i$ in any cell $\mathcal{N}_p$. Since $\tilde{v}_{\textup{lb},i}(0)\le v_i(0)$, $\tilde{\delta}_{\textup{max},i}\ge \delta_i$ and $\tilde{\beta}_{\text{min},ij}  \le \beta_{ij}$ for all nodes $i,j$, Theorem~\ref{theorem:NIMFA_bounds} yields that $\tilde{v}_{\textup{lb},i}(t)\le v_i(t)$ for every node $i$ at every time $t$. Furthermore, Theorem~\ref{theorem:equitableOriginal} yields that the $N$-dimensional dynamics of the viral state $\tilde{v}_\textup{lb}(t)$ in (\ref{kjbaergergea}) can be reduced to the $r$-dimensional dynamics of the reduced-size viral state $v_{\textup{lb}}(t)$ in (\ref{wrfaergaergaergrea}), which completes the proof.
	

\begin{thebibliography}{10}
\providecommand{\url}[1]{#1}
\csname url@samestyle\endcsname
\providecommand{\newblock}{\relax}
\providecommand{\bibinfo}[2]{#2}
\providecommand{\BIBentrySTDinterwordspacing}{\spaceskip=0pt\relax}
\providecommand{\BIBentryALTinterwordstretchfactor}{4}
\providecommand{\BIBentryALTinterwordspacing}{\spaceskip=\fontdimen2\font plus
\BIBentryALTinterwordstretchfactor\fontdimen3\font minus
  \fontdimen4\font\relax}
\providecommand{\BIBforeignlanguage}[2]{{%
\expandafter\ifx\csname l@#1\endcsname\relax
\typeout{** WARNING: IEEEtranS.bst: No hyphenation pattern has been}%
\typeout{** loaded for the language `#1'. Using the pattern for}%
\typeout{** the default language instead.}%
\else
\language=\csname l@#1\endcsname
\fi
#2}}
\providecommand{\BIBdecl}{\relax}
\BIBdecl

\bibitem{abbe2017community}
E.~Abbe, ``Community detection and stochastic block models: recent
  developments,'' \emph{The Journal of Machine Learning Research}, vol.~18,
  no.~1, pp. 6446--6531, 2017.

\bibitem{abramowitz1965handbook}
M.~Abramowitz and I.~A. Stegun, \emph{Handbook of mathematical functions: with
  formulas, graphs, and mathematical tables}.\hskip 1em plus 0.5em minus
  0.4em\relax Courier Corporation, 1965, vol.~55.

\bibitem{arenas2006synchronization}
A.~Arenas, A.~Diaz-Guilera, and C.~J. P{\'e}rez-Vicente, ``Synchronization
  reveals topological scales in complex networks,'' \emph{Physical Review
  Letters}, vol.~96, no.~11, p. 114102, 2006.

\bibitem{barzel2013universality}
B.~Barzel and A.-L. Barab{\'a}si, ``Universality in network dynamics,''
  \emph{Nature Physics}, vol.~9, no.~10, p. 673, 2013.

\bibitem{bonaccorsi2015epidemic}
S.~Bonaccorsi, S.~Ottaviano, D.~Mugnolo, and F.~D. Pellegrini, ``Epidemic
  outbreaks in networks with equitable or almost-equitable partitions,''
  \emph{SIAM Journal on Applied Mathematics}, vol.~75, no.~6, pp. 2421--2443,
  2015.

\bibitem{brunton2019data}
S.~L. Brunton and J.~N. Kutz, \emph{Data-driven science and engineering:
  {M}achine learning, dynamical systems, and control}.\hskip 1em plus 0.5em
  minus 0.4em\relax Cambridge University Press, 2019.

\bibitem{clauset2008hierarchical}
A.~Clauset, C.~Moore, and M.~E. Newman, ``Hierarchical structure and the
  prediction of missing links in networks,'' \emph{Nature}, vol. 453, no. 7191,
  pp. 98--101, 2008.

\bibitem{devriendt2017unified}
K.~Devriendt and P.~Van~Mieghem, ``Unified mean-field framework for
  {S}usceptible-{I}nfected-{S}usceptible epidemics on networks, based on graph
  partitioning and the isoperimetric inequality,'' \emph{Physical Review E},
  vol.~96, no.~5, p. 052314, 2017.

\bibitem{devriendt2020bifurcation}
K.~Devriendt and R.~Lambiotte, ``Nonlinear network dynamics with
  consensus--dissensus bifurcation,'' \emph{Journal of Nonlinear Science},
  vol.~31, no.~1, pp. 1--34, 2021.

\bibitem{egerstedt2012interacting}
M.~Egerstedt, S.~Martini, M.~Cao, K.~Camlibel, and A.~Bicchi, ``Interacting
  with networks: {H}ow does structure relate to controllability in
  single-leader, consensus networks?'' \emph{IEEE Control Systems Magazine},
  vol.~32, no.~4, pp. 66--73, 2012.

\bibitem{friedberg1989linear}
S.~Friedberg, A.~Insel, and L.~Spence, \emph{Linear {A}lgebra}.\hskip 1em plus
  0.5em minus 0.4em\relax Prentice Hall, 1989.

\bibitem{girvan2002community}
M.~Girvan and M.~E. Newman, ``Community structure in social and biological
  networks,'' \emph{Proceedings of the National Academy of Sciences}, vol.~99,
  no.~12, pp. 7821--7826, 2002.

\bibitem{hayes2006connecting}
B.~Hayes, ``Connecting the dots,'' \emph{American Scientist}, vol.~94, no.~5,
  pp. 400--404, 2006.

\bibitem{holland1983stochastic}
P.~W. Holland, K.~B. Laskey, and S.~Leinhardt, ``Stochastic blockmodels: First
  steps,'' \emph{Social Networks}, vol.~5, no.~2, pp. 109--137, 1983.

\bibitem{kamke1932theorie}
E.~Kamke, ``Zur {T}heorie der {S}ysteme gew{\"o}hnlicher
  {D}ifferentialgleichungen. {II}.'' \emph{Acta Mathematica}, vol.~58, pp.
  57--85, 1932.

\bibitem{kiss2017mathematics}
I.~Z. Kiss, J.~C. Miller, and P.~L. Simon, ``Mathematics of epidemics on
  networks,'' \emph{Cham: Springer}, vol. 598, 2017.

\bibitem{kunegis2013konect}
J.~Kunegis, ``Konect: the {K}oblenz network collection,'' in \emph{Proceedings
  of the 22nd International Conference on World Wide Web}.\hskip 1em plus 0.5em
  minus 0.4em\relax ACM, 2013, pp. 1343--1350.

\bibitem{lajmanovich1976deterministic}
A.~Lajmanovich and J.~A. Yorke, ``A deterministic model for gonorrhea in a
  nonhomogeneous population,'' \emph{Mathematical Biosciences}, vol.~28, no.
  3-4, pp. 221--236, 1976.

\bibitem{laurence2019spectral}
E.~Laurence, N.~Doyon, L.~J. Dub{\'e}, and P.~Desrosiers, ``Spectral dimension
  reduction of complex dynamical networks,'' \emph{Physical Review X}, vol.~9,
  no.~1, p. 011042, 2019.

\bibitem{liu2020stability}
F.~Liu, S.~Cui, X.~Li, and M.~Buss, ``On the stability of the endemic
  equilibrium of a discrete-time networked epidemic model,'' \emph{arXiv
  preprint arXiv:2001.07451}, 2020.

\bibitem{mugnolo2014semigroup}
D.~Mugnolo, \emph{Semigroup methods for evolution equations on networks}.\hskip
  1em plus 0.5em minus 0.4em\relax Springer, 2014, vol.~20, no.~4.

\bibitem{muller1927fundamentaltheorem}
M.~M{\"u}ller, ``{\"U}ber das {F}undamentaltheorem in der {T}heorie der
  gew{\"o}hnlichen {D}ifferentialgleichungen,'' \emph{Mathematische
  Zeitschrift}, vol.~26, no.~1, pp. 619--645, 1927.

\bibitem{nowzari2016analysis}
C.~Nowzari, V.~M. Preciado, and G.~J. Pappas, ``Analysis and control of
  epidemics: A survey of spreading processes on complex networks,'' \emph{IEEE
  Control Systems Magazine}, vol.~36, no.~1, pp. 26--46, 2016.

\bibitem{o2013observability}
N.~O’Clery, Y.~Yuan, G.-B. Stan, and M.~Barahona, ``Observability and coarse
  graining of consensus dynamics through the external equitable partition,''
  \emph{Physical Review E}, vol.~88, no.~4, p. 042805, 2013.

\bibitem{ottaviano2018optimal}
S.~Ottaviano, F.~De~Pellegrini, S.~Bonaccorsi, and P.~Van~Mieghem, ``Optimal
  curing policy for epidemic spreading over a community network with
  heterogeneous population,'' \emph{Journal of Complex Networks}, vol.~6,
  no.~5, pp. 800--829, 2018.

\bibitem{pare2018analysis}
P.~E. Par{\'e}, J.~Liu, C.~L. Beck, B.~E. Kirwan, and T.~Ba{\c{s}}ar,
  ``Analysis, estimation, and validation of discrete-time epidemic processes,''
  \emph{IEEE Transactions on Control Systems Technology}, 2018.

\bibitem{pastor2015epidemic}
R.~Pastor-Satorras, C.~Castellano, P.~Van~Mieghem, and A.~Vespignani,
  ``Epidemic processes in complex networks,'' \emph{Reviews of {M}odern
  {P}hysics}, vol.~87, no.~3, pp. 925–--979, 2015.

\bibitem{pecora2014cluster}
L.~M. Pecora, F.~Sorrentino, A.~M. Hagerstrom, T.~E. Murphy, and R.~Roy,
  ``Cluster synchronization and isolated desynchronization in complex networks
  with symmetries,'' \emph{Nature Communications}, vol.~5, no.~1, pp. 1--8,
  2014.

\bibitem{peixoto2014hierarchical}
T.~P. Peixoto, ``Hierarchical block structures and high-resolution model
  selection in large networks,'' \emph{Physical Review X}, vol.~4, no.~1, p.
  011047, 2014.

\bibitem{prasse2019gemf}
B.~Prasse and P.~Van~Mieghem, ``Network reconstruction and prediction of
  epidemic outbreaks for general group-based compartmental epidemic models,''
  \emph{IEEE Transactions on Network Science and Engineering}, 2020.

\bibitem{prasse2019viral}
------, ``The viral state dynamics of the discrete-time {NIMFA} epidemic
  model,'' \emph{IEEE Transactions on Network Science and Engineering}, 2019.

\bibitem{prasse2020predicting}
------, ``Predicting dynamics on networks hardly depends on the topology,''
  \emph{arXiv preprint arXiv:2005.14575}, 2020.

\bibitem{prasse2019time}
------, ``Time-dependent solution of the {NIMFA} equations around the epidemic
  threshold,'' \emph{Journal of Mathematical Biology}, vol.~81, no.~6, pp.
  1299--1355, 2020.

\bibitem{NIPS2014_63923f49}
\BIBentryALTinterwordspacing
A.~Saade, F.~Krzakala, and L.~Zdeborov\'{a}, ``Spectral clustering of graphs
  with the {B}ethe {H}essian,'' in \emph{Advances in Neural Information
  Processing Systems}, Z.~Ghahramani, M.~Welling, C.~Cortes, N.~Lawrence, and
  K.~Q. Weinberger, Eds., vol.~27.\hskip 1em plus 0.5em minus 0.4em\relax
  Curran Associates, Inc., 2014. [Online]. Available:
  \url{http://papers.nips.cc/paper/5520-spectral-clustering-of-graphs-with-the-bethehessian.pdf}
\BIBentrySTDinterwordspacing

\bibitem{sahneh2013generalized}
F.~D. Sahneh, C.~Scoglio, and P.~Van~Mieghem, ``Generalized epidemic mean-field
  model for spreading processes over multilayer complex networks,''
  \emph{IEEE/ACM Transactions on Networking (TON)}, vol.~21, no.~5, pp.
  1609--1620, 2013.

\bibitem{schaub2016graph}
M.~T. Schaub, N.~O'Clery, Y.~N. Billeh, J.-C. Delvenne, R.~Lambiotte, and
  M.~Barahona, ``Graph partitions and cluster synchronization in networks of
  oscillators,'' \emph{Chaos: An Interdisciplinary Journal of Nonlinear
  Science}, vol.~26, no.~9, p. 094821, 2016.

\bibitem{schaub2020hierarchical}
M.~T. Schaub and L.~Peel, ``Hierarchical community structure in networks,''
  \emph{arXiv preprint arXiv:2009.07196}, 2020.

\bibitem{schwenk1974computing}
A.~J. Schwenk, ``Computing the characteristic polynomial of a graph,'' in
  \emph{Graphs and Combinatorics}.\hskip 1em plus 0.5em minus 0.4em\relax
  Springer, 1974, pp. 153--172.

\bibitem{simon2011exact}
P.~L. Simon, M.~Taylor, and I.~Z. Kiss, ``Exact epidemic models on graphs using
  graph-automorphism driven lumping,'' \emph{Journal of Mathematical Biology},
  vol.~62, no.~4, pp. 479--508, 2011.

\bibitem{10.1371/journal.pone.0023176}
\BIBentryALTinterwordspacing
J.~Stehlé, N.~Voirin, A.~Barrat, C.~Cattuto, L.~Isella, J.~Pinton,
  M.~Quaggiotto, W.~{Van den Broeck}, C.~Régis, B.~Lina, and P.~Vanhems,
  ``High-resolution measurements of face-to-face contact patterns in a primary
  school,'' \emph{PLOS ONE}, vol.~6, no.~8, p. e23176, 08 2011. [Online].
  Available: \url{http://dx.doi.org/10.1371/journal.pone.0023176}
\BIBentrySTDinterwordspacing

\bibitem{timme2007revealing}
M.~Timme, ``Revealing network connectivity from response dynamics,''
  \emph{Physical Review Letters}, vol.~98, no.~22, p. 224101, 2007.

\bibitem{van2002reproduction}
P.~Van~den Driessche and J.~Watmough, ``Reproduction numbers and sub-threshold
  endemic equilibria for compartmental models of disease transmission,''
  \emph{Mathematical Biosciences}, vol. 180, no. 1-2, pp. 29--48, 2002.

\bibitem{van2014sis}
P.~Van~Mieghem, ``{SIS} epidemics with time-dependent rates describing ageing
  of information spread and mutation of pathogens,'' \emph{Delft {U}niversity
  of {T}echnology}, vol.~1, no.~15, 2014.

\bibitem{mieghem2014homogeneous}
P.~Van~Mieghem and J.~Omic, ``In-homogeneous virus spread in networks,''
  \emph{arXiv preprint arXiv:1306.2588}, 2014.

\bibitem{van2013non}
P.~Van~Mieghem and R.~Van~de Bovenkamp, ``Non-{M}arkovian infection spread
  dramatically alters the susceptible-infected-susceptible epidemic threshold
  in networks,'' \emph{Physical Review Letters}, vol. 110, no.~10, p. 108701,
  2013.

\bibitem{van2010graph}
P.~Van~Mieghem, \emph{Graph spectra for complex networks}.\hskip 1em plus 0.5em
  minus 0.4em\relax Cambridge University Press, 2010.

\bibitem{van2011n}
------, ``The {N-I}ntertwined {SIS} epidemic network model,'' \emph{Computing},
  vol.~93, no. 2-4, pp. 147--169, 2011.

\bibitem{van2014performance}
------, \emph{Performance {A}nalysis of {C}omplex {N}etworks and
  {S}ystems}.\hskip 1em plus 0.5em minus 0.4em\relax Cambridge University
  Press, 2014.

\bibitem{van2009virus}
P.~Van~Mieghem, J.~Omic, and R.~Kooij, ``Virus spread in networks,''
  \emph{IEEE/ACM Transactions on Networking}, vol.~17, no.~1, pp. 1--14, 2009.

\bibitem{ward2019exact}
J.~A. Ward and M.~L{\'o}pez-Garc{\'\i}a, ``Exact analysis of summary statistics
  for continuous-time discrete-state {M}arkov processes on networks using
  graph-automorphism lumping,'' \emph{Applied Network Science}, vol.~4, no.~1,
  p. 108, 2019.

\end{thebibliography}
\end{document}